\documentclass[copyright,creativecommons]{eptcs}

\usepackage[utf8]{inputenc}
\usepackage[T1]{fontenc}
\usepackage{amsmath,amssymb,amsfonts,amsthm}
\usepackage{mathtools}
\usepackage{graphicx}
\usepackage{xcolor}
\usepackage{tikz}
\usepackage{pgfplots}
\usepackage{tikz-cd}
\usepackage{braket}
\usepackage{array}
\usepackage{booktabs}
\usepackage{enumitem}
\usepackage{longtable}
\usepackage{multirow}
\usepackage{nicefrac}
\usepackage{needspace}
\usepackage{stmaryrd}
\usepackage{xspace}
\pdfmapfile{+stmaryrd.map}
\pdfmapfile{+rsfs.map}
\hypersetup{
  colorlinks=true,
  linkcolor=blue,
  citecolor=blue,
  urlcolor=blue,
  pdftitle={Minimality of the Pure Qubit ZX Calculus},
  pdfauthor={Harry K. Stoltz and Renaud Vilmart}
}
\usepackage[nameinlink]{cleveref}

\usetikzlibrary{trees}
\usetikzlibrary{topaths}
\usetikzlibrary{decorations.pathmorphing}
\usetikzlibrary{decorations.markings}
\usetikzlibrary{matrix,backgrounds,folding}
\usetikzlibrary{chains,scopes,positioning,fit}
\usetikzlibrary{arrows,shadows}
\usetikzlibrary{calc}
\usetikzlibrary{shapes,shapes.geometric,shapes.misc}

\pgfdeclarelayer{edgelayer}
\pgfdeclarelayer{nodelayer}
\pgfsetlayers{background,edgelayer,nodelayer,main}
\pgfplotsset{compat=1.18}
\tikzstyle{none}=[inner sep=0mm]
\tikzstyle{every loop}=[]
\tikzstyle{env}=[copoint,regular polygon rotate=0,minimum width=0.2cm, fill=black]

\tikzstyle{hadamard edge}=[-,color=blue,dashed,dash pattern=on 2pt off 0.7pt]
\tikzstyle{black edge}=[-,color=black,dashed,dash pattern=on 2pt off 0.7pt]

\tikzstyle{every picture}=[baseline=-0.25em]
\tikzstyle{dotpic}=[scale=0.5]
\tikzstyle{diredges}=[every to/.style={diredge}]
\tikzstyle{dot graph}=[shorten <=-0.1mm,shorten >=-0.1mm,scale=0.6]
\tikzstyle{plot point}=[circle,fill=black,minimum width=2mm,inner sep=0]

\tikzstyle{braceedge}=[decorate,decoration={brace,amplitude=2mm,raise=-1mm}]
\tikzstyle{small braceedge}=[decorate,decoration={brace,amplitude=1mm,raise=-1mm}]
\tikzstyle{left hook arrow}=[left hook-latex]
\tikzstyle{right hook arrow}=[right hook-latex]

\tikzstyle{black dot}=[inner sep=0.7mm,minimum width=0pt,minimum height=0pt,fill=black,draw=black,shape=circle]

\tikzstyle{dot}=[black dot]
\tikzstyle{smalldot}=[inner sep=0.4mm,minimum width=0pt,minimum height=0pt,fill=black,draw=black,shape=circle]
\tikzstyle{white dot}=[dot,fill=white]
\tikzstyle{antipode}=[white dot,inner sep=0.3mm,font=\footnotesize]
\tikzstyle{smallwhitedot}=[smalldot,fill=white]
\tikzstyle{alt white dot}=[white dot,label={[xshift=3.07mm,yshift=-0.05mm,font=\footnotesize]left:$*$}]
\tikzstyle{gray dot}=[dot,fill=gray!40!white]
\tikzstyle{smallgraydot}=[smalldot,fill=gray!40!white]
\tikzstyle{box vertex}=[draw=black,rectangle]
\tikzstyle{small box}=[box vertex,fill=white]
\tikzstyle{whitebg}=[fill=white,inner sep=2pt]
\tikzstyle{graph state vertex}=[sg vertex,fill=black]

\tikzstyle{wide copoint}=[fill=white,draw=black,shape=isosceles triangle,shape border rotate=90,isosceles triangle stretches=true,inner sep=1pt,minimum width=1.5cm,minimum height=5mm]
\tikzstyle{wide point}=[fill=white,draw=black,shape=isosceles triangle,shape border rotate=-90,isosceles triangle stretches=true,inner sep=1pt,minimum width=1.5cm,minimum height=4mm]
\tikzstyle{very wide copoint}=[fill=white,draw=black,shape=isosceles triangle,shape border rotate=-90,isosceles triangle stretches=true,inner sep=1pt,minimum width=2.5cm,minimum height=4mm]
\tikzstyle{very wide empty copoint}=[draw=black,shape=isosceles triangle,shape border rotate=-90,isosceles triangle stretches=true,inner sep=1pt,minimum width=2.5cm,minimum height=4mm]
\tikzstyle{symm}=[ultra thick,shorten <=-1mm,shorten >=-1mm]

\tikzstyle{square box}=[rectangle,fill=white,draw=black,minimum height=5mm,minimum width=5mm,font=\small]
\tikzstyle{square gray box}=[rectangle,fill=gray!30,draw=black,minimum height=6mm,minimum width=6mm]
\tikzstyle{wide medium box}=[shape=rectangle,text height=1.5ex,text depth=0.25ex,yshift=0.5mm,fill=white,draw=black,minimum height=10mm,yshift=-0.5mm,minimum width=7.5mm,font={\small}]
\tikzstyle{pointer edge}=[->,very thick,gray]
\tikzstyle{copoint}=[regular polygon,regular polygon sides=3,draw=black,scale=0.75,inner sep=-0.5pt,minimum width=7mm,fill=white]
\tikzstyle{point}=[regular polygon,regular polygon sides=3,draw=black,scale=0.75,inner sep=-0.5pt,minimum width=7mm,fill=white,regular polygon rotate=180]
\tikzstyle{gray point}=[point,fill=gray!40!white]
\tikzstyle{gray copoint}=[copoint,fill=gray!40!white]

\newcommand{\edgearrow}{{\arrow[black]{>}}}
\newcommand{\edgetick}{{\arrow[black,scale=0.7,very thick]{|}}}

\tikzstyle{diredge}=[->]
\tikzstyle{rdiredge}=[<-]
\tikzstyle{medium diredge}=[->]

\tikzstyle{short diredge}=[->]
\tikzstyle{halfedge}=[-)]
\tikzstyle{other halfedge}=[(-]
\tikzstyle{freeedge}=[(-)]
\tikzstyle{white edge}=[line width=5pt,white]
\tikzstyle{tick}=[postaction=decorate,decoration={markings, mark=at position 0.5 with \edgetick}]
\tikzstyle{small map edge}=[|-latex, gray!60!blue, shorten <=0.9mm, shorten >=0.5mm]
\tikzstyle{thick dashed edge}=[very thick,dashed,gray!40]
\tikzstyle{map edge}=[|-latex,very thick, gray!40, shorten <=1mm, shorten >=0.5mm]
\tikzstyle{tickedge}=[postaction=decorate,
  decoration={markings, mark=at position 0.5 with \edgetick}]
\tikzstyle{dirtickedge}=[postaction=decorate,
  decoration={markings, mark=at position 0.5 with \edgetick},
  decoration={markings, mark=at position 0.85 with \edgearrow}]
\tikzstyle{dirdoubletickedge}=[postaction=decorate,
  decoration={markings, mark=at position 0.4 with \edgetick},
  decoration={markings, mark=at position 0.6 with \edgetick},
  decoration={markings, mark=at position 0.85 with \edgearrow}]

\makeatletter
\newcommand{\boxshape}[3]{%
\pgfdeclareshape{#1}{
\inheritsavedanchors[from=rectangle] 
\inheritanchorborder[from=rectangle]
\inheritanchor[from=rectangle]{center}
\inheritanchor[from=rectangle]{north}
\inheritanchor[from=rectangle]{south}
\inheritanchor[from=rectangle]{west}
\inheritanchor[from=rectangle]{east}
\backgroundpath{
\southwest \pgf@xa=\pgf@x \pgf@ya=\pgf@y
\northeast \pgf@xb=\pgf@x \pgf@yb=\pgf@y

\@tempdima=#2
\@tempdimb=#3

\pgfpathmoveto{\pgfpoint{\pgf@xa - 5pt + \@tempdima}{\pgf@ya}}
\pgfpathlineto{\pgfpoint{\pgf@xa - 5pt - \@tempdima}{\pgf@yb}}
\pgfpathlineto{\pgfpoint{\pgf@xb + 5pt + \@tempdimb}{\pgf@yb}}
\pgfpathlineto{\pgfpoint{\pgf@xb + 5pt - \@tempdimb}{\pgf@ya}}
\pgfpathlineto{\pgfpoint{\pgf@xa - 5pt + \@tempdima}{\pgf@ya}}
\pgfpathclose
}
}}

\boxshape{NEbox}{0pt}{8pt}
\boxshape{SEbox}{0pt}{-8pt}
\boxshape{NWbox}{8pt}{0pt}
\boxshape{SWbox}{-8pt}{0pt}
\makeatother

\tikzstyle{map}=[draw,shape=NEbox,inner sep=7pt]
\tikzstyle{mapdag}=[draw,shape=SEbox,inner sep=7pt]
\tikzstyle{maptrans}=[draw,shape=SWbox,inner sep=7pt]
\tikzstyle{mapconj}=[draw,shape=NWbox,inner sep=7pt]

\tikzstyle{probs}=[shape=semicircle,fill=gray!40!white,draw=black,shape border rotate=180,minimum width=1.2cm]

\tikzstyle{arrs}=[-latex,font=\small,auto]
\tikzstyle{arrow plain}=[arrs]
\tikzstyle{arrow dashed}=[dashed,arrs]
\tikzstyle{arrow bold}=[very thick,arrs]
\tikzstyle{arrow hide}=[draw=white!0,-]
\tikzstyle{arrow reverse}=[latex-]
\tikzstyle{cdnode}=[]

\tikzstyle{gn}=[dot,fill=lime!50,minimum width=0.2cm,inner sep=0.5pt,font=\footnotesize]
\tikzstyle{rn}=[dot,fill=red!50,inner sep=0.5pt,minimum width=0.2cm,font=\footnotesize]
\tikzstyle{bn}=[dot,fill=blue,minimum width=0.3cm]

\tikzstyle{rc}=[dot,thick,fill=white,draw = red,minimum width=0.2cm,inner sep=0.5pt,font=\footnotesize]
\tikzstyle{gc}=[dot,thick,fill=white,draw= lime,inner sep=0.5pt,minimum width=0.2cm,font=\footnotesize]
\tikzstyle{bc}=[dot,thick,fill=white,draw= blue,minimum width=0.3cm]

\tikzstyle{label}=[circle,fill=white,minimum width=0.3cm]

\tikzstyle{H box}=[rectangle,fill=yellow,draw=black,xscale=1,yscale=1,font=\small,inner sep=0.75pt,minimum width=0.15cm,minimum height=0.15cm]

\tikzstyle{clocklabel}=[dot,fill=yellow,draw=black,font=\tiny,inner sep=0.75pt]

\tikzstyle{rsn}=[circle split,draw,fill=red,font=\tiny,inner sep=0.75pt]
\tikzstyle{gsn}=[circle split,draw,fill=lime,font=\tiny,inner sep=0.75pt]
\tikzstyle{bsn}=[circle split,draw,fill=blue,font=\tiny,inner sep=0.75pt]

\tikzstyle{rsc}=[circle split,thick,draw= red,draw,fill=white,font=\tiny,inner sep=0.75pt]
\tikzstyle{gsc}=[circle split,thick,draw= lime,draw,fill=white,font=\tiny,inner sep=0.75pt]
\tikzstyle{bsc}=[circle split,thick,draw= blue,draw,fill=white,font=\tiny,inner sep=0.75pt]

\tikzstyle{cnot}=[fill=white,shape=circle,inner sep=-1.4pt]
\tikzstyle{wire label}=[font=\tiny, auto]

\tikzset{
	gn/.append style={fill={rgb,255:red,216;green,248;blue,216}},
	rn/.append style={fill={rgb,255:red,232;green,165;blue,165}}
}

\newcommand{\vilmartboldfig}[1]{\begingroup\bfseries\let\textnormal\textbf\input{Vilmart/Arxiv/arXiv-1812.09114v1/figures/#1.tikz}\endgroup}
\newcommand{\vilmartlabeledfig}[2]{%
	\begin{tikzpicture}[baseline=(fig.base)]
		\node [inner sep=0pt] (fig) at (0,0) {\input{Vilmart/Arxiv/arXiv-1812.09114v1/figures/#1.tikz}};
		\node [style=none] at ([xshift=0.25cm,yshift=-0.32cm]fig.center) {#2};
	\end{tikzpicture}%
}

\newcommand{\typedfig}[1]{\IfFileExists{latex_typed/b_rule_note/figures/#1.tikz}{\input{latex_typed/b_rule_note/figures/#1.tikz}}{\input{latex_typed_b/b_rule_note/figures/#1.tikz}}}
\tikzstyle{tikzfig}=[baseline=-0.25em,scale=0.5]

\newcommand{\widetikzitinput}[1]{%
	\begin{center}
	\makebox[\linewidth][c]{%
	\begingroup
	\scalebox{0.8}{{\tikzstyle{every picture}=[tikzfig]\input{#1}}}%
	\endgroup
}%
	\end{center}
}
\newcommand{\hkstikzitstyles}{%
	\tikzstyle{Z dot}=[inner sep=0mm,minimum size=2mm,shape=circle,draw=black,fill={rgb,255:red,216;green,248;blue,216}]%
	\tikzstyle{X dot}=[Z dot,fill={rgb,255:red,232;green,165;blue,165}]%
	\tikzstyle{Z phase dot}=[minimum size=5mm,font={\footnotesize\boldmath},shape=rectangle,rounded corners=2mm,inner sep=0.2mm,outer sep=-2mm,scale=0.8,draw=black,fill={rgb,255:red,216;green,248;blue,216}]%
	\tikzstyle{X phase dot}=[Z phase dot,fill={rgb,255:red,232;green,165;blue,165},font={\footnotesize\boldmath}]%
	\tikzstyle{hadamard}=[fill=yellow,draw=black,shape=rectangle,inner sep=0.6mm,minimum height=1.5mm,minimum width=1.5mm]%
	\tikzstyle{empty diagram}=[draw={gray!40!white},dashed,shape=rectangle,minimum width=1cm,minimum height=1cm]%
}
\newcommand{\hkstikzitinput}[1]{%
	\begingroup
	\hkstikzitstyles
	\begingroup
	\scalebox{0.8}{{\tikzstyle{every picture}=[tikzfig]\input{#1}}}%
	\endgroup
	\endgroup
}
\newcommand{\primeirtikzinput}[1]{%
	\begingroup
	\def\dref##1##2{\hyperref[eq:prime-ir-##1]{\textcolor{blue}{\ensuremath{(##2)}}}}%
	\def\dlink##1##2{%
		\ifnum\pdfstrcmp{##1}{Fig3H0}=0
			\hyperref[eq:ir-derivation-L1]{\textcolor{blue}{\ensuremath{(##2)}}}%
		\else
			\hyperref[eq:ir-derivation-L2]{\textcolor{blue}{\ensuremath{(##2)}}}%
		\fi
	}%
	\hkstikzitinput{HKS/figures/ir-derivation/zxoptprime-73/#1}%
	\endgroup
}
\newcommand{\widehkstikzitinput}[1]{%
	\begin{center}
	\makebox[\linewidth][c]{\hkstikzitinput{#1}}%
	\end{center}
}

\newcommand{\lemmalink}[2]{\hyperref[eq:ir-derivation-L#1]{(\ref*{eq:ir-derivation-L#1})}}
\newcommand{\rulelink}[2]{#2}

\newcommand{\tombstone}{\par\nobreak\hfill\ensuremath{\blacksquare}}
\newenvironment{irproof}
	{\begingroup\begin{proof}[\normalfont\bfseries Proof]}
	{\end{proof}\endgroup}
\newcommand{\zxfig}[1]{\vcenter{\hbox{\scalebox{1}{\input{latex_typed/b_rule_note/figures/#1.tikz}}}}}
\newcommand{\zxbig}[1]{\vcenter{\hbox{\scalebox{1}{\input{latex_typed/b_rule_note/figures/#1.tikz}}}}}
\newcommand{\zxsmall}[1]{\vcenter{\hbox{\scalebox{0.85}{\input{latex_typed/b_rule_note/figures/#1.tikz}}}}}

\newcommand{\zxopt}{\ensuremath{\textup{ZX}_{\textsf{opt}}}\xspace}
\newcommand{\zxv}{\ensuremath{\textup{ZX}_{\textsf{V}}}\xspace}
\newcommand{\zxvprime}{\ensuremath{\textup{ZX}_{\textsf{V}}'}\xspace}
\newcommand{\zxoptprime}{\ensuremath{\textup{ZX}_{\textsf{opt}}'}\xspace}
\newcommand{\intf}[1]{\ensuremath{\left\llbracket #1 \right\rrbracket}}
\newcommand{\sem}[1]{\left\llbracket #1 \right\rrbracket}
\newcommand{\bool}{\mathbb{B}}

\newcommand{\ocross}{\otimes}

\newcommand{\sOneTableDiagram}{\vilmartlabeledfig{spider-1}{(S)}}

\newcommand{\sLhsDiagram}{\vcenter{\hbox{%
\begin{tikzpicture}[font={\footnotesize}]
	\begin{pgfonlayer}{nodelayer}
		\node [style=none] (0) at (-1, -0) {\rotatebox[origin=c]{63.43}{$~\cdots~$}};
		\node [style=gn, minimum width=0.5 cm] (3) at (-0.5, -0.25) {$\beta$};
		\node [style=none] (5) at (-0.75, -0.75) {};
		\node [style=none] (6) at (-0.25, -0.75) {};
		\node [style=none] (8) at (-1.75, -0.5) {};
		\node [style=none] (9) at (-1.25, -0.5) {};
		\node [style=none] (12) at (-1.75, 0.75) {};
		\node [style=none] (13) at (-0.25, 0.5) {};
		\node [style=none] (14) at (-1.25, 0.75) {};
		\node [style=gn, minimum width=0.5 cm] (16) at (-1.5, 0.25) {$\alpha$};
		\node [style=none] (17) at (-0.75, 0.5) {};
		\node [style=none, yshift=-3.5pt] (18) at (-1.5, 0.75) {$~\cdots~$};
		\node [style=none, yshift=2.5pt] (20) at (-0.5, -0.75) {$~\cdots~$};
		\node [style=none, yshift=2.5pt] (21) at (-1.5, -0.5) {$~\cdots~$};
		\node [style=none, yshift=-3.5pt] (22) at (-0.5, 0.5) {$~\cdots~$};
	\end{pgfonlayer}
	\begin{pgfonlayer}{edgelayer}
		\draw (3) to (13.center);
		\draw (3) to (5.center);
		\draw (3) to (6.center);
		\draw (16) to (8.center);
		\draw (16) to (9.center);
		\draw [bend right, looseness=1.00] (16) to (3);
		\draw [bend left, looseness=1.00] (16) to (3);
		\draw (16) to (12.center);
		\draw (16) to (14.center);
		\draw (3) to (17.center);
	\end{pgfonlayer}
\end{tikzpicture}}}}
\newcommand{\sRhsDiagram}{\vcenter{\hbox{%
\begin{tikzpicture}[font={\footnotesize}]
	\begin{pgfonlayer}{nodelayer}
		\node [style=gn] (z) at (0, 0) {$\alpha+\beta$};
		\node [style=none] (t1) at (-0.5, 0.75) {};
		\node [style=none] (t2) at (0.5, 0.75) {};
		\node [style=none] (td) at (0, 0.5) {$~\cdots~$};
		\node [style=none] (b1) at (-0.5, -0.75) {};
		\node [style=none] (b2) at (0.5, -0.75) {};
		\node [style=none] (bd) at (0, -0.75) {$~\cdots~$};
	\end{pgfonlayer}
	\begin{pgfonlayer}{edgelayer}
		\draw (t1.center) to (z);
		\draw (t2.center) to (z);
		\draw (z) to (b1.center);
		\draw (z) to (b2.center);
	\end{pgfonlayer}
\end{tikzpicture}}}}

\newcommand{\hAlphaRhsDiagram}{\vcenter{\hbox{%
\begin{tikzpicture}
	\begin{pgfonlayer}{nodelayer}
		\node [style=gn] (z) at (0, 0) {\footnotesize$~\alpha~$};
		\node [style=none] (t1) at (-0.5, 0.75) {};
		\node [style=none] (t2) at (0.5, 0.75) {};
		\node [style=none] (b1) at (-0.5, -0.75) {};
		\node [style=none] (b2) at (0.5, -0.75) {};
		\node [style=none] (td) at (0, 0.75) {$\cdots$};
		\node [style=none] (bd) at (0, -0.75) {$\cdots$};
	\end{pgfonlayer}
	\begin{pgfonlayer}{edgelayer}
		\draw[bend left=23] (z) to (t1.center);
		\draw[bend right=23] (z) to (t2.center);
		\draw[bend right=23] (z) to (b1.center);
		\draw[bend left=23] (z) to (b2.center);
	\end{pgfonlayer}
\end{tikzpicture}}}}

\newcommand{\euRhsNoScalarsDiagram}{\vcenter{\hbox{\scalebox{0.85}{%
\begin{tikzpicture}
	\begin{pgfonlayer}{nodelayer}
		\node [style=rn] (0) at (0, 0.55) {$\beta_1$};
		\node [style=gn] (1) at (0, 0) {$\beta_2$};
		\node [style=rn] (2) at (0, -0.55) {$\beta_3$};
		\node [style=none] (3) at (0, 1) {};
		\node [style=none] (4) at (0, -1) {};
	\end{pgfonlayer}
	\begin{pgfonlayer}{edgelayer}
		\draw [style=none] (3.center) to (4.center);
	\end{pgfonlayer}
\end{tikzpicture}}}}}

\newtheorem{theorem}{Theorem}[section]
\newtheorem{lemma}[theorem]{Lemma}

\theoremstyle{definition}
\newtheorem{definition}[theorem]{Definition}

\theoremstyle{remark}

\title{Minimality of the Pure Qubit ZX Calculus}
\date{}
\author{
  Harry K. Stoltz$\null^{1}$ \and
  Renaud Vilmart$\null^{2}$ \and
  \institute{$\null^{1}$Courant Institute of Mathematical Sciences, New York University}
  \institute{$\null^{2}$Universit\'e de Lorraine, CNRS, Inria, LORIA, F 54000 Nancy, France}
}
\def\titlerunning{Minimality of the Pure Qubit ZX Calculus}
\def\authorrunning{H.K. Stoltz \& R. Vilmart}
\newcommand{\publicationstatus}{}

\begin{document}

\maketitle

\begin{abstract}
The ZX calculus is a graphical language for reasoning about quantum processes. In this paper, we develop a minimal pure-qubit ZX calculus based on the work of Vilmart~\cite{8785765}, Backens, Perdrix, and Wang~\cite{Backens_2020}, and Stoltz~\cite{stoltz2026simplified}. This resolves a problem that has remained open for nearly a decade, since completeness was first proved. Specifically, we show that \((I_r)\) is derivable and establish the necessity of \((B)\) and \((I_g)\), yielding two complete and minimal rulesets.
\end{abstract}

\section{Introduction}
\label{sec:introduction}

Introduced by Bob Coecke and Ross Duncan in 2008~\cite{CoeckeDuncan2008,CoeckeDuncan2011}, the ZX calculus is a graphical language for reasoning about quantum processes. It has found applications in quantum-circuit optimisation and extraction, measurement-based quantum computation, and surface-code lattice surgery~\cite{DuncanEtAl2020,DuncanPerdrix2010,deBeaudrapHorsman2020}. The first completeness results for the full pure-qubit ZX calculus appeared in 2017 and 2018~\cite{NgWang2017Universal,HadzihasanovicNgWang2018,JeandelPerdrixVilmart2018Beyond}.

In 2019, Vilmart gave two near-optimal axiomatisations, reducing the fifteen displayed equations of the previous presentation to complete rulesets with nine and eight named rules, respectively, and revealing their fundamental properties~\cite{8785765}. In 2020, Miriam Backens, Simon Perdrix, and Quanlong Wang crafted another near-minimal ruleset for the stabilizer ZX calculus~\cite{Backens_2020}. Their presentation consists of nine explicit rewrite rules, together with the meta-rule ``only connectivity matters,'' and was refined over a series of papers beginning with their earlier work~\cite{Backens_2017}. Interestingly, both rulesets (stabilizer and pure-qubit) remained unresolved at roughly the same points: proving that the bialgebra law is necessary, and the status of the red and green compact structures.

As for the bialgebra, this is surprising since it is one of the cornerstones of the ZX calculus, since it encodes a property known as \emph{strong complementarity}~\cite{CoeckeDuncan2011}. As for the red and green compact-structure coincidence, it is not obvious whether inclusion of one of them implies the other in the ruleset, but it is clear that at least one of them must be necessary.

The stabilizer case was finally solved in 2026~\cite{stoltz2026simplified}. As it turned out, the red and green compact-structure rules were both necessary in the stabilizer calculus, by traditional arguments. However, the proof of the necessity of the bialgebra law required passing through a relatively nonstandard PROP built from modules over a finite ring with nilpotent elements. This construction allowed for all the other axioms to hold, but broke the strongly complementary nature of the red and green spiders, hence proving that the ruleset posited in~\cite{Backens_2020} was indeed minimal.

In this paper, we resolve the analogous problem for the pure-qubit ZX calculus. In this fragment, phases are less restricted, and the troublesome \((EU)\) rule allows for a wide family of configurations. It is not obvious that certain rules truly can be separated from each other. Indeed, unlike the stabilizer case, it is actually possible to derive the red identity rule from the green one! Predictably, the bialgebra law is indeed necessary in this ruleset, and the proof relies on an elegant interpretation involving nilpotent elements which break the bialgebra interaction while preserving all other rules in the ruleset. This establishes the first minimal ruleset for the pure-qubit ZX calculus. Furthermore, we establish the analogous results in a second ruleset, which combines the Hadamard decomposition and rotation rules into one, and which utilizes a different scalar cleanup rule. These results are the first of their kind for the pure-qubit ZX calculus, and resolve a nearly decade-long open problem.

\section{Background}
\label{sec:background}

In the full pure-qubit ZX calculus, a diagram \(D\) with \(k\) inputs and \(l\) outputs  \(D:k\to l\) is composed of the following generators:

\begingroup
\footnotesize
\setlength{\tabcolsep}{4pt}
\renewcommand{\arraystretch}{1.35}
\newcommand{\generatorBraidingHeight}{\vphantom{\vcenter{\hbox{\begin{tikzpicture}
	\begin{pgfonlayer}{nodelayer}
		\node [style=none] (0) at (0, 0) {};
		\node [style=none] (1) at (0.3, 0.3) {};
		\node [style=none] (2) at (0.3, -0.3) {};
		\node [style=none] (a) at (-0.3, 0.3) {};
		\node [style=none] (b) at (-0.3, -0.3) {};
		\node [style=none] (3) at (0.3, -0.5) {};
		\node [style=none] (4) at (0.3, 0.5) {};
	\end{pgfonlayer}
	\begin{pgfonlayer}{edgelayer}
		\draw[bend left=35] (1.center) to (0.center);
		\draw[bend right=35] (2.center) to (0.center);
		\draw[bend right=35] (a.center) to (0.center);
		\draw[bend left=35] (b.center) to (0.center);
	\end{pgfonlayer}
\end{tikzpicture}}}}}
\begin{center}
\begin{tabular}{|>{\centering\arraybackslash}p{0.23\textwidth}|>{\centering\arraybackslash}p{0.33\textwidth}|>{\centering\arraybackslash}p{0.34\textwidth}|}
\hline
Name & Diagram & Typed Version\\
\hline
Green Spider & \(\vcenter{\hbox{\begin{tikzpicture}\footnotesize
	\begin{pgfonlayer}{nodelayer}
		\node [style=none] (0) at (-0.5, 0.5) {};
		\node [style=none] (1) at (-1, 0.5) {};
		\node [style=none] (2) at (0, 0.45) {...};
		\node [style=none] (3) at (0, -0.45) {...};
		\node [style=none] (4) at (0.5, -0.5) {};
		\node [style=none] (5) at (0.75, 0.75) {};
		\node [style=none] (6) at (0.75, -0.75) {};
		\node [style=gn] (7) at (-0.25, -0) {$~\alpha~$};
		\node [style=none] (8) at (-1.25, 0.75) {};
		\node [style=none] (9) at (-1.25, -0.75) {};
		\node [style=none] (10) at (-1, -0.5) {};
		\node [style=none] (11) at (0.5, 0.5) {};
		\node [style=none] (12) at (-0.5, -0.5) {};
	\end{pgfonlayer}
	\begin{pgfonlayer}{edgelayer}
		\draw [style=braceedge] (6.center) to node[wire label, inner sep={5 pt}]{$m$} (9.center);
		\draw [style=none, bend left=15, looseness=1.00] (10.center) to (7);
		\draw [style=none, bend left=15, looseness=1.00] (12.center) to (7);
		\draw [style=braceedge] (8.center) to node[wire label, inner sep={5 pt}]{$n$} (5.center);
		\draw [style=none, bend left=15, looseness=1.00] (7) to (1.center);
		\draw [style=none, bend right=15, looseness=1.00] (4.center) to (7);
		\draw [style=none, bend right=15, looseness=1.00] (7) to (11.center);
		\draw [style=none, bend left=15, looseness=1.00] (7) to (0.center);
	\end{pgfonlayer}
\end{tikzpicture}}}\) & \(Z^\alpha_{n,m}:n\to m\)\\[0.5em]
\hline
Red Spider & \(\vcenter{\hbox{\begin{tikzpicture}\footnotesize
	\begin{pgfonlayer}{nodelayer}
		\node [style=none] (0) at (1.25, -0.5) {};
		\node [style=none] (1) at (2.25, 0.5) {};
		\node [style=none] (2) at (0.5, -0.75) {};
		\node [style=none] (3) at (0.75, 0.5) {};
		\node [style=none] (4) at (0.5, 0.75) {};
		\node [style=rn] (5) at (1.5, 0) {$~\alpha~$};
		\node [style=none] (6) at (2.5, -0.75) {};
		\node [style=none] (7) at (1.75, -0.45) {...};
		\node [style=none] (8) at (0.75, -0.5) {};
		\node [style=none] (9) at (2.25, -0.5) {};
		\node [style=none] (10) at (1.25, 0.5) {};
		\node [style=none] (11) at (1.75, 0.45) {...};
		\node [style=none] (12) at (2.5, 0.75) {};
	\end{pgfonlayer}
	\begin{pgfonlayer}{edgelayer}
		\draw [style=braceedge] (6.center) to node[wire label, inner sep={5 pt}]{$m$} (2.center);
		\draw [style=none, bend left=15, looseness=1.00] (8.center) to (5);
		\draw [style=none, bend left=15, looseness=1.00] (0.center) to (5);
		\draw [style=braceedge] (4.center) to node[wire label, inner sep={5 pt}]{$n$} (12.center);
		\draw [style=none, bend left=15, looseness=1.00] (5) to (3.center);
		\draw [style=none, bend right=15, looseness=1.00] (9.center) to (5);
		\draw [style=none, bend right=15, looseness=1.00] (5) to (1.center);
		\draw [style=none, bend left=15, looseness=1.00] (5) to (10.center);
	\end{pgfonlayer}
\end{tikzpicture}}}\) & \(X^\alpha_{n,m}:n\to m\)\\[0.5em]
\hline
Hadamard & \(\generatorBraidingHeight\vcenter{\hbox{\begin{tikzpicture}
	\begin{pgfonlayer}{nodelayer}
		\node [style={H box}] (0) at (0.5, 0) {};
		\node [style=none] (1) at (0.5, 0.3) {};
		\node [style=none] (2) at (0.5, -0.3) {};
				\node [style=none] (a) at (0.5, 0.5) {};
		\node [style=none] (b) at (0.5, -0.5) {};
	\end{pgfonlayer}
	\begin{pgfonlayer}{edgelayer}
		\draw (1.center) to (0);
		\draw (2.center) to (0);
	\end{pgfonlayer}
\end{tikzpicture}}}\) & \(H:1\to 1\)\\[0.5em]
\hline
Identity & \(\generatorBraidingHeight\vcenter{\hbox{\begin{tikzpicture}
	\begin{pgfonlayer}{nodelayer}
		\node [style=none] (1) at (0.5, 0.3) {};
		\node [style=none] (2) at (0.5, -0.3) {};
	\end{pgfonlayer}
	\begin{pgfonlayer}{edgelayer}
		\draw (1.center) to (2.center);
	\end{pgfonlayer}
\end{tikzpicture}}}\) & \(1_1:1\to 1\)\\[0.5em]
\hline
Braiding & \(\generatorBraidingHeight\vcenter{\hbox{\begin{tikzpicture}
	\begin{pgfonlayer}{nodelayer}
		\node [style=none] (0) at (0, 0) {};
		\node [style=none] (1) at (0.3, 0.3) {};
		\node [style=none] (2) at (0.3, -0.3) {};
		\node [style=none] (a) at (-0.3, 0.3) {};
		\node [style=none] (b) at (-0.3, -0.3) {};
		\node [style=none] (3) at (0.3, -0.5) {};
		\node [style=none] (4) at (0.3, 0.5) {};
	\end{pgfonlayer}
	\begin{pgfonlayer}{edgelayer}
		\draw[bend left=35] (1.center) to (0.center);
		\draw[bend right=35] (2.center) to (0.center);
		\draw[bend right=35] (a.center) to (0.center);
		\draw[bend left=35] (b.center) to (0.center);
	\end{pgfonlayer}
\end{tikzpicture}}}\) & \(\sigma:2\to 2\)\\[0.5em]
\hline
Cup & \(\generatorBraidingHeight\vcenter{\hbox{\begin{tikzpicture}
	\begin{pgfonlayer}{nodelayer}
		\node [style=none] (0) at (0, 0) {};
		\node [style=none] (1) at (0.3, 0.3) {};
		\node [style=none] (a) at (-0.3, 0.3) {};
	\end{pgfonlayer}
	\begin{pgfonlayer}{edgelayer}
		\draw[bend left=45] (1.center) to (0.center);
		\draw[bend right=45] (a.center) to (0.center);
	\end{pgfonlayer}
\end{tikzpicture}}}\) & \(\eta:0\to 2\)\\[0.5em]
\hline
Cap & \(\generatorBraidingHeight\vcenter{\hbox{\begin{tikzpicture}
	\begin{pgfonlayer}{nodelayer}
		\node [style=none] (0) at (0, 0) {};
		\node [style=none] (1) at (-0.3, -0.3) {};
		\node [style=none] (a) at (0.3, -0.3) {};
		\node [style=none] (3) at (0.2, -0.5) {};
		\node [style=none] (4) at (0.2, 0.5) {};
	\end{pgfonlayer}
	\begin{pgfonlayer}{edgelayer}
		\draw[bend left=45] (1.center) to (0.center);
		\draw[bend right=45] (a.center) to (0.center);
	\end{pgfonlayer}
\end{tikzpicture}}}\) & \(\epsilon:2\to 0\)\\[0.5em]
\hline
Unit (empty diagram) & \(\generatorBraidingHeight\vcenter{\hbox{
\begin{tikzpicture}
	\begin{pgfonlayer}{nodelayer}
		\node [style=none] (4) at (0.25, 0.25) {};
		\node [style=none] (5) at (-0.25, 0.25) {};
		\node [style=none] (6) at (0.25, -0.25) {};
		\node [style=none] (7) at (-0.25, -0.25) {};
	\end{pgfonlayer}
	\begin{pgfonlayer}{edgelayer}
		\draw [dashed, color=gray] (5.center) to (7.center);
		\draw [dashed, color=gray] (7.center) to (6.center);
		\draw [dashed, color=gray] (6.center) to (4.center);
		\draw [dashed, color=gray] (4.center) to (5.center);
	\end{pgfonlayer}
\end{tikzpicture}}}\) & \(e:0\to 0\)\\[0.5em]
\hline
\end{tabular}

\vspace{0.5em}

\hypertarget{tab:zx-generators}{\textbf{Figure 1.}} Here \(\alpha\in\mathbb{R}\), \(n,m\in\mathbb{N}\), and \(e\) is the empty diagram.
\end{center}
\endgroup

We can compose diagrams in two ways:

\begin{itemize}
\begin{samepage}
\item \textbf{Sequentially}: If \(D_1:a \to b\) and \(D_2:b \to c\) are two diagrams, we can compose \(D_2 \circ D_1 : a \to c\) by placing \(D_1\) above \(D_2\).  We can think of this new process as performing the process \(D_1\) first, and then applying the process \(D_2\) next.

\widetikzitinput{"resources/figures/sequential-composition.tikz"}
\end{samepage}

\begin{samepage}
\item \textbf{Spatially}: If \(D_1:a\to b\) and \(D_2:c \to d\) are two diagrams, we can compose \(D_1 \ocross D_2: a+c \to b+d\) by placing \(D_1\) to the left of \(D_2\).  We can think of this new process as performing the processes \(D_1\) and \(D_2\) side-by-side.

\widetikzitinput{"resources/figures/spatial-composition.tikz"}
\end{samepage}
\end{itemize}

These operations satisfy the coherence conditions for a strict monoidal category.

Importantly, we can define the \textbf{standard interpretation functor}:

For any ZX diagram \(D:n\to m\), \(\intf{D}:(\mathbb C^2)^{\otimes n}\to(\mathbb C^2)^{\otimes m}\) is defined inductively in the following way:

\[
\intf{D_1\otimes D_2}
=\intf{D_1}\otimes\intf{D_2},
\quad
\intf{D_2\circ D_1}
=\intf{D_2}\circ\intf{D_1}.
\]

\[
\begin{array}{rcl@{\qquad\qquad}rcl}
\intf{\vcenter{\hbox{}}}
&=&
\begin{pmatrix}
1 & 0\\
0 & 1
\end{pmatrix}
&
\intf{\vcenter{\hbox{\begin{tikzpicture}
	\begin{pgfonlayer}{nodelayer}
		\node [style=none] (0) at (0, 0) {};
		\node [style=none] (1) at (0.3, 0.3) {};
		\node [style=none] (2) at (0.3, -0.3) {};
		\node [style=none] (a) at (-0.3, 0.3) {};
		\node [style=none] (b) at (-0.3, -0.3) {};
		\node [style=none] (3) at (0.3, -0.5) {};
		\node [style=none] (4) at (0.3, 0.5) {};
	\end{pgfonlayer}
	\begin{pgfonlayer}{edgelayer}
		\draw[bend left=35] (1.center) to (0.center);
		\draw[bend right=35] (2.center) to (0.center);
		\draw[bend right=35] (a.center) to (0.center);
		\draw[bend left=35] (b.center) to (0.center);
	\end{pgfonlayer}
\end{tikzpicture}}}}
&=&
\begin{pmatrix}
1 & 0 & 0 & 0\\
0 & 0 & 1 & 0\\
0 & 1 & 0 & 0\\
0 & 0 & 0 & 1
\end{pmatrix}
\\[2em]
\intf{\vphantom{\vcenter{\hbox{\begin{tikzpicture}
	\begin{pgfonlayer}{nodelayer}
		\node [style=none] (0) at (0, 0) {};
		\node [style=none] (1) at (-0.3, -0.3) {};
		\node [style=none] (a) at (0.3, -0.3) {};
		\node [style=none] (3) at (0.2, -0.5) {};
		\node [style=none] (4) at (0.2, 0.5) {};
	\end{pgfonlayer}
	\begin{pgfonlayer}{edgelayer}
		\draw[bend left=45] (1.center) to (0.center);
		\draw[bend right=45] (a.center) to (0.center);
	\end{pgfonlayer}
\end{tikzpicture}}}}\vcenter{\hbox{\begin{tikzpicture}
	\begin{pgfonlayer}{nodelayer}
		\node [style=none] (0) at (0, 0) {};
		\node [style=none] (1) at (0.3, 0.3) {};
		\node [style=none] (a) at (-0.3, 0.3) {};
	\end{pgfonlayer}
	\begin{pgfonlayer}{edgelayer}
		\draw[bend left=45] (1.center) to (0.center);
		\draw[bend right=45] (a.center) to (0.center);
	\end{pgfonlayer}
\end{tikzpicture}}}}
&=&
\ket{00}+\ket{11}
&
\intf{\vcenter{\hbox{\begin{tikzpicture}
	\begin{pgfonlayer}{nodelayer}
		\node [style=none] (0) at (0, 0) {};
		\node [style=none] (1) at (-0.3, -0.3) {};
		\node [style=none] (a) at (0.3, -0.3) {};
		\node [style=none] (3) at (0.2, -0.5) {};
		\node [style=none] (4) at (0.2, 0.5) {};
	\end{pgfonlayer}
	\begin{pgfonlayer}{edgelayer}
		\draw[bend left=45] (1.center) to (0.center);
		\draw[bend right=45] (a.center) to (0.center);
	\end{pgfonlayer}
\end{tikzpicture}}}}
&=&
\bra{00}+\bra{11}
\end{array}
\]
\[
\intf{\vcenter{\hbox{\begin{tikzpicture}\footnotesize
	\begin{pgfonlayer}{nodelayer}
		\node [style=none] (0) at (-0.5, 0.5) {};
		\node [style=none] (1) at (-1, 0.5) {};
		\node [style=none] (2) at (0, 0.45) {...};
		\node [style=none] (3) at (0, -0.45) {...};
		\node [style=none] (4) at (0.5, -0.5) {};
		\node [style=none] (5) at (0.75, 0.75) {};
		\node [style=none] (6) at (0.75, -0.75) {};
		\node [style=gn] (7) at (-0.25, -0) {$~\alpha~$};
		\node [style=none] (8) at (-1.25, 0.75) {};
		\node [style=none] (9) at (-1.25, -0.75) {};
		\node [style=none] (10) at (-1, -0.5) {};
		\node [style=none] (11) at (0.5, 0.5) {};
		\node [style=none] (12) at (-0.5, -0.5) {};
	\end{pgfonlayer}
	\begin{pgfonlayer}{edgelayer}
		\draw [style=braceedge] (6.center) to node[wire label, inner sep={5 pt}]{$m$} (9.center);
		\draw [style=none, bend left=15, looseness=1.00] (10.center) to (7);
		\draw [style=none, bend left=15, looseness=1.00] (12.center) to (7);
		\draw [style=braceedge] (8.center) to node[wire label, inner sep={5 pt}]{$n$} (5.center);
		\draw [style=none, bend left=15, looseness=1.00] (7) to (1.center);
		\draw [style=none, bend right=15, looseness=1.00] (4.center) to (7);
		\draw [style=none, bend right=15, looseness=1.00] (7) to (11.center);
		\draw [style=none, bend left=15, looseness=1.00] (7) to (0.center);
	\end{pgfonlayer}
\end{tikzpicture}}}}
=
\ket{0}^{\otimes m}\bra{0}^{\otimes n}
+e^{i\alpha}\ket{1}^{\otimes m}\bra{1}^{\otimes n},
\qquad \alpha\in\mathbb R.
\]
\[
\intf{\vcenter{\hbox{\begin{tikzpicture}\footnotesize
	\begin{pgfonlayer}{nodelayer}
		\node [style=none] (0) at (1.25, -0.5) {};
		\node [style=none] (1) at (2.25, 0.5) {};
		\node [style=none] (2) at (0.5, -0.75) {};
		\node [style=none] (3) at (0.75, 0.5) {};
		\node [style=none] (4) at (0.5, 0.75) {};
		\node [style=rn] (5) at (1.5, 0) {$~\alpha~$};
		\node [style=none] (6) at (2.5, -0.75) {};
		\node [style=none] (7) at (1.75, -0.45) {...};
		\node [style=none] (8) at (0.75, -0.5) {};
		\node [style=none] (9) at (2.25, -0.5) {};
		\node [style=none] (10) at (1.25, 0.5) {};
		\node [style=none] (11) at (1.75, 0.45) {...};
		\node [style=none] (12) at (2.5, 0.75) {};
	\end{pgfonlayer}
	\begin{pgfonlayer}{edgelayer}
		\draw [style=braceedge] (6.center) to node[wire label, inner sep={5 pt}]{$m$} (2.center);
		\draw [style=none, bend left=15, looseness=1.00] (8.center) to (5);
		\draw [style=none, bend left=15, looseness=1.00] (0.center) to (5);
		\draw [style=braceedge] (4.center) to node[wire label, inner sep={5 pt}]{$n$} (12.center);
		\draw [style=none, bend left=15, looseness=1.00] (5) to (3.center);
		\draw [style=none, bend right=15, looseness=1.00] (9.center) to (5);
		\draw [style=none, bend right=15, looseness=1.00] (5) to (1.center);
		\draw [style=none, bend left=15, looseness=1.00] (5) to (10.center);
	\end{pgfonlayer}
\end{tikzpicture}}}}
=
\ket{+}^{\otimes m}\bra{+}^{\otimes n}
+e^{i\alpha}\ket{-}^{\otimes m}\bra{-}^{\otimes n},
\qquad \alpha\in\mathbb R.
\]
\[
\intf{\vcenter{\hbox{\begin{tikzpicture}
	\begin{pgfonlayer}{nodelayer}
		\node [style={H box}] (0) at (0.5, 0) {};
		\node [style=none] (1) at (0.5, 0.3) {};
		\node [style=none] (2) at (0.5, -0.3) {};
				\node [style=none] (a) at (0.5, 0.5) {};
		\node [style=none] (b) at (0.5, -0.5) {};
	\end{pgfonlayer}
	\begin{pgfonlayer}{edgelayer}
		\draw (1.center) to (0);
		\draw (2.center) to (0);
	\end{pgfonlayer}
\end{tikzpicture}}}}
=\frac{1}{\sqrt{2}}
\begin{pmatrix}
1 & 1\\
1 & -1
\end{pmatrix}
\]

\newpage
\subsection{The Optimized Calculus \texorpdfstring{\(\zxopt\)}{ZX\_opt}}

\begin{center}
\begingroup
\setlength{\tabcolsep}{4pt}
\renewcommand{\arraystretch}{1.25}
\resizebox{0.9\textwidth}{!}{%
\begin{tabular}{|c@{\qquad\quad~}c@{\qquad\quad~}c@{~}|}
\hline
&& \\
\sOneTableDiagram&\begin{tikzpicture}
	\begin{pgfonlayer}{nodelayer}
		\node [style=gn] (0) at (-0.75, 0) {};
		\node [style=none] (1) at (0, 0) {=};
		\node [style=none] (2) at (-0.75, 0.75) {};
		\node [style=none] (3) at (-0.75, -0.75) {};
		\node [style=none] (4) at (0.5, 0.75) {};
		\node [style=none] (5) at (0.5, -0.75) {};
		\node [style=none] (6) at (0, -0.25) {(I$_g$)};
	\end{pgfonlayer}
	\begin{pgfonlayer}{edgelayer}
		\draw (2.center) to (0);
		\draw (0) to (3.center);
		\draw (4.center) to (5.center);
	\end{pgfonlayer}
\end{tikzpicture}
&\begin{tikzpicture}
	\begin{pgfonlayer}{nodelayer}
		\node [style=rn] (0) at (-0.75, -0.3) {$\frac{\text{-}\pi}{4}$};
		\node [style=gn] (1) at (-0.75, 0.3) {$\frac{\pi}{4}$};
		\node [style=none] (2) at (0, 0) {=};
		\node [style=none] (3) at (0.75, 0.25) {};
		\node [style=none] (4) at (0.75, -0.25) {};
		\node [style=none] (5) at (1.25, 0.25) {};
		\node [style=none] (6) at (1.25, -0.25) {};
		\node [style=none] (7) at (0, -0.25) {(E)};
	\end{pgfonlayer}
	\begin{pgfonlayer}{edgelayer}
		\draw (0) to (1);
		\draw [style=dashed] (3.center) to (5.center);
		\draw [style=dashed] (5.center) to (6.center);
		\draw [style=dashed] (6.center) to (4.center);
		\draw [style=dashed] (4.center) to (3.center);
	\end{pgfonlayer}
\end{tikzpicture}\\
&& \\
\begin{tikzpicture}
	\begin{pgfonlayer}{nodelayer}
		\node [style=gn] (0) at (0.75, 0) {};
		\node [style=none] (1) at (2.25, -0.25) {};
		\node [style=none] (2) at (0.5, -0.5) {};
		\node [style=rn] (3) at (2.25, 0.25) {};
		\node [style=none] (4) at (1, -0.5) {};
		\node [style=rn] (5) at (0.75, 0.5) {};
		\node [style=rn] (6) at (2.75, 0.25) {};
		\node [style=none] (7) at (2.75, -0.25) {};
		\node [style=none] (8) at (1.5, 0) {$=$};
		\node [style=rn] (9) at (0, 0.25) {};
		\node [style=gn] (10) at (0, -0.25) {};
		\node [style=none] (11) at (1.5, -0.25) {(CP)};
	\end{pgfonlayer}
	\begin{pgfonlayer}{edgelayer}
		\draw [style=none] (5) to (0);
		\draw [style=none, bend right=23, looseness=1.00] (0) to (2.center);
		\draw [style=none, bend left=23, looseness=1.00] (0) to (4.center);
		\draw [style=none] (3) to (1.center);
		\draw [style=none] (6) to (7.center);
		\draw (9) to (10);
	\end{pgfonlayer}
\end{tikzpicture}&\begin{tikzpicture}
	\begin{pgfonlayer}{nodelayer}
		\node [style=none] (0) at (1.25, 0.75) {};
		\node [style=rn] (1) at (-1.5, -0.25) {};
		\node [style=none] (2) at (0.75, -0.75) {};
		\node [style=none] (3) at (-0.75, 1) {};
		\node [style=none] (4) at (0.75, 0.75) {};
		\node [style=none] (5) at (-1.5, -0.75) {};
		\node [style=none] (6) at (-1.5, 1) {};
		\node [style=gn] (7) at (-0.75, 0.5) {};
		\node [style=none] (8) at (0, -0) {$=$};
		\node [style=rn] (9) at (-0.75, -0.25) {};
		\node [style=gn] (10) at (1, -0.25) {};
		\node [style=gn] (11) at (-1.5, 0.5) {};
		\node [style=none] (12) at (-0.75, -0.75) {};
		\node [style=none] (13) at (1.25, -0.75) {};
		\node [style=rn] (14) at (1, 0.25) {};
		\node [style=rn] (15) at (-2, 0.25) {};
		\node [style=gn] (16) at (-2, -0.25) {};
		\node [style=none] (17) at (0, -0.25) {(B)};
	\end{pgfonlayer}
	\begin{pgfonlayer}{edgelayer}
		\draw [style=none] (12.center) to (9);
		\draw [style=none] (5.center) to (1);
		\draw [style=none] (7) to (3.center);
		\draw [style=none, bend right=23, looseness=1.00] (9) to (7);
		\draw [style=none] (11) to (6.center);
		\draw [style=none, bend left=23, looseness=1.00] (1) to (11);
		\draw [style=none, bend right=23, looseness=1.00] (13.center) to (10);
		\draw [style=none] (10) to (14);
		\draw [style=none, bend left=23, looseness=1.00] (14) to (4.center);
		\draw [style=none, bend right=23, looseness=1.00] (14) to (0.center);
		\draw [bend right=23, looseness=1.00] (10) to (2.center);
		\draw (11) to (9);
		\draw (7) to (1);
		\draw (15) to (16);
	\end{pgfonlayer}
\end{tikzpicture}&\multirow{4}{*}{\begin{tikzpicture}
	\begin{pgfonlayer}{nodelayer}
		\node [style=none] (0) at (0, 0) {=};
		\node [style=gn] (1) at (0.75, 0) {$\beta_2$};
		\node [style=rn] (2) at (0.75, 0.55) {$\beta_1$};
		\node [style=rn] (3) at (0.75, -0.55) {$\beta_3$};
		\node [style=rn] (4) at (-0.75, 0) {$\alpha_2$};
		\node [style=gn] (5) at (-0.75, 0.5) {$\alpha_1$};
		\node [style=gn] (6) at (-0.75, -0.5) {$\alpha_3$};
		\node [style=none] (7) at (0.75, 1) {};
		\node [style=none] (8) at (0.75, -1) {};
		\node [style=none] (9) at (-0.75, 1) {};
		\node [style=none] (10) at (-0.75, -1) {};
		\node [style=rn] (11) at (1.5, -0.25) {$\pi$};
		\node [style=gn] (12) at (1.5, -0.75) {$\gamma$};
		\node [style=rn] (13) at (1.5, 0.75) {};
		\node [style=gn] (14) at (1.5, 0.25) {};
		\node [style=none] (15) at (0, -0.25) {(EU)};
	\end{pgfonlayer}
	\begin{pgfonlayer}{edgelayer}
		\draw [style=none] (7.center) to (8.center);
		\draw [style=none] (9.center) to (10.center);
		\draw [style=none] (12) to (11);
		\draw [style=none, bend left=45, looseness=1.25] (14) to (13);
		\draw [style=none, bend right=45, looseness=1.25] (14) to (13);
		\draw [style=none] (13) to (14);
	\end{pgfonlayer}
\end{tikzpicture}}\\
&& \\
\begin{tikzpicture}
	\begin{pgfonlayer}{nodelayer}
		\node [style=gn] (0) at (0.5, -0.55) {$\frac{\pi}{2}$};
		\node [style=rn] (1) at (0.5, -0) {};
		\node [style=gn] (2) at (0.5, 0.55) {$\frac{\pi}{2}$};
		\node [style=none] (3) at (0.5, 0.9999999) {};
		\node [style=gn] (4) at (1.25, 0.5) {$\frac{\text{-}\pi}{2}$};
		\node [style=none] (5) at (0.5, -0.9999999) {};
		\node [style=none] (6) at (-0.9999999, 0.9999999) {};
		\node [style=none] (7) at (-0.9999999, -0.9999999) {};
		\node [style=none] (8) at (-0.2500001, -0) {$=$};
		\node [style={{H box}}] (9) at (-0.9999999, -0) {};
		\node [style=none] (10) at (-0.25, -0.25) {(HD)};
	\end{pgfonlayer}
	\begin{pgfonlayer}{edgelayer}
		\draw (3.center) to (2);
		\draw (2) to (1);
		\draw (1) to (0);
		\draw (0) to (5.center);
		\draw (1) to (4);
		\draw (6.center) to (7.center);
	\end{pgfonlayer}
\end{tikzpicture}&\begin{tikzpicture}
	\begin{pgfonlayer}{nodelayer}
		\node [style=none] (0) at (0.7500001, 1) {};
		\node [style=none] (1) at (-0.7500001, -1) {};
		\node [style=H box] (2) at (-0.7500001, 0.5000001) {};
		\node [style=rn] (3) at (-1.25, -0) {\footnotesize$~\alpha~$};
		\node [style=H box] (4) at (-1.75, -0.5000001) {};
		\node [style=none] (5) at (-0.7500001, 1) {};
		\node [style=H box] (6) at (-0.7500001, -0.5000001) {};
		\node [style=none] (7) at (-1.25, 0.7500001) {$\cdots$};
		\node [style=none] (8) at (0, -0) {$=$};
		\node [style=gn] (9) at (1.25, -0) {\footnotesize$~\alpha~$};
		\node [style=none] (10) at (1.75, -1) {};
		\node [style=none] (11) at (1.75, 1) {};
		\node [style=none] (12) at (-1.75, 1) {};
		\node [style=H box] (13) at (-1.75, 0.5000001) {};
		\node [style=none] (14) at (-1.75, -1) {};
		\node [style=none] (15) at (0.7500001, -1) {};
		\node [style=none] (16) at (-1.25, -0.7500001) {$\cdots$};
		\node [style=none] (17) at (1.25, 0.7500001) {$\cdots$};
		\node [style=none] (18) at (1.25, -0.7500001) {$\cdots$};
		\node [style=none] (19) at (0, -0.25) {(H)};
	\end{pgfonlayer}
	\begin{pgfonlayer}{edgelayer}
		\draw [bend right, looseness=1.00] (3) to (4);
		\draw [bend left, looseness=1.00] (3) to (6);
		\draw (6) to (1.center);
		\draw (4) to (14.center);
		\draw [bend left, looseness=1.00] (3) to (13);
		\draw [bend right, looseness=1.00] (3) to (2);
		\draw (2) to (5.center);
		\draw (13) to (12.center);
		\draw [bend left=23, looseness=1.00] (9) to (0.center);
		\draw [bend right=23, looseness=1.00] (9) to (11.center);
		\draw [bend right=23, looseness=1.00] (9) to (15.center);
		\draw [bend left=23, looseness=1.00] (9) to (10.center);
	\end{pgfonlayer}
\end{tikzpicture}&\\
&& \\
\hline
\end{tabular}}

\vspace{0.5em}

\begin{minipage}{\textwidth}
\small
\textbf{Figure 2.} Set of rules \zxopt for the ZX calculus with scalars. The above labels and rewrites are lightly adapted from Vilmart~\cite{8785765}. The right-hand side of (E) is an empty diagram. (...) denote zero or more wires, while (\rotatebox{45}{\raisebox{-0.4em}{$\cdots$}}) denote one or more wires. In rule (EU), $\beta_1,\beta_2,\beta_3$ and $\gamma$ can be determined as follows: $x^+:=\frac{\alpha_1+\alpha_3}{2}$, $x^-:=x^+-\alpha_3$, $z := \cos{\frac{\alpha_2}{2}}\cos{x^+}+i\sin{\frac{\alpha_2}{2}}\cos{x^-}$ and $z' := \cos{\frac{\alpha_2}{2}}\sin{x^+}-i\sin{\frac{\alpha_2}{2}}\sin{x^-}$, then $\beta_1 = \arg z + \arg z', \beta_2 = 2\arg\left(i+\left|\frac{z}{z'}\right|\right), \beta_3 = \arg z - \arg z', \gamma = x^+-\arg(z)+\frac{\alpha_2-\beta_2}{2}$ where by convention $\arg(0):=0$ and $z'=0\implies \beta_2=0$.
\end{minipage}
\endgroup
\end{center}

In addition to these collections of rules, we also consider the connectivity meta-rule, defined below.  In practice, this intuitive notion lets one rearrange a ZX diagram in any way, as long as each component retains the original connectivity.  Indeed, this is a feature of PROPs (product and permutation categories) as well as other ``nice'' monoidal categories with their associated string diagrams.  The interested reader is referred to Heunen and Vicary~\cite{10.1093/oso/9780198739623.001.0001}.

\begin{definition}[Meta-rule: only connectivity matters]
\label{def:connectivity-meta-rule}
Two diagrams represent the same matrix whenever one can be transformed into the other by moving components around without changing their connections.
\end{definition}

Note that we are only assuming the above ruleset and the connectivity meta-rule.  No color-swapped or flipped versions of the rules are assumed a priori.

Now that we have all the ingredients, there are some key features one would want to capture in a graphical theory \(\mathrm{ZX}\) (with respect to pure qubit quantum mechanics):

\begin{itemize}
\item \textbf{Universality.}
\[
\forall A\in \mathbb C^{2^m\times 2^n},\qquad \exists D:n\to m,\qquad \intf{D}=A.
\]
\item \textbf{Soundness.}
\[
\forall D_1,D_2,\qquad \zxopt\vdash D_1=D_2\implies \intf{D_1}=\intf{D_2}.
\]
\item \textbf{Completeness.}
\[
\forall D_1,D_2,\qquad \intf{D_1}=\intf{D_2}\implies \zxopt\vdash D_1=D_2.
\]
\item \textbf{Minimality.}
\[
\forall R\in \zxopt,\qquad \zxopt\setminus\{R\}\nvdash R.
\]
\end{itemize}

Note that these can be viewed as properties of the interpretation functor.  Moreover, universality is a feature of the generators, while Soundness, Completeness and Minimality deal intimately with the rewrite system.

We refer to the ruleset in Figure~2 as \(\zxopt\), and to Vilmart's original version as \(\zxv\). Thus,
\[
\zxopt=\zxv\setminus\{(I_r)\}.
\]
Vilmart~\cite{8785765} proved that \(\zxv\) satisfies completeness and the necessity of most of the ruleset, with the exception of \((B)\), \((I_g)\), and \hyperref[eq:ir-rule-derivable]{\((I_r)\)}.  Notably, this broadly matches the situation of BPW~\cite{Backens_2020}, where completeness of their ruleset was established for the stabilizer ZX calculus, with the status of the bialgebra and \((I)\)-equivalent rules left open.  The necessity of these rules was shown in an earlier paper by the author~\cite{stoltz2026simplified}.  This conceptual symmetry motivates the use of similar techniques in this paper.

Now, we collect some important facts and theorems established in Vilmart~\cite{8785765}.

\begin{theorem}[Vilmart~\cite{8785765}]
\(\zxv\) is sound and complete for pure qubit quantum mechanics. For any two ZX diagrams \(D_1\) and \(D_2\),
\[
\intf{D_1}=\intf{D_2}
\quad\Longleftrightarrow\quad
\zxv\vdash D_1=D_2.
\]
\end{theorem}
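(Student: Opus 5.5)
This theorem is cited from Vilmart~\cite{8785765}, so the plan is to reconstruct the structure of that argument rather than produce something new. The two directions are handled separately.

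\emph{Soundness.} Because \(\intf{\cdot}\) is defined compositionally, it is a symmetric monoidal functor into complex matrices. It therefore suffices to check three things:
\begin{itemize}
\item the interpretations of the cup, cap and braiding satisfy the snake and naturality equations, which gives the connectivity meta-rule of Definition~\ref{def:connectivity-meta-rule};
\item each rule of \(\zxv\) holds as a matrix identity;
\item equalities are preserved under \(\circ\) and \(\otimes\), which is automatic.
\end{itemize}
The rules (S), (I\(_g\)), (I\(_r\)), (H), (HD), (CP), (B) and (E) are finite computations on small matrices. Since (S) and (H) involve arbitrary arities, they need a short induction, or a direct computation with the closed forms \(\ket{0}^{\otimes m}\bra{0}^{\otimes n}+e^{i\alpha}\ket{1}^{\otimes m}\bra{1}^{\otimes n}\) and \(H^{\otimes m}\). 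For (EU), one writes both sides as \(2\times 2\) unitaries, up to the scalar diagram in \(\gamma\), and verifies that the stated formulas for \(\beta_1,\beta_2,\beta_3,\gamma\) are exactly the Euler-angle conversion from Z-X-Z to X-Z-X. The degenerate conventions \(\arg 0=0\) and \(z'=0\Rightarrow\beta_2=0\) are treated as separate cases.

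\emph{Completeness.} The plan is to reduce to an already-known complete axiomatisation \(\mathcal{A}\), such as that of~\cite{JeandelPerdrixVilmart2018Beyond} or~\cite{HadzihasanovicNgWang2018}, and then show that every axiom of \(\mathcal{A}\) is derivable in \(\zxv\). Given this, if \(\intf{D_1}=\intf{D_2}\), completeness of \(\mathcal{A}\) yields \(\mathcal{A}\vdash D_1=D_2\), and each step can be replayed in \(\zxv\). The derivations are organised in this order:
\begin{enumerate}
\item Derive the colour-swapped versions of the rules using (H) and (HD). This includes \(H\circ H=\mathrm{id}\), obtained by applying (HD) twice and using (EU) and (S).
\item Derive the upside-down versions of the rules using cups and caps.
\item Derive the \(\pi\)-copy and \(\pi\)-commutation rules and the stabilizer fragment, i.e.\ the rules of~\cite{Backens_2020}.
\item Derive the scalar rules: inverses of nonzero scalars and the zero scalar, using (E) together with (CP), (B) and (EU).
\item Derive the remaining non-Clifford axioms of \(\mathcal{A}\) from (EU).
\end{enumerate}

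The main obstacle is step~5, deriving the generalised supplementarity or ``(A)'' style rules of the reference axiomatisation from (EU) alone. The first approach to try is Vilmart's strategy: express the target rule in a normal form for \(1\)- and \(2\)-qubit diagrams, where (EU) gives a canonical form for single-qubit unitaries. One then reduces the multi-qubit axiom to statements about controlled single-qubit unitaries, and closes the argument using (B) and (CP) to push controls through.
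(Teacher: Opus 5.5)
The paper gives no proof of this statement. It is quoted from Vilmart~\cite{8785765}, so your outline can only be measured against the cited work. Your architecture (rule-by-rule soundness, then completeness by deriving in \zxv every axiom of an earlier complete axiomatisation over the same generators) is the standard one and matches the cited work in outline. As a proof, though, the completeness half has no substance yet. Everything rests on your step~5, and ``pass to a normal form, reduce to controlled single-qubit unitaries, push controls through with (B) and (CP)'' is a hope, not a derivation. Deriving the non-Clifford axioms of the reference calculus from the rules of \zxv is where the completeness proof's work lies, and nothing in your sketch shows it can be done. Also, only a reference calculus over the same generators can be replayed step by step. The ZX calculus of~\cite{HadzihasanovicNgWang2018} extends the language with extra generators, so choosing it would also require encoding those generators in \zxv and deriving their axioms.

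Two smaller points need repair. First, the snake and naturality equations do not by themselves make the connectivity meta-rule sound. You must also check that the interpreted spiders are invariant under permuting legs and satisfy \(\intf{Z^\alpha_{n,m}}^T=\intf{Z^\alpha_{m,n}}\) (and likewise for red), and that \(\intf{H}^T=\intf{H}\). This is exactly what the paper verifies for its nonstandard model in Lemma~\ref{lem:connectivity-soundness}. Second, your route to \(H\circ H=\mathrm{id}\) in step~1 is doubtful as stated. Two uses of (HD) plus green fusion leave red nodes carrying green \(-\pi/2\) states. You give no way to collapse these except red-spider manipulations (red fusion, \(\pi\)-commutation), and your plan only makes those available after colour-swapping, i.e.\ after \(H\circ H=\mathrm{id}\). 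Without \((I_r)\), proving \(H\circ H=\mathrm{id}\) (equivalent to \((I_r)\) given (H) and \((I_g)\)) is what Appendix~A spends its scalar bookkeeping on. It first gets there only up to a scalar residue, in Lemma~\ref{lem:ir-derivation-L4}. In \zxv none of this is needed: (H) with \(n=m=1\), \(\alpha=0\) gives \(H\circ X^0_{1,1}\circ H=Z^0_{1,1}\), and \((I_r)\) and \((I_g)\) turn this into \(H\circ H=\mathrm{id}\) in one line.
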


\begin{theorem}[Vilmart~\cite{8785765}]
The rules of \(\zxv\), except possibly \((B)\) and the identity-rule pair,
are necessary:
\[
\forall R\in \zxv\setminus\{(B),(I_g),(I_r)\},\qquad
\zxv\setminus\{R\}\nvdash R.
\]
Moreover, at least one of the identity rules is necessary:
\[
\exists R\in\{(I_g),(I_r)\},\qquad
\zxv\setminus\{R\}\nvdash R.
\]
\end{theorem}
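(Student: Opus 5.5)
For the first claim I would argue rule by rule. For each $R\in\zxv\setminus\{(B),(I_g),(I_r)\}$, I would build an alternative semantics $\intf{\cdot}_R$ such that three things hold. First, it is compositional, i.e.\ a PROP functor, so the connectivity meta-rule holds automatically. Second, it validates every rule of $\zxv\setminus\{R\}$. Third, it assigns different values to the two sides of $R$. Soundness of derivations then shows that $\zxv\setminus\{R\}\nvdash R$. Equivalently, I would look for a syntactic invariant preserved by every other rule and broken by $R$. The work is in choosing, for each rule, the deformation that only $R$ can detect.

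Concretely, I would try the following constructions in order.
\begin{itemize}
\item \textbf{(EU).} It is the only rule whose angles depend non-linearly on the input angles. I would fix a $\mathbb{Q}$-linear map $f:\mathbb{R}\to\mathbb{R}$ that is the identity on $\mathbb{Q}\pi$ but not on some irrational multiple of $\pi$ (via a Hamel basis). Then I would reinterpret every phase $\alpha$ as $f(\alpha)$. Rules (S), (H), (HD), (E), (CP), (B), (I) only add phases or use multiples of $\pi/4$, so they survive, while a generic instance of (EU) fails.
\item \textbf{(E).} It is the only rule relating a non-empty scalar diagram to the empty one. I would multiply $\intf{D}$ by $\lambda^{c(D)}$, where $c$ is a linear count (e.g.\ of spiders minus edges, adjusted per generator) balanced on every other rule but not on (E).
\item \textbf{(H) and (HD).} I would use a semantics in which $H$ is sent to a different map, such as the identity or a phase, with the colours adjusted so that only the relevant rule breaks.
\item \textbf{(S).} I would use an interpretation where spiders of high arity are deformed, for instance by a phase depending on the number of legs modulo something, so that fusion fails while the other rules, whose arities are fixed, hold.
\item \textbf{(CP).} I would use a semantics on a non-standard object, e.g.\ $\mathbb{C}^1$ or a parity-counting model, in which the $\pi$-copy interaction fails.
\end{itemize}

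For the second claim, that at least one identity rule is necessary, I would check that removing both $(I_g)$ and $(I_r)$ loses an equation. I would use a semantics in which a phase-free arity-$(1,1)$ spider of either colour is interpreted as a non-identity idempotent-compatible map. A candidate is sending every spider to its standard interpretation composed with a projector onto a subspace. This respects fusion and the remaining rules, because in every other rule each such spider can be absorbed into a neighbouring spider. It then breaks both identity rules simultaneously, so $\zxv\setminus\{(I_g),(I_r)\}$ cannot derive both; hence at least one of them is underivable from the rest.

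The main obstacle I expect is (S) together with the identity pair. Spider fusion is used implicitly inside the statements of (H), (CP) and (EU) through arbitrary arities, so any deformation has to be checked against all arity instances. I would first try the arity-modulo-$k$ phase twist for (S), and verify (H) carefully since its arities are unconstrained. The remaining cases should be routine verifications of finitely many rule schemas.
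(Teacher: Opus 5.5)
\paragraph{Second claim.} Your argument does not prove the statement as written. A semantics in which both \((I_g)\) and \((I_r)\) fail while the other seven rules hold shows only that \(\zxv\setminus\{(I_g),(I_r)\}\) derives neither rule. That is, the two rules cannot be dropped \emph{together}. The claim is \(\exists R\in\{(I_g),(I_r)\}\) with \(\zxv\setminus\{R\}\nvdash R\). But \(\zxv\setminus\{R\}\) still contains the other identity rule, and your model falsifies it. So your semantics is a model of \(\zxv\setminus\{R\}\) for neither choice of \(R\), and the step ``hence at least one of them is underivable from the rest'' is a non sequitur: a priori each rule could be derivable from the remaining ones together with its partner.

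In fact the witness is forced. Theorem~\ref{lem:ir-rule-derivable} derives \((I_r)\) in \(\zxv\setminus\{(I_r)\}\), so you need a model that validates \((I_r)\) and everything else but breaks \((I_g)\). The paper only quotes this theorem from Vilmart, but Section~\ref{sec:necessity-green-identity-ig-rule} gives exactly such a model. Green spiders and \(H\) become one-point relations \(P_{n,m}\), while phase-free red spiders of total arity \(2\) become the two-point relations \(C_{n,m}\); that exception is what keeps \((I_r)\) sound (checked in Appendix~B).

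Your concrete candidate also fails on its own terms. If \(H\) is left alone, composing spiders with a proper projector gives the right-hand side of \((HD)\) rank at most \(1\), while \(H\) is invertible. There is no neighbouring spider to absorb into there. If you also project \(H\) and normalise scalars, you essentially land in the one-point model, which breaks both identity rules and needs the \(C_{n,m}\) exception before it proves the claim.

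\paragraph{First claim.} The strategy of one countermodel per rule is right, and your \((S)\) and \((EU)\) sketches are workable. For \((S)\), apply the deformation to both colours and only to spiders of total arity \(\geq 4\), which occur only in \((S)\) and \((H)\). For \((EU)\) you still need an explicit failing instance; an additive map such as \(\alpha\mapsto 9\alpha\), which fixes \(\frac{\pi}{4}\mathbb Z\) modulo \(2\pi\), serves as well as a Hamel basis. The other items do not work as described.
\begin{itemize}
\item \((E)\): your example \(c=\#\text{spiders}-\#\text{edges}\) is unbalanced on \((S)\), since fusing along \(k\) wires changes it by \(1-k\). A sign twist does work: \(H\mapsto -H\) and \(X^\alpha_{n,m}\mapsto(-1)^{n+m}X^\alpha_{n,m}\) validates every other rule, while the left-hand side of \((E)\) becomes \(-1\).
\item \((CP)\): this is the copy rule for the phase-free red state, not a \(\pi\)-copy rule. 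No \(\mathbb C^1\)-valued model can break it, because \((S)\) and \((I_g)\) force \(g_{1,2}\,g_{1,0}=g_{1,1}=1\) for the phase-free green values, which is exactly what balances its two sides. What works is the partition of the boundary into connected components. It is compositional; in every other rule all boundary points of each side lie in one component; and \((CP)\) disconnects its two outputs.
\item \((H)\): sending \(H\) to the identity or to a phase gate breaks \((HD)\) too. Instead use that \((S)\) is green-only in \(\zxv\), so \((H)\) is the only rule containing a red spider of arity \(\geq 4\).
\item \((HD)\): you give no construction, and the cheap options cannot isolate it. A rescaling \(H\mapsto\mu H\), with red spiders rescaled by \(\mu^{-(n+m)}\) to keep \((H)\), is forced to \(\mu=1\) by \((I_r)\) and \((E)\). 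This case needs a genuinely new countermodel, and your proposal does not supply one.
\end{itemize}
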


We show that the \((I_r)\) rule is derivable from \(\zxopt\), hence our ruleset is complete:

\begin{theorem}
\label{lem:ir-rule-derivable}
The \hyperref[eq:ir-rule-derivable]{\((I_r)\)} rule is derivable:
\phantomsection\label{eq:ir-rule-derivable}
\begin{equation*}
\zxopt\vdash
\vcenter{\hbox{\hkstikzitinput{HKS/figures/ir-derivation/ir-atomic-ir-statement.tikz}}}
\end{equation*}
\end{theorem}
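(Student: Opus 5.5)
The plan is to obtain \((I_r)\) from \((I_g)\) by conjugating with Hadamards, using the colour-change rule \((H)\). The real work is in showing that the Hadamard is self-inverse.

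First, instantiate \((H)\) with \(\alpha=0\) and one input and one output. The left-hand side is then a phase-free \(1\to1\) red spider with a Hadamard box on each leg, and the right-hand side is a phase-free \(1\to1\) green spider. By \((I_g)\) the green spider is a bare wire, so
\[
H\circ X^0_{1,1}\circ H \;=\; 1_1 .
\]
Suppose we also have the lemma \(H\circ H=1_1\). Then pre- and post-composing the red spider with \(H\circ H\), and using the meta-rule to regroup, gives
\[
X^0_{1,1}=(H\circ H)\circ X^0_{1,1}\circ(H\circ H)=H\circ\bigl(H\circ X^0_{1,1}\circ H\bigr)\circ H=H\circ 1_1\circ H=H\circ H=1_1,
\]
which is \((I_r)\). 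So everything reduces to proving \(H\circ H=1_1\) without ever using \((I_r)\).

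For \(H\circ H=1_1\), I would expand both boxes with \((HD)\). Each becomes a green \(\tfrac{\pi}{2}\) node, then a red \(3\)-legged node whose third leg carries a green \(\tfrac{-\pi}{2}\) effect, then another green \(\tfrac{\pi}{2}\) node. In the middle, the two adjacent green \(\tfrac{\pi}{2}\) nodes fuse by \((S)\) into a single green \(\pi\) node. The first target is to rewrite each "red node with green \(\tfrac{-\pi}{2}\) effect" as a \(1\to1\) red phase, up to a scalar. I expect this step to need \((B)\), to push the green effect through the red node, together with \((CP)\), and I must keep any red \(1\to1\) phase-free node non-eliminated, since \((I_r)\) is unavailable. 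The chain should then have the shape green–red–green–red–green with explicit phases. I would then apply \((EU)\), choosing the orientation in which its right-hand side collapses: with the relevant \(\beta\)'s computed from the formulas in Figure~2, the aim is \(\beta_1=\beta_3=0\) and \(\beta_2=0\). After fusing with \((S)\), the green \(0\) node that remains is removed by \((I_g)\). The scalar gadgets produced by \((EU)\), \((B)\) and \((CP)\) are finally cancelled against each other using \((E)\) and \((S)\).

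The main obstacle is that last stage, because \((EU)\)'s outputs may contain red phase-free \(1\to1\) nodes that cannot simply be deleted. If that happens, I would first derive the partial result that a red phase-free node adjacent to a green node is absorbable. I would do this by applying \((H)\) to turn it green and then using the already derived \(H\circ X^0_{1,1}\circ H=1_1\) locally, which only needs \(H\)'s on a wire to meet in pairs inside the expansion. Careful scalar bookkeeping with \((E)\) is the other point where I expect the derivation to be delicate.
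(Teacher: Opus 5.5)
Your reduction is sound: $(H)$ at $\alpha=0$ with one input and one output, followed by $(I_g)$, gives $H\circ X^0_{1,1}\circ H=1_1$, and $H\circ H=1_1$ would then yield $(I_r)$. But it is an equivalence, not a simplification. Since $\zxv$ is complete, $\zxopt\vdash H\circ H=1_1$ holds if and only if $\zxopt\vdash(I_r)$, so the whole proof lives in deriving $H\circ H=1_1$, and your sketch of that derivation does not go through.

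\begin{itemize}
\item Turning the red $3$-legged node with its green $-\frac{\pi}{2}$ leg into a red $1\to1$ phase cannot be done with $(B)$ and $(CP)$, since both involve only phase-free spiders. The natural route is a colour change of the phased leg followed by red spider fusion. Red fusion is not an axiom of \zxopt (only green $(S)$ is given, with no colour-swapped rules), and deriving it from $(S)$ and $(H)$ requires inserting Hadamard pairs, i.e.\ $H\circ H=1_1$ itself. The paper needs a separate scalar-tracked red fusion lemma $(\mathrm{LS}_r)$ for this.
\item Your $(EU)$ target is self-defeating. $\beta_1=\beta_2=\beta_3=0$ makes the right-hand side $X^0_{1,1}\circ Z^0_{1,1}\circ X^0_{1,1}$, i.e.\ two red identities once $(I_g)$ is applied. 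It is also unreachable: by the formulas of Figure~2, $\beta_2=0$ forces $z'=0$, and then $\beta_1=\beta_3=\arg z=0$ forces $\sin\frac{\alpha_2}{2}=0$. No green--red--green window of your chain (red phases $\frac{\pi}{2}$) has this form.
\item The fallback is circular. $H\circ X^0_{1,1}\circ H=1_1$ only absorbs a red identity already flanked by Hadamards. Converting green neighbours with $(H)$ right-to-left turns, e.g., $Z^{a}_{1,1}\circ X^0_{1,1}\circ Z^{b}_{1,1}$ into $H\circ X^{a}_{1,1}\circ X^{b}_{1,1}\circ H$, which you can only recombine using red fusion or $H\circ H=1_1$.
\end{itemize}

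The other missing ingredient is the one the paper singles out as the main difficulty: scalars. In \zxopt the only rule that deletes a scalar is $(E)$, and it deletes only the connected pair $Z^{\pi/4}$--$X^{-\pi/4}$. The gadgets your steps create or consume are not of that form: the single-edge red--green pair in $(B)$ and $(CP)$, and the triple-edge red--green dumbbell and the $X^{\pi}$--$Z^{\gamma}$ pair in $(EU)$. Since $(S)$ only fuses green spiders, ``cancelled against each other using $(E)$ and $(S)$'' is not an available step.

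Your outline is, in spirit, the paper's route for $\zxoptprime$ (derive $H\circ H=1_1$, then conclude), where the $(IV)$ axiom makes scalars easy to discard. For \zxopt the paper proceeds differently:
\begin{itemize}
\item it obtains the self-inverse property of $H$ only up to a scalar residue (Lemma $\mathrm{H}^{-1}$);
\item it removes the red identity only together with a pair of scalar factors (Lemma $\mathrm{LI}_r$);
\item it then derives a scalar cleanup rule (Lemma $\mathrm{IV}'$, with auxiliary lemmas $\mathrm{HF}$, $\mathrm{IV}''$, $\mathrm{D}_-$, $\mathrm{D}'_{\pi}$) to make those factors disappear.
\end{itemize}
Without such a lemma you get at best $H\circ H=1_1\otimes s$ for some scalar diagram $s$, hence $X^0_{1,1}\otimes s\otimes s=1_1\otimes s$. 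Cancelling $s$ is precisely the hard part.
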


\begin{irproof}
\hyperref[app:ir-rule-derivation]{See Appendix A for details of this derivation.}
\begin{center}
\makebox[\linewidth][c]{%
	\hkstikzitinput{HKS/figures/ir-derivation/renaud-short/ir-renaud-ir-proof-a.tikz}\hspace{1em}%
	\hkstikzitinput{HKS/figures/ir-derivation/renaud-short/ir-renaud-ir-proof-b.tikz}\hspace{1em}%
	\hkstikzitinput{HKS/figures/ir-derivation/renaud-short/ir-renaud-ir-proof-c.tikz}%
}
\end{center}
\end{irproof}

Before the results of this paper, the remaining unresolved minimality questions were the necessity of \((B)\) and the individual status of the two identity rules \((I_g)\) and \((I_r)\). We have shown \((I_r)\) derivable, and will further prove the necessity of \((B)\) and \((I_g)\). Thus,

\begin{theorem}
\label{thm:zxopt-complete-minimal}
\(\zxopt\) is a complete and minimal ruleset for the pure-qubit ZX calculus.
\end{theorem}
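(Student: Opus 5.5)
Completeness and minimality are handled separately.

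\textbf{Completeness.} Completeness follows from Vilmart's completeness theorem for \(\zxv\) together with Theorem~\ref{lem:ir-rule-derivable}. Since \(\zxopt=\zxv\setminus\{(I_r)\}\) and \(\zxopt\vdash(I_r)\), every derivation in \(\zxv\) can be replayed in \(\zxopt\) by replacing each use of \((I_r)\) with its derivation. So \(\zxopt\vdash D_1=D_2\) whenever \(\intf{D_1}=\intf{D_2}\). Soundness is inherited because \(\zxopt\subseteq\zxv\).

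\textbf{Minimality: rules already known to be necessary.} For every \(R\in\zxopt\setminus\{(B),(I_g)\}\) we reuse Vilmart's necessity theorem. It gives \(\zxv\setminus\{R\}\nvdash R\), and since \(\zxopt\setminus\{R\}\subseteq\zxv\setminus\{R\}\), also \(\zxopt\setminus\{R\}\nvdash R\). (Vilmart's proofs use alternative interpretations that falsify \(R\) while satisfying the other rules, and these remain valid for the smaller ruleset.)

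\textbf{Minimality: \((I_g)\).} This needs a fresh separating interpretation. Because \(\zxopt\) no longer contains \((I_r)\), the identity-rule pair argument specialises: some rule among \(\{(I_g),(I_r)\}\) must be necessary in \(\zxv\). Since \((I_r)\) is derivable from \(\zxopt\), removing \((I_g)\) must break derivability of both. Concretely, I would construct a semantics in which the green \(1\to1\) spider is interpreted as a nontrivial idempotent or projector \(P\neq \mathrm{id}\), red spiders are interpreted standardly, and spiders compose so that \((S)\) holds. I would then check \((E),(CP),(B),(HD),(H),(EU)\), where \((H)\) forces a compatible Hadamard interpretation. A simpler alternative is a parity or counting invariant on diagrams, for instance the number of green-spider occurrences modulo something preserved by all rules except \((I_g)\), since \((I_g)\) is the only rule creating or deleting a phase-free arity-\((1,1)\) green node without other changes. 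I would try the invariant approach first, checking each rule's LHS and RHS counts.

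\textbf{Minimality: \((B)\), the main obstacle.} Following the stabilizer case~\cite{stoltz2026simplified}, I would interpret diagrams over modules over a ring with nilpotent elements, for example \(\mathbb C[\varepsilon]/(\varepsilon^2)\). Spiders are deformed by \(\varepsilon\)-terms that cancel in the spider fusion, identity, Euler, Hadamard and scalar rules but survive in the bialgebra equation. The hardest part will be \((EU)\), which holds for a continuous family of angles. The deformation must therefore be uniform in the phase, for instance by attaching the nilpotent correction only to phase-independent parts such as the copying structure, or by scaling each spider by a factor like \((1+\varepsilon\,c(n+m))\) chosen to cancel on one-wire chains. I would first pick the correction so that all arity-\((1,1)\) and \((0,1)\) spiders are unchanged, which handles \((EU)\), \((HD)\), \((E)\) and \((CP)\) at low arity. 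I would then verify \((S)\) and \((H)\) through the arity dependence, and finally exhibit the nonzero \(\varepsilon\)-discrepancy between the two sides of \((B)\).
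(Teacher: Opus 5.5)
Your completeness argument, and your transfer of necessity for $(S),(E),(CP),(EU),(HD),(H)$ from $\zxv$ by monotonicity, coincide with the paper's.

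\textbf{The gap is $(B)$.} This is the main new content of the theorem. You state the goal, an $\varepsilon$-deformation that cancels in every rule except $(B)$, but you give no model, and your one concrete candidate cannot work. Suppose every spider is scaled by $\lambda_{n+m}=1+\varepsilon c(n+m)$. Then $(S)$ with $k\ge 1$ connecting wires forces $\lambda_{d_1}\lambda_{d_2}=\lambda_{d_1+d_2-2k}$. The cases $(d_1,d_2,k)=(2,1,1),(2,2,2),(1,1,1),(d,1,1)$ give $\lambda_2=1$, $\lambda_0=1$, $\lambda_1^2=1$ (so $c(1)=0$) and $\lambda_d\lambda_1=\lambda_{d-1}$. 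Hence $c\equiv 0$, and $(H)$ then leaves the red spiders undeformed too.

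The missing idea is to deform the Hadamard rather than the spiders. Take $H'=O^{-1}H_0O$ with $O=I+\epsilon N$ and $N^T=-N$. This keeps $(H')^T=H'$ and $(H')^2=I$, so connectivity holds and red spiders can simply be defined through $(H)$. If moreover $O$ commutes with every green phase gate and fixes the few states involved, then $(E)$, $(CP)$, $(EU)$ and $(HD)$ follow from their undeformed versions by conjugation.

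On the qubit module $D^2$ this is vacuous. Commuting with all $\mathrm{diag}(1,e^{i\theta})$ makes $N$ diagonal, and skew-symmetry then makes it zero. That is why the paper works on $V=D^{\{0,1\}^3}$. There $d_\theta^{\otimes 3}$ acts as the scalar $e^{i\theta}$ on the three-dimensional weight-one subspace, which carries a skew $N$ annihilating $\lvert 100\rangle+\lvert 010\rangle+\lvert 001\rangle$. $(B)$ then fails on an explicit entry, $-\tfrac{\sqrt2}{4}\epsilon$ versus $0$.

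Your heuristic of keeping low-arity spiders fixed also points the wrong way. In this model the red $(1,1)$ spiders do change, and $(CP)$ and $(HD)$ involve three-legged spiders anyway. $(EU)$ survives because both sides are conjugated by $O$, not because low-arity spiders are untouched.

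\textbf{On $(I_g)$.} Your opening deduction is already a complete proof, and it takes a different route from the paper, which builds the explicit relational model $\sem{-}^{(I_g)}$. Suppose $\zxopt\setminus\{(I_g)\}\vdash(I_g)$. Then this set derives everything $\zxopt$ derives, in particular $(I_r)$ by Theorem~\ref{lem:ir-rule-derivable}. So neither identity rule would be necessary in $\zxv$, contradicting the cited result of Vilmart. Your route needs no model but leans entirely on that citation; the paper's model is self-contained and is reused for $\zxoptprime$.

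So drop ``this needs a fresh separating interpretation'' and the fallbacks, which fail as stated. No function of the number of green nodes can separate $(I_g)$, because $(S)$ also deletes one green node and $(CP)$ deletes two. An invariant that does work is the number of node-free connected components, since no rule other than $(I_g)$ and $(I_r)$ has a bare wire on either side. Finally, ``removing $(I_g)$ must break derivability of both'' does not follow from your argument. Only non-derivability of $(I_g)$ follows, and that is all you need.
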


\begin{irproof}
This follows from the results above and
\hyperref[sec:necessity-bialgebra-b-rule]{Sections~\ref*{sec:necessity-bialgebra-b-rule}}
and~\hyperref[sec:necessity-green-identity-ig-rule]{\ref*{sec:necessity-green-identity-ig-rule}}.
\end{irproof}

Additionally, we adapt a second ruleset from Vilmart~\cite{8785765}, \(\zxvprime\), to create a second complete, minimal ruleset \(\zxoptprime=\zxvprime\setminus\{(I_r)\}\). The proof of this statement is found in \hyperref[sec:another-optimized-ruleset]{Section~\ref*{sec:another-optimized-ruleset}}.

\section{On the Necessity of the Bialgebra (B) rule}
\label{sec:necessity-bialgebra-b-rule}

\subsection{Defining the target PROP \texorpdfstring{\(\mathcal E\)}{E}}
\label{sec:defining-target-category-e}

Let
\[
D:=\mathbb C[\epsilon]/(\epsilon^2)
\]
and let
\[
V:=D^{\{0,1\}^3}.
\]
We use basis vectors \(\lvert x\rangle\) for strings \(x\in\{0,1\}^3\), with Hamming weight \(\lvert x\rvert\).

Now, define a PROP \(\mathcal E\) where we interpret \(V^{\otimes 0}:=D\), and
\[
\mathcal E(n,m):=\operatorname{Hom}_D(V^{\otimes n},V^{\otimes m}).
\]
Sequential composition is given by the usual composition of \(D\)-linear maps.
Tensor product is given by the usual Kronecker tensor product with respect to our basis \(\lvert x\rangle\), \(x\in\{0,1\}^3\).

Finally, we take our structural generators to be the usual ones:
\[
I=\operatorname{id}_V,\qquad
\sigma(\lvert x\rangle\otimes\lvert y\rangle)
=
\lvert y\rangle\otimes\lvert x\rangle,
\]
\[
\eta(1)=\sum_{x\in\{0,1\}^3}\lvert x\rangle\otimes\lvert x\rangle,
\qquad
\varepsilon(\lvert x\rangle\otimes\lvert y\rangle)=\delta_{x,y},
\qquad
e=1\in D.
\]
For computational ease and clarity of matrix elements, we use the basis vectors
\[
\lvert x\rangle,\qquad x\in\{0,1\}^3,
\]
ordered lexicographically when writing a matrix.

Now, let us define the interpreted Z-spiders, Hadamard, and subsequently X-spiders built from them.

\subsection{Defining the Hadamard in \texorpdfstring{\(\mathcal E\)}{E}}
\label{sec:defining-hadamard-e}

By previous work on minimality in the stabilizer fragment, we want \(H'\) to satisfy:
\begin{equation}
\label{eq:hadamard-target}
(H')^T=H' \quad \& \quad (H')^2=I.
\end{equation}
In Stoltz~\cite{stoltz2026simplified}, this matrix was built using a ``nilpotent leak'' in the underlying ring, in such a way that it retained properties~\eqref{eq:hadamard-target}, which was enough to satisfy all non-target rules but break the bialgebra.
This time, we perturb the usual three-fold Hadamard by conjugation.
\[
h=\frac{1}{\sqrt 2}
\begin{pmatrix}
1&1\\
1&-1
\end{pmatrix},
\qquad
H_0:=h^{\otimes 3}.
\]
Note that our unperturbed Hadamard \(H_0\) is the \(8\)-dimensional Sylvester--Walsh Hadamard matrix.

Now, let us define a matrix \(N\), with the goal of skew-rotating the weight-one subspace defined by the basis vectors \(\lvert 100\rangle,\lvert 010\rangle,\lvert 001\rangle\). Let \(N\) act only on these basis vectors and nowhere else:
\[
N\lvert 100\rangle=\lvert 010\rangle-\lvert 001\rangle,
\]
\[
N\lvert 010\rangle=\lvert 001\rangle-\lvert 100\rangle,
\]
\[
N\lvert 001\rangle=\lvert 100\rangle-\lvert 010\rangle,
\]
and
\[
N\lvert x\rangle=0
\]
for all other \(x\in\{0,1\}^3\).

Then define
\[
O:=I+\epsilon N.
\]
This should be thought of as preserving all vectors in \(\{0,1\}^3\), except adding a small \(\epsilon\)-skew rotation from \(N\) on the basis vectors \(\lvert 100\rangle,\lvert 010\rangle,\lvert 001\rangle\).
Then define
\[
H':=O^{-1}H_0O.
\]
Note that
\[
O^{-1}=I-\epsilon N
\]
since \(\epsilon^2=0\). Hence
\[
H'=H_0+\epsilon(H_0N-NH_0).
\]

Now, since \(N^T=-N\), we have
\[
O^T=O^{-1}.
\]
Therefore
\[
(H')^T
=
(O^{-1}H_0O)^T
=
O^TH_0^T(O^{-1})^T
=
O^{-1}H_0O
=
H'.
\]
Also,
\[
(H')^2
=
O^{-1}H_0OO^{-1}H_0O
=
O^{-1}H_0^2O
=
I,
\]
as desired.
Now that we have constructed such a Hadamard, we will see that ``only connectivity matters'' will hold (Lemma~\ref{lem:connectivity-soundness}).

\subsection{Defining the Green Spiders in \texorpdfstring{\(\mathcal E\)}{E}}
\label{sec:defining-green-spiders-e}

Now, let us turn our focus to the interpretation of the Z-spiders.

In Stoltz~\cite{stoltz2026simplified}, Z-spiders were defined as the usual basis-copying tensor spiders, decorated by phases from a tuple \(t\).

We want a tuple \(t_x\), \(x\in\{0,1\}^3\), such that spider fusion becomes pointwise multiplication of decorations
\begin{equation}
\label{eq:phase-character}
 t_x(\alpha)t_x(\beta)=t_x(\alpha+\beta).
\end{equation}
These can be taken to be characters of the additive phase group \((\mathbb R,+)\), which settles this property precisely.

Define
\[
t_x(\alpha):=e^{i\lvert x\rvert\alpha},
\qquad x\in\{0,1\}^3.
\]
This parametrized phase family satisfies~\eqref{eq:phase-character}.
Also note that the usual qubit phase gate
\[
P(\alpha)=
\begin{pmatrix}
1&0\\
0&e^{i\alpha}
\end{pmatrix}
\]
has the property that \(P(\alpha)^{\otimes 3}\) is diagonal on the basis \(\lvert x\rangle\), with diagonal entry \(e^{i\lvert x\rvert\alpha}\).

\noindent\textbf{Definition.} Now, with this motivation, we define the interpreted Z-spider:
\[
(Z')^\alpha_{n,m}
:=
\sum_{x\in\{0,1\}^3}
e^{i\lvert x\rvert\alpha}
\lvert x\rangle^{\otimes m}
\langle x\rvert^{\otimes n}.
\]

\subsection{Defining the Red Spiders in \texorpdfstring{\(\mathcal E\)}{E}}
\label{sec:defining-red-spiders-e}

\noindent\textbf{Definition.} Finally, we define the interpreted X-spider in the usual way:
\[
 (X')^\alpha_{n,m}:=(H')^{\otimes m}\circ (Z')^\alpha_{n,m}\circ (H')^{\otimes n}.
\]

\subsection{Defining the Interpretation Functor}
\label{sec:defining-interpretation-functor}

Define the \(D\)-linear interpretation
\[
\sem{-}^{(B)}:V^{\otimes n}\longrightarrow V^{\otimes m}
\]
inductively as follows:
\[
\sem{D_1\otimes D_2}^{(B)}=\sem{D_1}^{(B)}\otimes \sem{D_2}^{(B)},
\qquad
\sem{D_2\circ D_1}^{(B)}=\sem{D_2}^{(B)}\circ \sem{D_1}^{(B)}.
\]
\[
\sem{\zxfig{gn-alpha}}^{(B)}
= (Z')^\alpha_{n,m}:n\to m.
\]
\[
\sem{\zxfig{Hadamard}}^{(B)}=H':1\to1.
\]
\[
\sem{\zxfig{rn-alpha}}^{(B)}
= (X')^\alpha_{n,m}:n\to m.
\]

All structural generators are interpreted as the standard versions outlined above.

\begin{lemma}
\label{lem:connectivity-soundness}
\(\sem{-}^{(B)}\) satisfies the ``only connectivity matters'' meta-rule.
\end{lemma}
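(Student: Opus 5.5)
To prove the lemma I would use the standard characterisation of ``only connectivity matters'' for PROPs with a compact structure. It holds as soon as three things are true. First, the target is a symmetric monoidal category with a self-dual compact structure given by the interpreted cup and cap. Second, every generator is interpreted as a morphism that is invariant under permutation of its inputs and of its outputs; in other words it is \emph{symmetric}, i.e.\ commutes with the braidings. Third, every generator is invariant under bending wires with the cups and caps; in other words it is \emph{flexsymmetric}, so its transpose with respect to \(\eta,\varepsilon\) is again the same spider, or \(H'\) respectively.

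The plan therefore splits into three checks, following the argument used in Stoltz~\cite{stoltz2026simplified} for the stabilizer case.

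\textbf{Structural layer.} Since \(\mathcal E\) is the category of free \(D\)-modules of finite rank with Kronecker tensor product, \(\sigma\), \(\eta\) and \(\varepsilon\) satisfy the usual symmetric monoidal coherence and the snake equations. This is the same verification as over \(\mathbb C\), because \(\varepsilon\circ\eta\)-type contractions only use \(\delta_{x,y}\).

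\textbf{Green spiders.} The map \((Z')^\alpha_{n,m}\) is a sum of terms \(e^{i|x|\alpha}|x\rangle^{\otimes m}\langle x|^{\otimes n}\). It is therefore visibly invariant under permuting the tensor factors. Bending one input into an output with \(\eta\) sends \(\langle x|\) to \(|x\rangle\), using \(\sum_y |y\rangle\langle y|x\rangle=|x\rangle\). This yields \((Z')^\alpha_{n-1,m+1}\), and the reverse bend works the same way. Hence the green spiders are flexsymmetric.

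\textbf{Hadamard and red spiders.} For \(H'\), bending a wire produces the transpose \((H')^T\), which equals \(H'\) by the computation \eqref{eq:hadamard-target} above. This uses \(N^T=-N\), so that \(O^T=O^{-1}\). Reversing the direction of an \(H'\) wire is therefore harmless. The red spider \((X')^\alpha_{n,m}\) is \((Z')^\alpha_{n,m}\) with an \(H'\) on every leg. Symmetry follows because permuting legs commutes with \((H')^{\otimes k}\). For flexsymmetry, bending a leg moves an \(H'\) through a cup; by naturality of the cup, \((H'\otimes I)\eta=(I\otimes (H')^T)\eta=(I\otimes H')\eta\). So the bent diagram is again \(Z'\) conjugated leg-wise by \(H'\), i.e.\ a red spider of the new type.

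Finally, I would invoke the general theorem (e.g.\ Heunen--Vicary~\cite{10.1093/oso/9780198739623.001.0001}). For a self-dual compact PROP whose generators are all flexsymmetric, two diagrams with the same underlying open graph are equal. This gives the lemma.

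The only real obstacle is the Hadamard. The perturbation \(\epsilon(H_0N-NH_0)\) could in principle break the symmetry of \(H'\) under transposition. That would make the orientation of an \(H\)-box, and hence its bending through cups, matter. The key step is therefore the identity \((H')^T=H'\), which rests on the antisymmetry of \(N\). I would also double-check that the cup and cap are plain \(\delta\)-pairings, not twisted by \(O\), so that ``transpose'' really means \(T\) with respect to the lexicographic basis.
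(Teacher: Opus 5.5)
Your proposal is correct and follows essentially the same route as the paper: the standard snake and symmetry equations for the structural generators, permutation- and transposition-invariance of \((Z')^\alpha_{n,m}\), \((H')^T=H'\) from \(O^T=O^{-1}\) (via \(N^T=-N\)), and red spiders inheriting these properties from their definition as \(H'\)-conjugated green spiders, followed by the general connectivity theorem for self-dual compact PROPs. You are slightly more explicit than the paper: you check single-leg bending and slide \(H'\) through the cup via \((H'\otimes I)\eta=(I\otimes H')\eta\). You also rightly rely only on \((H')^T=H'\), whereas the paper additionally cites \((H')^2=I\), which this lemma does not need.
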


\begin{proof}
Since our structural generators are standard with respect to the basis \(\{\lvert x\rangle:x\in\{0,1\}^3\}\), we have the yanking/snake equations:
\[
\forall x\in\{0,1\}^3,\qquad
(\varepsilon\otimes I)(I\otimes\eta)(\lvert x\rangle)
=
\lvert x\rangle
=
(I\otimes\varepsilon)(\eta\otimes I)(\lvert x\rangle).
\]
The braiding \(\sigma\) is symmetric too.

Now, the Hadamard was chosen such that
\[
(H')^T=H' \qquad\text{and}\qquad (H')^2=I,
\]
so Hadamards can be bent through the compact structure.

As for the green spiders, \((Z')^\alpha_{n,m}\) is flexsymmetric: it is invariant under permutation of its input and output wires.
Also, one can check that
\[
\bigl((Z')^\alpha_{n,m}\bigr)^T=(Z')^\alpha_{m,n}.
\]
Finally, the red spiders inherit these properties by definition.

So, any diagram constructed from our generators depends only on their connectivity.
\end{proof}

\subsection{Necessity of the Bialgebra}
\label{sec:necessity-of-the-bialgebra}

\begin{theorem}
\label{thm:b-rule-necessity}
The \((B)\) rule is necessary:
\[
\zxopt\setminus\{(B)\}\nvdash (B).
\]
\end{theorem}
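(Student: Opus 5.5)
The plan is the standard separating-model argument. We use the interpretation $\sem{-}^{(B)}$ into the PROP $\mathcal E$ constructed above. We show that every rule of $\zxopt\setminus\{(B)\}$ is sound for it, together with the connectivity meta-rule (Lemma~\ref{lem:connectivity-soundness}). We then show that the two sides of $(B)$ receive different interpretations. Since derivability in $\zxopt\setminus\{(B)\}$ is closed under sequential and parallel composition, and $\sem{-}^{(B)}$ is a monoidal functor, any derivable equation would be sound. Hence $(B)$ is not derivable.

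\textbf{Soundness of the remaining rules.} We check each rule in turn.
\begin{itemize}
\item \emph{(S) and $(I_g)$.} These hold because $(Z')^\alpha$ is the basis-copying spider with decoration $t_x(\alpha)=e^{i|x|\alpha}$, and these decorations satisfy~\eqref{eq:phase-character}. Fusing along one or more wires gives $\sum_x t_x(\alpha)t_x(\beta)\ket{x}^{\otimes m}\bra{x}^{\otimes n}$, and $(Z')^0_{1,1}=I$.
\item \emph{(H).} This is immediate from the definition of $(X')$ and $(H')^2=I$.
\item \emph{(HD), (EU) and (E).} All three involve only single-wire diagrams or scalars. We reduce them via $H'=O^{-1}H_0O$. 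The point is that $O$ commutes with every $(Z')^\alpha_{1,1}$, because $O$ acts only inside the weight-one subspace, where $Z'$ is the scalar $e^{i\alpha}$. Consequently every $1\to1$ diagram built from $Z'$ and $H'$ equals $O^{-1}(\,\cdot\,)O$ applied to the corresponding diagram built from $P(\alpha)^{\otimes 3}$ and $h^{\otimes 3}$. That latter diagram is the threefold tensor power of the standard interpretation, so the qubit soundness of (HD) and (EU) transfers.
\item \emph{Scalars.} For (E), (CP) and the scalar parts of (EU), we compute $\sem{\cdot}^{(B)}$ of the small closed diagrams. A scalar diagram containing a red spider involves $H'$ contracted with $Z'$ states. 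Conjugation by $O$ disappears whenever $O$ meets a $Z'$ state or effect of arity $\ge1$ through the trace structure, since $O^T=O^{-1}$ and $Z'$ is $O$-invariant in the appropriate sense. So these scalars equal the cube of the standard values, and the standard value $1$ cubes to $1$. The (E) scalar, for instance, becomes $1^3=1$.
\item \emph{(CP).} This involves an arity-two green spider, so we verify it by direct computation. Its equation says that a red state fed into a green copy map yields two red states. $(X')^0_{0,1}=H'\sum_x\ket{x}=H'\cdot 2\sqrt2\,\ket{000}$ up to $H_0$. Since $N\ket{000}=0$ and $\bra{000}N=0$, we get $H'(\sum_x\ket x)=O^{-1}H_0\sum_x\ket{x}$, and $O^{-1}$ fixes $\ket{000}$. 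The red state is therefore undeformed, and copying holds as in $\mathbb C^8$.
\end{itemize}

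\textbf{Failure of (B).} We evaluate both sides of $(B)$ on a specific basis input and compare $\epsilon$-coefficients. At $\epsilon=0$ everything is the threefold tensor power of the qubit calculus, so (B) holds there. The $\epsilon$-part, however, involves $H_0N-NH_0$ feeding into arity-three spiders. The left side contains arity-three red spiders, whose three copies of $H'$ produce nonvanishing first-order terms through $N$ on weight-one strings. I would pick input $\ket{100}\otimes\ket{000}$ and an output coefficient such as $\bra{010}\otimes\bra{000}$. I would compute the left side: a red spider of arity three, meaning $H'^{\otimes 2}$ composed with a green copy and a green merge, with the wires crossed. The right side is a green merge into a red copy. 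The right side's $\epsilon$-term involves only two factors of $H'$ on a single path, so the two sides' first-order terms differ. This direct calculation is the main obstacle.

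I expect the delicate points to be confirming that (EU), whose arbitrary scalar $\gamma$ part is a closed diagram, really transfers by the $O$-invariance argument, and choosing matrix entries in (B) where the $\epsilon$ coefficient provably differs. If the hand computation is messy, I would instead exhibit a single scalar contraction of both sides, for example plugging in suitable $Z'$ states and effects, that yields $c$ versus $c+k\epsilon$ with $k\neq0$.
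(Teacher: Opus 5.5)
Your overall plan matches the paper's: the same model $\mathcal E$, soundness of the remaining rules, then one matrix entry where the two sides of (B) differ. The gap is that the decisive step is never carried out, and the input you propose provably cannot separate the two sides.

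Since the green copies in $\mathcal E$ are exact, take an input $\lvert a\rangle\otimes\lvert b\rangle$. The left side of (B) is $2\sqrt2\,v\otimes v$, where $2\sqrt2=u^TH'u$ (with $u=\sum_x\lvert x\rangle$) is the attached scalar. The right side is $\sum_y v_y\,\lvert y\rangle\otimes\lvert y\rangle$. Here $v=(X')^0_{2,1}(\lvert a\rangle\otimes\lvert b\rangle)=H'\bigl(H'\lvert a\rangle\odot H'\lvert b\rangle\bigr)$, and $\odot$ is the entrywise product.

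For your input, $b=000$ and $H'\lvert000\rangle=\tfrac{1}{2\sqrt2}u$ exactly, because $N\lvert000\rangle=0=Nu$. Since $u$ is the unit of $\odot$, you get $v=\tfrac{1}{2\sqrt2}(H')^2\lvert a\rangle=\tfrac{1}{2\sqrt2}\lvert a\rangle$, with no $\epsilon$-term. So the two sides agree on $\lvert100\rangle\otimes\lvert000\rangle$ in every output coefficient; your entry gives $0=0$. Counting factors of $H'$ does not establish anything here.

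Your fallback fails for a similar reason. One-legged $Z'$ and $X'$ states are $S_3$-symmetric product vectors. This class is fixed by $O$ and closed under $H_0$ and $\odot$. So with such inputs the two sides of (B) already give identical outputs, whatever you contract them with.

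What is needed is the following. Write $v=\tfrac{1}{2\sqrt2}\lvert a\oplus b\rangle+\epsilon v_1$. The first-order part of the left side is $\lvert a\oplus b\rangle\otimes v_1+v_1\otimes\lvert a\oplus b\rangle$, while the right side is always diagonal. Hence (B) fails as soon as $(v_1)_y\neq0$ for some $y\neq a\oplus b$, read off at $\langle a\oplus b|\otimes\langle y|$. The paper takes $a=b=010$. Then $H'\lvert010\rangle=H_0\lvert010\rangle+\tfrac{\epsilon}{\sqrt2}(\lvert110\rangle-\lvert011\rangle)$, and the entry at $\langle000|\otimes\langle001|$ is $-\tfrac{\sqrt2}{4}\epsilon$ on the left versus $0$ on the right.

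There is also a secondary problem with your scalar checks. Your blanket justification, that $Z'$ is ``$O$-invariant in the appropriate sense'', is false for spiders of arity $\ge3$; that failure is exactly what breaks (B). If it held for all closed diagrams, no scalar contraction could ever separate the sides of (B), which contradicts your own fallback.

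The (EU) dumbbell contains two arity-3 spiders, so it needs its own check. Its value is $\sum_{x,y}(H'_{yx})^3$. Because every entry of $H_0$ squares to $\tfrac18$, its $\epsilon$-part is $\tfrac38u^T(H_0N-NH_0)u$, which vanishes since $Nu=0=N\lvert000\rangle$.

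Likewise, (HD) is not a single-wire diagram. Its three-legged red spider is handled by noting that $H't(-\pi/2)=H_0t(-\pi/2)$ is constant on the weight-one block, so the corresponding diagonal matrix commutes with $O$.
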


\noindent\textbf{Proof.} All rules but \((B)\) are sound with respect to the interpretation \(\sem{-}^{(B)}\).
See \hyperref[app:b-rule-soundness-checks]{Appendix B} for soundness checks.

It is enough to find one basis vector and one coefficient on which the two sides disagree.
In the case of \((B)\), evaluating both sides on
\[
\lvert 010\rangle\otimes\lvert 010\rangle
\]
and comparing the
\[
\lvert 000\rangle\otimes\lvert 001\rangle
\]
coefficient gives
\begingroup
\small
\[
\begin{aligned}
(\bra{000}\otimes\bra{001})\cdot
\Bigl(\sem{\zxbig{B-LHS-only}}^{(B)}(\ket{010}\otimes\ket{010})\Bigr)
&=-\frac{\sqrt 2}{4}\epsilon,\\
(\bra{000}\otimes\bra{001})\cdot
\Bigl(\sem{\zxbig{B-RHS-only}}^{(B)}(\ket{010}\otimes\ket{010})\Bigr)
&=0.
\end{aligned}
\]
\endgroup
Clearly these matrices are not the same. Hence, the bialgebra rule \((B)\) is necessary!
\[
\text{Thus, }\zxopt\setminus\{(B)\}\nvdash(B).
\]
\tombstone

\section{On the Necessity of the Green Identity \texorpdfstring{\((I_g)\)}{(Ig)} rule}
\label{sec:necessity-green-identity-ig-rule}

We will introduce a relational interpretation
\(\sem{-}^{\mathrm{rel}}\) devised by BPW~\cite{Backens_2020}, and extend it
to a new one, \(\sem{-}^{(I_g)}\) to break \((I_g)\).
This technique avoids scalar normalization altogether.

\bigskip

Given a relation \(R:A\to B\), we can
assign it a corresponding logical matrix \(M_R\),
where rows are indexed by elements of \(A\),
and columns are indexed by elements of \(B\), by
convention,
\begin{itemize}
    \item If elements \(a_i\sim b_j\), then \((M_R)_{ij}=1\).
    \item If elements \(a_i\nsim b_j\), then \((M_R)_{ij}=0\).
\end{itemize}

Then, for two logical matrices \(M,N\) over \(\bool\), we have two standard ways of composing them:
\begin{itemize}
\item \textbf{sequential composition.}

If \(M:\bool^r\to\bool^s\) and \(N:\bool^s\to\bool^t\), then
\[
N\circ M:\bool^r\to\bool^t,
\qquad
(N\circ M)_{a,c}:=\bigvee_{b\in \bool^s}(M_{a,b}\wedge N_{b,c})
\]
\[
(\text{where }a\in\bool^r,\ c\in\bool^t,\text{ and }\vee,\wedge\text{ denote logical OR and AND, respectively})
\]

\item \textbf{spatial composition.}

If \(M:\bool^r\to\bool^s\) and \(N:\bool^{r'}\to\bool^{s'}\), then
\[
M\otimes N:\bool^{r+r'}\to\bool^{s+s'},
\qquad
(M\otimes N)_{(a,c),(b,d)}:=M_{a,b}\wedge N_{c,d}.
\]
\[
(\text{this is the usual Kronecker tensor product})
\]
\end{itemize}

Now, for a diagram \(D:n\to m\), we define
\[
\sem{D}^{\mathrm{rel}}:\bool^n\to\bool^m
\]
inductively as follows:
\[
\sem{D_1\otimes D_2}^{\mathrm{rel}}=\sem{D_1}^{\mathrm{rel}}\otimes\sem{D_2}^{\mathrm{rel}}
\qquad
\sem{D_2\circ D_1}^{\mathrm{rel}}=\sem{D_2}^{\mathrm{rel}}\circ\sem{D_1}^{\mathrm{rel}}.
\]
\[
\sem{\zxfig{empty-diagram}}^{\mathrm{rel}}=1,
\qquad
\sem{\zxfig{single-line}}^{\mathrm{rel}}=
\begin{pmatrix}1&0\\0&1\end{pmatrix},
\qquad
\sem{\zxfig{crossing}}^{\mathrm{rel}}=
\begin{pmatrix}
1&0&0&0\\
0&0&1&0\\
0&1&0&0\\
0&0&0&1
\end{pmatrix}.
\]
\[
\sem{\zxfig{cup}}^{\mathrm{rel}}=(1\ 0\ 0\ 1),
\qquad
\sem{\zxfig{caps}}^{\mathrm{rel}}=
\begin{pmatrix}1\\0\\0\\1\end{pmatrix}.
\]
These are our structural generators.

\bigskip

Now, we will extend this setup to \(\sem{-}^{(I_g)}\), our
new interpretation.

\bigskip

First, let us examine the properties we want to
preserve and break.

\bigskip

Given spiders \(Z^\alpha_{n,m}:n\to m\) and \(X^\alpha_{n,m}:n\to m\),
we write \(d(n,m):=n+m\) for the total degree.
\(d(n,m)=n+m=2\) is a special case, since this
matches the arities of the compact structure: \((1,1),(0,2),(2,0)\).

\bigskip

We want to isolate and separate the case
when \(\alpha=0\) and \(n+m=2\).

\bigskip

In this case, we want three things:
\begin{itemize}
    \item For the connectivity meta-rule to hold.
    \item For the \((I_r)\) rule to hold: zero phase degree 2 red spiders must behave like a wire, cap, and cup.
    \item For the \((I_g)\) rule to break.
\end{itemize}

Now, whenever \(n+m=2\), we write the \textbf{two-point relation}:
\[
C_{n,m}:=\{(0^n,0^m),(1^n,1^m)\}\subseteq \bool^n\times\bool^m.
\]
This gives three cases:
\[
C_{1,1}=
\begin{pmatrix}1&0\\0&1\end{pmatrix}
=\sem{\zxfig{single-line}}^{\mathrm{rel}}
\]
\[
C_{0,2}=(1\ 0\ 0\ 1)
=\sem{\zxfig{cup}}^{\mathrm{rel}}
\]
\[
C_{2,0}=
\begin{pmatrix}1\\0\\0\\1\end{pmatrix}
=\sem{\zxfig{caps}}^{\mathrm{rel}}
\]

For arbitrary \(n,m\), we write the \textbf{one-point relation}:
\[
P_{n,m}:=\{(0^n,0^m)\}\subseteq \bool^n\times\bool^m.
\]
In the \(n+m=2\) case, we get:
\[
P_{1,1}=
\begin{pmatrix}1&0\\0&0\end{pmatrix}
\ne \sem{\zxfig{single-line}}^{\mathrm{rel}},
\]
\[
P_{0,2}=(1\ 0\ 0\ 0)
\ne \sem{\zxfig{cup}}^{\mathrm{rel}},
\]
\[
P_{2,0}=
\begin{pmatrix}1\\0\\0\\0\end{pmatrix}
\ne \sem{\zxfig{caps}}^{\mathrm{rel}}.
\]

For both \(C_{n,m}\) and \(P_{n,m}\) we have:
\[
P_{n,m}^T=P_{m,n}\quad \& \quad C_{n,m}^T=C_{m,n}
\]
so these are good candidates.

Hence, for a diagram \(D:n\to m\), we will define the interpretation
\[
\sem{D}^{(I_g)}:\bool^n\to\bool^m
\]
inductively:
\[
\sem{D_1\otimes D_2}^{(I_g)}=\sem{D_1}^{(I_g)}\otimes\sem{D_2}^{(I_g)},
\qquad
\sem{D_2\circ D_1}^{(I_g)}=\sem{D_2}^{(I_g)}\circ\sem{D_1}^{(I_g)}.
\]
\[
\sem{\zxfig{empty-diagram}}^{(I_g)}=\sem{\zxfig{empty-diagram}}^{\mathrm{rel}},
\qquad
\sem{\zxfig{single-line}}^{(I_g)}=\sem{\zxfig{single-line}}^{\mathrm{rel}},
\qquad
\sem{\zxfig{crossing}}^{(I_g)}=\sem{\zxfig{crossing}}^{\mathrm{rel}},
\]
\[
\sem{\zxfig{cup}}^{(I_g)}=\sem{\zxfig{cup}}^{\mathrm{rel}},
\qquad
\sem{\zxfig{caps}}^{(I_g)}=\sem{\zxfig{caps}}^{\mathrm{rel}}.
\]

We will ``break'' the green spiders and Hadamard together:
\[
\sem{\zxfig{gn-alpha}}^{(I_g)}=P_{n,m},
\qquad
\sem{\zxfig{Hadamard}}^{(I_g)}=P_{1,1}.
\]
And modify the red spider with the exception:
\[
\sem{\zxfig{rn-alpha}}^{(I_g)}=
\begin{cases}
C_{n,m} & \text{, }\alpha=0\text{ and }n+m=2,\\
P_{n,m} & \text{else.}
\end{cases}
\]

Note that when we are not in the \(\alpha=0\) and \(n+m=2\) case:
\[
\sem{\zxfig{rn-alpha}}^{(I_g)}=
\left(\sem{\zxfig{Hadamard}}^{(I_g)}\right)^{\otimes m}
\circ \sem{\zxfig{gn-alpha}}^{(I_g)}
\circ \left(\sem{\zxfig{Hadamard}}^{(I_g)}\right)^{\otimes n}.
\]

\bigskip

\refstepcounter{theorem}
\noindent\textbf{Lemma~\thetheorem.} \(\sem{-}^{(I_g)}\) satisfies the ``only connectivity
matters'' meta-rule.

\medskip

\noindent\textbf{Proof.} Since our structural generators are
standard with respect to the usual ones,
the yanking / snake equations hold and
we have the usual symmetry of the braiding.

Now, since \(P_{n,m}^T=P_{m,n}\) and \(C_{n,m}^T=C_{m,n}\), and since both \(P_{n,m}\) and \(C_{n,m}\) are invariant
under permutations of their inputs / outputs, we get:
\[
\left(\sem{\zxfig{gn-alpha}}^{(I_g)}\right)^T
=P_{n,m}^T=P_{m,n}=\sem{Z^\alpha_{m,n}}^{(I_g)}
\]
\[
\left(\sem{\zxfig{Hadamard}}^{(I_g)}\right)^T
=P_{1,1}^T=P_{1,1}=\sem{\zxfig{Hadamard}}^{(I_g)}
\]
\[
\left(\sem{\zxfig{rn-alpha}}^{(I_g)}\right)^T
=
\begin{cases}
C_{n,m}^T & \text{, }\alpha=0\text{ and }n+m=2,\\
P_{n,m}^T & \text{else}
\end{cases}
=
\begin{cases}
C_{m,n} & \text{, }\alpha=0\text{ and }n+m=2,\\
P_{m,n} & \text{else}
\end{cases}
=\sem{X^\alpha_{m,n}}^{(I_g)}.
\]
Therefore, any diagram built from our
generators in the relational PROP depends
only on their connectivity.
\tombstone

\bigskip

\begin{theorem}
\label{thm:ig-rule-necessity}
The \((I_g)\) rule is necessary:
\[
\zxopt\setminus\{(I_g)\}\nvdash (I_g)
\]
\end{theorem}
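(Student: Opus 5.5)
We take the relational interpretation $\sem{-}^{(I_g)}$ defined above as a model: every rule of $\zxopt\setminus\{(I_g)\}$ will be sound for it, and $(I_g)$ will not be. Since derivations in $\zxopt\setminus\{(I_g)\}$ use only these rules, the connectivity meta-rule (the preceding lemma) and compositional closure, $(I_g)$ cannot then be derived. The failure of $(I_g)$ is immediate. The left-hand side is a zero-phase green spider of arity $(1,1)$, so it is interpreted as $P_{1,1}=\begin{pmatrix}1&0\\0&0\end{pmatrix}$. The right-hand side is the bare wire, interpreted as $\begin{pmatrix}1&0\\0&1\end{pmatrix}$, and the two differ.

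For soundness we use two facts about the model. First, any connected composite of $P$-relations built with the standard structural maps is again the one-point relation $P$ on its boundary, because only the all-zero assignment survives. Second, every scalar built from $P$'s equals $1$, which is the interpretation of the empty diagram. With these we go through the rules.

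\textbf{(S).} Both sides are green spiders, hence $P$-relations. Fusing along at least one wire gives $P_{n,m}$, which matches the right-hand side.

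\textbf{(E).} A connected red--green scalar with phases $\pm\pi/4$ evaluates to $1$, which is the empty diagram.

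\textbf{(CP).} The zero-phase red state has degree $1$, so it is a $P$ and not the exceptional $C$. Plugging it into a green copy spider gives $P_{0,2}$, which equals $P_{0,1}\otimes P_{0,1}$. The scalar factors are $1$ on both sides.

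\textbf{(B).} All spiders involved have degree $3$, so both sides reduce to $P_{2,2}$, again with scalar factors $1$.

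\textbf{(HD).} The red spider in the middle has degree $3$. All pieces are therefore $P$'s, and both sides equal $P_{1,1}$, which is the interpretation of $H$.

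The rules that need real care are $(H)$ and $(EU)$, because they can contain a zero-phase red spider of degree $2$, which is interpreted as the exceptional $C_{n,m}$.

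\textbf{(H).} Outside the exceptional case, the identity shown just before the preceding lemma already gives soundness. In the exceptional case, a direct check works because $P_{1,1}$ absorbs $C$. For arity $(1,1)$,
\[
P_{1,1}C_{1,1}P_{1,1}=P_{1,1}.
\]
For the cup, applying $P_{1,1}\otimes P_{1,1}$ to $C_{0,2}$ gives $P_{0,2}$, and the cap case is the same by transposition. In each case the result matches the green side.

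\textbf{(EU).} Every spider appears in a chain of degree-$2$ spiders. Even when $\alpha_2=0$ (or $\beta_1$ or $\beta_3$ is $0$) on one side, that red $C_{1,1}$ sits between green $P_{1,1}$'s. Chains on either side that contain a green or non-exceptional red spider therefore collapse to $P_{1,1}$. The scalar subdiagrams on the right-hand side (the red--green triple-edge pair and the red $\pi$ connected to the green $\gamma$) each evaluate to $1$.

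The case I expect to be the main obstacle is when $\beta_2$ vanishes but $\beta_1$ and $\beta_3$ are chosen so that the whole chain is purely red-exceptional. Here I will check that $\beta_2$ is a green spider in every case, which makes the chain contain at least one $P_{1,1}$. Since $\beta_2$ is green regardless of its value, every chain contains a green spider, and this suffices.

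With all other rules sound and $(I_g)$ refuted, we conclude $\zxopt\setminus\{(I_g)\}\nvdash(I_g)$.
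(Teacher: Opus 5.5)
Your proposal is correct and follows the paper's proof. You use the same relational countermodel $\sem{-}^{(I_g)}$, check soundness rule by rule with the same care for the exceptional $C_{n,m}$ red spider in $(H)$ and $(EU)$, and refute $(I_g)$ because the zero-phase green wire is $P_{1,1}$ rather than the identity relation. Your ``connected composites of one-point relations are one-point'' observation simply packages the paper's identities $P_{k,m}\circ P_{n,k}=P_{n,m}$ and $P_{n,m}\otimes P_{n',m'}=P_{n+n',m+m'}$, and you rightly omit a check of $(I_r)$, which is not a rule of \zxopt.
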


\noindent\textbf{Proof.} All rules but \((I_g)\) are sound with respect to
the interpretation \(\sem{-}^{(I_g)}\).
See \hyperref[app:ig-rule-soundness-checks]{Appendix B} for details.

Now, in the case of \((I_g)\):
\[
\sem{\zxfig{gn-1-1-zero}}^{(I_g)}
=\begin{pmatrix}1&0\\0&0\end{pmatrix}
\ne
\begin{pmatrix}1&0\\0&1\end{pmatrix}
=\sem{\zxfig{single-line}}^{(I_g)}
\]
So, \((I_g)\) is indeed necessary!
\tombstone

\section{Another Optimized Ruleset \texorpdfstring{\(\zxoptprime\)}{ZX\_opt'}}
\label{sec:another-optimized-ruleset}

Now, we introduce a second ruleset adapted from Vilmart~\cite{8785765}, \(\zxoptprime\). We write \(\zxvprime\) for Vilmart's original complete ruleset, so
\[
\zxoptprime=\zxvprime\setminus\{(I_r)\}.
\]
We use the same generators as in \hyperlink{tab:zx-generators}{Figure~1}, keep the usual interpretation, and use only the connectivity meta-rule of Definition~\ref{def:connectivity-meta-rule}, as before.

\begin{center}
\begingroup
\setlength{\tabcolsep}{4pt}
\renewcommand{\arraystretch}{1.25}
\resizebox{0.725\textwidth}{!}{%
\begin{tabular}{|c|}
\hline
\strut\\
\hypertarget{rule:zxoptprime-S}{\sOneTableDiagram}\qquad\quad~\hypertarget{rule:zxoptprime-Ig}{}\qquad\quad~\hypertarget{rule:zxoptprime-IV}{\begin{tikzpicture}
	\begin{pgfonlayer}{nodelayer}
		\node [style=gn] (0) at (-0.75, -0.25) {};
		\node [style=gn] (1) at (-1.25, -0.25) {$\alpha$};
		\node [style=none] (2) at (0, 0) {=};
		\node [style=rn] (3) at (-0.75, 0.25) {};
		\node [style=rn] (4) at (-1.25, 0.25) {};
		\node [style=none] (5) at (0, -0.25) {(IV)};
		\node [style=none] (6) at (0.5, 0.25) {};
		\node [style=none] (7) at (0.5, -0.25) {};
		\node [style=none] (8) at (1, 0.25) {};
		\node [style=none] (9) at (1, -0.25) {};
	\end{pgfonlayer}
	\begin{pgfonlayer}{edgelayer}
		\draw [style=none, bend left=45, looseness=1.25] (0) to (3);
		\draw [style=none, bend right=45, looseness=1.25] (0) to (3);
		\draw [style=none] (3) to (0);
		\draw [style=none] (1) to (4);
		\draw [style=dashed] (6.center) to (7.center);
		\draw [style=dashed] (9.center) to (7.center);
		\draw [style=dashed] (6.center) to (8.center);
		\draw [style=dashed] (8.center) to (9.center);
	\end{pgfonlayer}
\end{tikzpicture}}\\[1.25em]
\hypertarget{rule:zxoptprime-CP}{\begin{tikzpicture}
	\begin{pgfonlayer}{nodelayer}
		\node [style=gn] (0) at (0.75, 0) {};
		\node [style=none] (1) at (2.25, -0.25) {};
		\node [style=none] (2) at (0.5, -0.5) {};
		\node [style=rn] (3) at (2.25, 0.25) {};
		\node [style=none] (4) at (1, -0.5) {};
		\node [style=rn] (5) at (0.75, 0.5) {};
		\node [style=rn] (6) at (2.75, 0.25) {};
		\node [style=none] (7) at (2.75, -0.25) {};
		\node [style=none] (8) at (1.5, 0) {$=$};
		\node [style=rn] (9) at (0, 0.25) {};
		\node [style=gn] (10) at (0, -0.25) {};
		\node [style=none] (11) at (1.5, -0.25) {(CP)};
	\end{pgfonlayer}
	\begin{pgfonlayer}{edgelayer}
		\draw [style=none] (5) to (0);
		\draw [style=none, bend right=23, looseness=1.00] (0) to (2.center);
		\draw [style=none, bend left=23, looseness=1.00] (0) to (4.center);
		\draw [style=none] (3) to (1.center);
		\draw [style=none] (6) to (7.center);
		\draw (9) to (10);
	\end{pgfonlayer}
\end{tikzpicture}}\qquad\qquad\hypertarget{rule:zxoptprime-B}{\begin{tikzpicture}
	\begin{pgfonlayer}{nodelayer}
		\node [style=none] (0) at (1.25, 0.75) {};
		\node [style=rn] (1) at (-1.5, -0.25) {};
		\node [style=none] (2) at (0.75, -0.75) {};
		\node [style=none] (3) at (-0.75, 1) {};
		\node [style=none] (4) at (0.75, 0.75) {};
		\node [style=none] (5) at (-1.5, -0.75) {};
		\node [style=none] (6) at (-1.5, 1) {};
		\node [style=gn] (7) at (-0.75, 0.5) {};
		\node [style=none] (8) at (0, -0) {$=$};
		\node [style=rn] (9) at (-0.75, -0.25) {};
		\node [style=gn] (10) at (1, -0.25) {};
		\node [style=gn] (11) at (-1.5, 0.5) {};
		\node [style=none] (12) at (-0.75, -0.75) {};
		\node [style=none] (13) at (1.25, -0.75) {};
		\node [style=rn] (14) at (1, 0.25) {};
		\node [style=rn] (15) at (-2, 0.25) {};
		\node [style=gn] (16) at (-2, -0.25) {};
		\node [style=none] (17) at (0, -0.25) {(B)};
	\end{pgfonlayer}
	\begin{pgfonlayer}{edgelayer}
		\draw [style=none] (12.center) to (9);
		\draw [style=none] (5.center) to (1);
		\draw [style=none] (7) to (3.center);
		\draw [style=none, bend right=23, looseness=1.00] (9) to (7);
		\draw [style=none] (11) to (6.center);
		\draw [style=none, bend left=23, looseness=1.00] (1) to (11);
		\draw [style=none, bend right=23, looseness=1.00] (13.center) to (10);
		\draw [style=none] (10) to (14);
		\draw [style=none, bend left=23, looseness=1.00] (14) to (4.center);
		\draw [style=none, bend right=23, looseness=1.00] (14) to (0.center);
		\draw [bend right=23, looseness=1.00] (10) to (2.center);
		\draw (11) to (9);
		\draw (7) to (1);
		\draw (15) to (16);
	\end{pgfonlayer}
\end{tikzpicture}}\\[1.25em]
\hypertarget{rule:zxoptprime-H}{\begin{tikzpicture}
	\begin{pgfonlayer}{nodelayer}
		\node [style=none] (0) at (0.7500001, 1) {};
		\node [style=none] (1) at (-0.7500001, -1) {};
		\node [style=H box] (2) at (-0.7500001, 0.5000001) {};
		\node [style=rn] (3) at (-1.25, -0) {\footnotesize$~\alpha~$};
		\node [style=H box] (4) at (-1.75, -0.5000001) {};
		\node [style=none] (5) at (-0.7500001, 1) {};
		\node [style=H box] (6) at (-0.7500001, -0.5000001) {};
		\node [style=none] (7) at (-1.25, 0.7500001) {$\cdots$};
		\node [style=none] (8) at (0, -0) {$=$};
		\node [style=gn] (9) at (1.25, -0) {\footnotesize$~\alpha~$};
		\node [style=none] (10) at (1.75, -1) {};
		\node [style=none] (11) at (1.75, 1) {};
		\node [style=none] (12) at (-1.75, 1) {};
		\node [style=H box] (13) at (-1.75, 0.5000001) {};
		\node [style=none] (14) at (-1.75, -1) {};
		\node [style=none] (15) at (0.7500001, -1) {};
		\node [style=none] (16) at (-1.25, -0.7500001) {$\cdots$};
		\node [style=none] (17) at (1.25, 0.7500001) {$\cdots$};
		\node [style=none] (18) at (1.25, -0.7500001) {$\cdots$};
		\node [style=none] (19) at (0, -0.25) {(H)};
	\end{pgfonlayer}
	\begin{pgfonlayer}{edgelayer}
		\draw [bend right, looseness=1.00] (3) to (4);
		\draw [bend left, looseness=1.00] (3) to (6);
		\draw (6) to (1.center);
		\draw (4) to (14.center);
		\draw [bend left, looseness=1.00] (3) to (13);
		\draw [bend right, looseness=1.00] (3) to (2);
		\draw (2) to (5.center);
		\draw (13) to (12.center);
		\draw [bend left=23, looseness=1.00] (9) to (0.center);
		\draw [bend right=23, looseness=1.00] (9) to (11.center);
		\draw [bend right=23, looseness=1.00] (9) to (15.center);
		\draw [bend left=23, looseness=1.00] (9) to (10.center);
	\end{pgfonlayer}
\end{tikzpicture}}\qquad\qquad\hypertarget{rule:zxoptprime-EUprime}{\begin{tikzpicture}
	\begin{pgfonlayer}{nodelayer}
		\node [style=none] (0) at (0, 0) {=};
		\node [style=gn] (1) at (0.75, 0) {$\beta_2$};
		\node [style=rn] (2) at (0.75, 0.55) {$\beta_1$};
		\node [style=rn] (3) at (0.75, -0.55) {$\beta_3$};
		\node [style=gn] (5) at (-0.75, 0.5) {$\alpha_1$};
		\node [style=gn] (6) at (-0.75, -0.5) {$\alpha_2$};
		\node [style=none] (7) at (0.75, 1) {};
		\node [style=none] (8) at (0.75, -1) {};
		\node [style=none] (9) at (-0.75, 1) {};
		\node [style=none] (10) at (-0.75, -1) {};
		\node [style=rn] (11) at (1.5, -0.25) {$\pi$};
		\node [style=gn] (12) at (1.5, -0.75) {$\gamma$};
		\node [style=rn] (13) at (1.5, 0.75) {};
		\node [style=gn] (14) at (1.5, 0.25) {};
		\node [style=none] (15) at (0, -0.25) {\textnormal{(EU')}};
		\node [style={{H box}}] (16) at (-0.75, 0) {};
	\end{pgfonlayer}
	\begin{pgfonlayer}{edgelayer}
		\draw [style=none] (7.center) to (8.center);
		\draw [style=none] (9.center) to (10.center);
		\draw [style=none] (12) to (11);
		\draw [style=none, bend left, looseness=1.25] (14) to (13);
		\draw [style=none, bend right, looseness=1.25] (14) to (13);
		\draw [style=none] (13) to (14);
	\end{pgfonlayer}
\end{tikzpicture}}\\
\strut\\
\hline
\end{tabular}}

\vspace{0.5em}

\begin{minipage}{\textwidth}
\small
\textbf{Figure 3.} Set of rules \zxoptprime for the ZX calculus with scalars. The above labels and rewrites are lightly adapted from Vilmart~\cite{8785765}. The right-hand side of (IV) is an empty diagram. (...) denote zero or more wires, while (\rotatebox{45}{\raisebox{-0.4em}{$\cdots$}}) denote one or more wires. In rule (EU'), $\beta_1,\beta_2,\beta_3$ and $\gamma$ can be determined as follows: $x^+:=\frac{\alpha_1+\alpha_2}{2}$, $x^-:=x^+-\alpha_2$, $z := -\sin{x^+}+i\cos{x^-}$ and $z' := \cos{x^+}-i\sin{x^-}$, then $\beta_1 = \arg z + \arg z'$, $\beta_2 = 2\arg\left(i+\left|\frac{z}{z'}\right|\right)$, $\beta_3 = \arg z - \arg z'$, $\gamma = x^+-\arg(z)+\frac{\pi-\beta_2}{2}$ where by convention $\arg(0):=0$ and $z'=0\implies \beta_2=0$.
\end{minipage}
\endgroup
\end{center}

The necessity arguments of Vilmart~\cite{8785765} apply unchanged to \((S)\), \((H)\), and \((CP)\), since the replacement rules \((IV)\) and \((EU')\) involve only nullary and \(1\to1\) structures. Moreover, \((IV)\) is the only rule in \(\zxoptprime\) with an empty side, and \((EU')\) is the only non-linear rule. Thus, \((S)\), \((IV)\), \((H)\), \((EU')\), and \((CP)\) are all necessary. We consider the remaining rules \((B)\) and \((I_g)\) below.

\subsection{\texorpdfstring{\((I_r)\)}{(Ir)} is derivable in \texorpdfstring{\(\zxoptprime\)}{ZXopt'}}
\label{sec:zxoptprime-ir-rule-derivable}

As with the other ruleset, \((I_r)\) is derivable.

The following derivation of \hyperref[lem:zxoptprime-ir-rule-derivable]{\((I_r)\)} uses only axioms from the \(\mathrm{ZX}_{\mathrm{opt}}'\) rule set, which has some overlap with \(\mathrm{ZX}_{\mathrm{opt}}\), hence we are allowed to use the \hyperref[lem:ir-derivation-L1]{\((H_0)\)} and \hyperref[lem:ir-derivation-L2]{\((HC)\)} lemmas proved in \hyperref[app:ir-rule-derivation]{Appendix A}.

It is worth noting that this derivation for \hyperref[lem:zxoptprime-ir-rule-derivable]{\((I_r)\)} uses a different strategy than the other ruleset, where the primary concern was scalar management.  Here, scalars take a back-seat, since \hyperlink{rule:zxoptprime-IV}{\((IV)\)} is more flexible.  Rather, we focus on deriving the self-inverse property of the Hadamard through some relatively straightforward lemmas regarding cancellation and commutation on \(1 \to 1\) arity structures. In fact, every lemma used in this derivation has \(1 \to 1\) arity except for \hyperref[eq:prime-ir-Fig3Hx]{\((HX)\)}.

\begin{theorem}
\label{lem:zxoptprime-ir-rule-derivable}
The \((I_r)\) rule is derivable.
\begin{equation*}
\zxoptprime\vdash
\vcenter{\hbox{\primeirtikzinput{fig3-ir-theorem-statement.tikz}}}
\end{equation*}
\end{theorem}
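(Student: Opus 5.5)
The plan is to reduce \((I_r)\) to the statement that the Hadamard is self-inverse, \(H\circ H=\mathrm{id}\). Once we have this, the rest is short. Consider the zero-phase red \(1\to1\) spider. Precompose and postcompose with \(H\circ H\), and absorb one Hadamard on each side into the spider using \hyperlink{rule:zxoptprime-H}{\((H)\)}. This turns it into \(H\circ Z^0_{1,1}\circ H\). Then \((I_g)\), which is still in \(\zxoptprime\), gives \(H\circ H\), and applying the self-inverse property once more gives the wire. So the whole derivation is organised around proving \(H\circ H=\mathrm{id}\) using only \(1\to1\) manipulations, with scalars handled by \hyperlink{rule:zxoptprime-IV}{\((IV)\)}.

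\textbf{Preliminary lemmas.} First I will import the lemmas \((H_0)\) and \((HC)\) from Appendix~A. Their proofs use only rules common to both rulesets, namely \((S)\), \((I_g)\), \((H)\), \((CP)\) and \((B)\). With these and \((S)\) I will establish two further facts.
\begin{enumerate}
\item Green phases on a wire add. Red phases on a wire also add, provided a red \(1\to1\) spider is fused only after it has been rewritten via \((H)\) as \(H\,Z^\alpha H\); this avoids assuming red fusion outright.
\item An \((HX)\)-type lemma: a Hadamard can be pushed through a red spider of arbitrary arity, exchanging it for a green one. This is the only lemma that is not \(1\to1\), and it follows from \((H)\) together with \((HC)\).
\end{enumerate}

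\textbf{Main step.} Next I will instantiate \hyperlink{rule:zxoptprime-EUprime}{\((EU')\)} at \(\alpha_1=\alpha_2=0\). The left-hand side is then \(Z^0\,H\,Z^0\), which reduces to \(H\) by \((I_g)\). This gives an explicit Euler decomposition \(H = s\cdot X^{\beta_1}Z^{\beta_2}X^{\beta_3}\), where I expect \(\beta_1=\beta_2=\beta_3=\pi/2\) and \(s\) is the scalar gadget. Composing this decomposition with itself and using (1) to fuse like-coloured phases, I expect to reach an expression of the form \(X^{\pi/2}Z^{\pi/2}X^{\pi}Z^{\pi/2}X^{\pi/2}\). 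I will then use \((CP)\) and \((B)\), the \(\pi\)-commutation consequences, to move the \(X^\pi\) through and cancel phases down to the identity. The accumulated scalar diagrams will be removed with \((IV)\), which lets a green \(\alpha\) scalar connected to a red node cancel against the bone-shaped gadget.

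\textbf{Main obstacle.} I expect the hardest part to be the \(\pi\)-commutation step, namely \(X^\pi Z^\alpha = Z^{-\alpha}X^\pi\) up to scalar, derived without red fusion or \((I_r)\) itself. The risk is circularity, because the standard derivation of \(\pi\)-copy uses red identity. My first attempt will go through \((HX)\): rewrite \(X^\pi\) as \(H Z^\pi H\) with the Hadamards kept symbolic. I will then use a second instance of \((EU')\), with \(\alpha_1=\pi\) and \(\alpha_2\) general, to compare the two sides directly as Euler triples. Since the \((EU')\) output is determined, matching these triples yields the needed identity on \(1\to1\) structures. From that, \(H\circ H=\mathrm{id}\) and hence \((I_r)\) follow.
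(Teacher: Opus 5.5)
Your reduction of $(I_r)$ to $H\circ H=\mathrm{id}$ is correct, and it is also how the paper concludes: insert $HH$ around $X^0_{1,1}$, absorb with (H), erase $Z^0_{1,1}$ with $(I_g)$. Your instance of (EU') at $\alpha_1=\alpha_2=0$ also correctly gives $\beta_1=\beta_2=\beta_3=\pi/2$ (with $\gamma=-\pi/4$).

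The gap is in the preliminaries you treat as cheap.
\begin{itemize}
\item Lemma (2) is the target itself. Read at arity $1\to1$ with $\alpha=0$, it gives $H\circ X^0_{1,1}=Z^0_{1,1}\circ H=H$. Hence $H\circ H=H\circ X^0_{1,1}\circ H=Z^0_{1,1}=\mathrm{id}$ in one line.
\item The red half of (1) has the same problem. Rewriting the two reds as $HZ^\alpha H$ and $HZ^\beta H$ leaves $HZ^\alpha HHZ^\beta H$. Rule (S) applies only after the middle $HH$ is cancelled.
\end{itemize}

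Neither fact can follow from (S), $(I_g)$, (H) and the imported lemmas. Take the standard interpretation, but send $H$ to $i\,h$ and $X^\alpha_{n,m}$ to $i^{-(n+m)}$ times the usual red spider. Connectivity and (H) hold by construction. (CP), (B) and (IV) still hold, because the total red degrees of their two sides are $2$ vs.\ $2$, $7$ vs.\ $3$ and $4$ vs.\ $0$, and $i^{-4}=1$. So (S), $(I_g)$, (CP), (B), (H), and hence $(H_0)$, $(HC)$ and (IV), are all sound. Yet in this model:
\begin{itemize}
\item $H\circ H=-\mathrm{id}$ and $X^0_{1,1}=-\mathrm{id}$;
\item fusing two red spiders along one wire introduces a factor $-1$.
\end{itemize}
Only (EU') fails: one Hadamard on the left, total red degree $8$ on the right. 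So red fusion, red unfusion and lemma (2) all need an essential use of (EU'). The same goes for any route to $\pi$-commutation through (CP) and (B): those rules only involve phase-free reds, so isolating the $\pi$ needs red (un)fusion. Your plan uses (EU') only to decompose $H$, and then relies on (1) to fuse $X^{\pi/2}X^{\pi/2}$.

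Even granting all this, the endgame closes a loop. Commuting $X^\pi$ past $Z^{\pi/2}$ and erasing $Z^0$ leaves the red chain $X^{\pi/2}X^{\pi}X^{\pi/2}$, i.e.\ $X^{2\pi}_{1,1}$ after fusion. Phases are real, so you would still need $2\pi$-periodicity, and then erasing $X^0_{1,1}$ is literally $(I_r)$. The only rules containing a red spider of degree two are (H), which needs Hadamards on both its legs, and the right-hand side of (EU'). So a bare red $1\to1$ spider can only be removed by first producing one of these shapes.

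Your fallback of ``matching Euler triples'' does not supply this. (EU') is an equation schema that rewrites only diagrams of shape $Z^{\alpha_2}HZ^{\alpha_1}$; it is not a uniqueness principle. The two sides of $\pi$-commutation would land in shapes $H\cdot XZX$ and $XZX\cdot H$, which no rule identifies.

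What is missing is a non-circular way to cancel red phases on a wire. The paper builds it from a chain of mostly $1\to1$ lemmas:
\begin{itemize}
\item a colour-change lemma (HX);
\item a lemma (RT) that prepares applications of (CP);
\item a green phase commutation (ZC), valid for $|\alpha|<\pi$;
\item a red Clifford+T phase-cancellation lemma $(X\pm)$.
\end{itemize}
From these it derives (HH), and only then obtains $(I_r)$ exactly as in your first paragraph.
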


\noindent\textbf{Proof.}
\begin{center}
\makebox[\linewidth][c]{\primeirtikzinput{ir-final-proof.tikz}}
\end{center}
\tombstone

See \hyperref[app:ir-prime-rule-derivation]{Appendix C} for details.

\subsection{On the Necessity of the Bialgebra \texorpdfstring{\((B)\)}{(B)} rule in \texorpdfstring{\(\zxoptprime\)}{ZXopt'}}
\label{sec:zxoptprime-necessity-bialgebra-b-rule}

We can simply re-use the exact same counter-model from the \hyperref[sec:necessity-bialgebra-b-rule]{\(\zxopt\) necessity of \((B)\)}. It suffices to show the soundness of the new rules \((IV)\) and \((EU')\), since we know that all the other non-target rules hold, and \((B)\) fails as before.

\begin{theorem}
\label{lem:zxoptprime-b-rule-necessity}
The \((B)\) rule is necessary:
\[
\zxoptprime\setminus\{(B)\}\nvdash (B).
\]
\end{theorem}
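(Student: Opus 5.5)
The plan is to reuse the counter-model $\sem{-}^{(B)}$ of Section~\ref{sec:necessity-bialgebra-b-rule} without modification. Lemma~\ref{lem:connectivity-soundness} already gives the connectivity meta-rule. Appendix~B already checks soundness of $(S)$, $(I_g)$, $(CP)$ and $(H)$, which are shared with $\zxopt$. The computation in the proof of Theorem~\ref{thm:b-rule-necessity} shows that $(B)$ fails, because it exhibits a coefficient equal to $-\frac{\sqrt2}{4}\epsilon$ on one side and $0$ on the other. So the only new work is to check that $(IV)$ and $(EU')$ are sound for $\sem{-}^{(B)}$.

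\textbf{Step 1: two facts about $N$.} Let $\mathbf 1:=\sum_x\lvert x\rangle$. We have $N\mathbf 1=0$, since the three images cancel. Because $N^T=-N$, also $\mathbf 1^TN=0$. Together with $H_0\mathbf 1=2\sqrt2\lvert000\rangle$ and $N\lvert000\rangle=0$, this gives $H'\mathbf 1=2\sqrt2\lvert000\rangle$ exactly, with no $\epsilon$ term. Moreover $\mathbf 1^T(H_0N-NH_0)\mathbf 1=0$.

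\textbf{Step 2: soundness of $(IV)$.} The rule's left-hand side is a product of two scalars.
\begin{itemize}
\item The first scalar is the green $\alpha$ effect applied to the zero-phase red state. The red state is $H'\mathbf 1=2\sqrt2\lvert000\rangle$, and the green effect evaluates this to $2\sqrt2$, independently of $\alpha$.
\item The second scalar is the triple-edge green/red loop. It equals $\sum_{x,y}(H'_{xy})^3$.
\item At order $\epsilon^0$, each entry of $H_0$ is $\pm\frac1{2\sqrt2}$, so the sum equals $\frac{1}{16\sqrt2}$ times the sum of the signs of the Walsh matrix. That sign sum is $8$, so the value is $\frac1{2\sqrt2}$.
\item At order $\epsilon$, the coefficient is $3\sum_{x,y}(H_0)_{xy}^2(H_0N-NH_0)_{xy}=\frac38\,\mathbf 1^T(H_0N-NH_0)\mathbf 1$. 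This vanishes by Step~1.
\end{itemize}
The product of the two scalars is therefore $2\sqrt2\cdot\frac1{2\sqrt2}=1$, which is the interpretation of the empty diagram.

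\textbf{Step 3: soundness of $(EU')$, the main obstacle.} The idea is to reduce to standard soundness by conjugation.
\begin{itemize}
\item $(Z')^\alpha_{1,1}=P(\alpha)^{\otimes3}$ is constant on the weight-one subspace, and $N$ lives on that subspace. Hence $N$ commutes with it, and $(Z')^\alpha_{1,1}=O^{-1}P(\alpha)^{\otimes3}O$.
\item Likewise $H'=O^{-1}h^{\otimes3}O$. It follows that $(X')^\alpha_{1,1}=O^{-1}(hP(\alpha)h)^{\otimes3}O$.
\item So every $1\to1$ part of $(EU')$ is interpreted as $O^{-1}U^{\otimes3}O$, where $U$ is its standard qubit interpretation.
\item Standard soundness of $(EU')$ gives $U_L=c\,U_R$, where $c$ is the standard value of the scalar subdiagrams on the right. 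Hence $O^{-1}U_L^{\otimes3}O=c^3\,O^{-1}U_R^{\otimes3}O$.
\end{itemize}
It then remains to show that the scalar subdiagrams are interpreted as the cubes of their standard values.
\begin{itemize}
\item The triple-edge loop was handled in Step~2: $\frac1{2\sqrt2}=(1/\sqrt2)^3$.
\item For the green $\gamma$ / red $\pi$ scalar, I plan a direct computation. It equals $\mathbf 1^T P(\gamma)^{\otimes3}H'P(\pi)^{\otimes3}H'\mathbf 1$.
\item Using $H'\mathbf 1=2\sqrt2\lvert000\rangle$ and $P(\pi)^{\otimes3}\lvert000\rangle=\lvert000\rangle$, this reduces to $2\sqrt2\,\mathbf 1^TP(\gamma)^{\otimes3}H'\lvert000\rangle$.
\item The vector $H'\lvert000\rangle=H_0\lvert000\rangle+\epsilon(H_0N-NH_0)\lvert000\rangle$ simplifies to $\frac1{2\sqrt2}\mathbf 1-\epsilon N\frac{1}{2\sqrt2}\mathbf 1=\frac1{2\sqrt2}\mathbf 1$.
\item The value is therefore $\sum_x e^{i|x|\gamma}=(1+e^{i\gamma})^3$, which is the cube of the standard value $1+e^{i\gamma}$.
\end{itemize}
This confirms $(EU')$, and with it the theorem.

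The delicate point is whether the scalar conventions in $(EU')$ match this cube law exactly. If some scalar does not factor as a tensor cube, I would fall back on directly comparing the $\epsilon^0$ and $\epsilon^1$ parts, using Step~1 to kill every $\epsilon$ term that meets $\mathbf 1$.
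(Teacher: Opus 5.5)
Your strategy is the paper's. You reuse $\sem{-}^{(B)}$, observe that only $(IV)$ and $(EU')$ are new, and get $(IV)$ from $2\sqrt2\cdot\frac{1}{2\sqrt2}=1$. You get $(EU')$ by tensoring the one-qubit Euler identity three times and conjugating by $O$, which commutes with every $(Z')^\theta_{1,1}$. Your Step~2 expands $\sum_{x,y}(H'_{xy})^3$ in $\epsilon$ and kills the $\epsilon$ term using $\mathbf 1^TN=0=N\mathbf 1$; this is correct and more explicit than the paper.

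The problem is the green-$\gamma$/red-$\pi$ scalar in Step~3. By definition $(X')^\pi_{0,1}=H'\circ(Z')^\pi_{0,1}=H'P(\pi)^{\otimes3}\mathbf 1$. Your expression $\mathbf 1^TP(\gamma)^{\otimes3}H'P(\pi)^{\otimes3}H'\mathbf 1$ contains an extra $H'$. It therefore interprets a red $\pi$ phase gate $(X')^\pi_{1,1}$ fed with the green zero-phase state $\mathbf 1$, not the red $\pi$ state. So its value $(1+e^{i\gamma})^3$ is the wrong scalar. Your claimed standard value is also wrong: in the standard model this scalar is $(\bra{0}+e^{i\gamma}\bra{1})\,h(\ket{0}-\ket{1})=\sqrt2\,e^{i\gamma}$, not $1+e^{i\gamma}$.

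The slip is not cosmetic. With your numbers the right-hand scalars of $(EU')$ come to $\frac{1}{2\sqrt2}(1+e^{i\gamma})^3$, which is not $e^{3i\gamma}$; it is not even of modulus one in general. The true standard constant is $c=\frac{1}{\sqrt2}\cdot\sqrt2\,e^{i\gamma}=e^{i\gamma}$. The apparent ``cube law'' you found only reflects making the same mistake in both models, so Step~3 as written does not establish soundness of $(EU')$.

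The repair uses exactly your Step~1 trick:
\begin{itemize}
\item[(i)] The vector $v:=P(\pi)^{\otimes3}\mathbf 1=\sum_x(-1)^{\lvert x\rvert}\ket{x}$ is constant on the weight-one subspace, so $Nv=0$.
\item[(ii)] $H_0v=\bigl(h(\ket{0}-\ket{1})\bigr)^{\otimes3}=2\sqrt2\ket{111}$, and $N\ket{111}=0$.
\item[(iii)] Hence $H'v=2\sqrt2\ket{111}$ with no $\epsilon$ term.
\item[(iv)] The scalar is then $\bigl(\sum_x e^{i\lvert x\rvert\gamma}\bra{x}\bigr)\,2\sqrt2\ket{111}=2\sqrt2\,e^{3i\gamma}=(\sqrt2\,e^{i\gamma})^3$.
\end{itemize}
Multiplied by the triple-edge loop $\frac{1}{2\sqrt2}$, this gives $e^{3i\gamma}=c^3$. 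That is precisely the factor the paper uses, and with this correction your argument goes through.
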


\noindent\textbf{Proof.}
Consider the interpretation \(\sem{-}^{(B)}\) defined in \hyperref[sec:defining-interpretation-functor]{Section~\ref*{sec:defining-interpretation-functor}}. It suffices to show that \((IV)\) and \((EU')\) hold under this interpretation.
\begin{itemize}
\item \((IV)\):
\[
\sem{\zxfig{IV-LHS-only}}^{(B)}
=1=
\sem{\zxfig{empty-diagram}}^{(B)}.
\]
On the left-hand side, for any \(\alpha\), the two disconnected scalars evaluate to \(1/(2\sqrt2)\) and \(2\sqrt2\), respectively, giving a product of \(1\).  The right-hand side evaluates trivially to \(1\).
\item \((EU')\):
\[
\sem{\zxfig{EU-prime-LHS-only}}^{(B)} = D_{\alpha_2}H'D_{\alpha_1} = e^{3i\gamma}H'D_{\beta_3}H'D_{\beta_2}H'D_{\beta_1}H' = \sem{\zxfig{EU-prime-RHS-only}}^{(B)}.
\]
where \(D_\theta=(Z')^\theta_{1,1}\).  Just like before in the \((EU)\) check, the two scalars on the right-hand side contribute \(e^{3i\gamma}\) in total.  Then, tensoring the corresponding one-qubit Euler identity three times and conjugating by \(O\), which commutes with every \(D_\theta\), we find agreement on the final displayed equality.
\end{itemize}
\[
\therefore\ (B)\text{ is necessary.}
\qquad\blacksquare
\]

\subsection{On the Necessity of the Green Identity \texorpdfstring{\((I_g)\)}{(Ig)} rule in \texorpdfstring{\(\zxoptprime\)}{ZXopt'}}
\label{sec:zxoptprime-necessity-green-identity-ig-rule}

\begin{theorem}
\label{lem:zxoptprime-ig-rule-necessity}
The \((I_g)\) rule is necessary:
\[
\zxoptprime\setminus\{(I_g)\}\nvdash (I_g).
\]
\end{theorem}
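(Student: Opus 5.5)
The plan is to reuse the relational counter-model \(\sem{-}^{(I_g)}\) from Section~\ref{sec:necessity-green-identity-ig-rule} unchanged. The connectivity meta-rule has already been verified for it in the lemma of that section. The soundness of \((S)\), \((H)\), \((CP)\) and \((B)\) is covered by the checks in Appendix B, since these rules appear verbatim in \(\zxoptprime\). The failure of \((I_g)\) is the same computation as in Theorem~\ref{thm:ig-rule-necessity}: the green \(1\to1\) spider with zero phase is \(P_{1,1}\neq C_{1,1}\). So it only remains to check that the two new rules \((IV)\) and \((EU')\) are sound under \(\sem{-}^{(I_g)}\).

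For \((IV)\), I would evaluate the two disconnected scalars on the left-hand side separately.

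The first scalar is a green \(0\to1\) spider with phase \(\alpha\) plugged into a red \(1\to0\) spider with phase \(0\). The red spider has degree \(1\neq2\), so it is interpreted as \(P\). The composite is \(P_{1,0}\circ P_{0,1}=\bigvee_b(\ldots)=1\), witnessed by \(b=0\).

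The second scalar is a green spider and a red spider joined by three parallel wires. Here both are degree-\(3\) one-point relations, and the all-zero assignment again gives the value \(1\).

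Hence the left-hand side is \(1\wedge 1=1\), which equals the empty diagram.

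For \((EU')\), the left-hand side is the composite \(P_{1,1}\circ P_{1,1}\circ P_{1,1}=P_{1,1}\) of green \(\alpha_1\), Hadamard, and green \(\alpha_2\).

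On the right-hand side, each red spider with phase \(\beta_1\) or \(\beta_3\) is either \(C_{1,1}=\mathrm{id}\) (when the phase is \(0\)) or \(P_{1,1}\). The green \(\beta_2\) spider is always \(P_{1,1}\). Since \(P_{1,1}\) composed with either \(\mathrm{id}\) or \(P_{1,1}\) is \(P_{1,1}\), the chain is \(P_{1,1}\) regardless of the values of \(\beta_1,\beta_2,\beta_3\).

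The scalars on the right-hand side are also both \(1\). The red/green triple-edge scalar is \(1\) as in \((IV)\). The red \(\pi\) spider has nonzero phase and degree \(1\), so it is \(P_{0,1}\); joined to the green \(\gamma\) spider \(P_{1,0}\), it gives \(1\). So both sides equal \(P_{1,1}\).

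The only delicate point, which I expect to be the main obstacle, is the case split on the red exception: whenever a red spider of degree \(2\) may have phase exactly \(0\) (possible for \(\beta_1,\beta_3\) by the \(\arg(0):=0\) conventions), one must confirm that the outcome does not depend on this. The absorption \(P_{1,1}\circ C_{1,1}=P_{1,1}=C_{1,1}\circ P_{1,1}\) settles it, because every such red spider is adjacent in the chain to a green spider or Hadamard interpreted as \(P_{1,1}\).

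With all rules of \(\zxoptprime\setminus\{(I_g)\}\) sound and \((I_g)\) failing, \((I_g)\) is not derivable.
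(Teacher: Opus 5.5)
Your proposal is correct and takes the same approach as the paper: you reuse \(\sem{-}^{(I_g)}\), note that only \((IV)\) and \((EU')\) need new soundness checks, and show that both sides of \((IV)\) evaluate to \(1\) and both sides of \((EU')\) to \(P_{1,1}\). The paper leaves the cases \(\beta_1=0\) or \(\beta_3=0\) implicit; you handle them correctly with the absorption \(P_{1,1}\circ C_{1,1}=P_{1,1}=C_{1,1}\circ P_{1,1}\).
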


\noindent\textbf{Proof.}
Consider the interpretation \(\sem{-}^{(I_g)}\) defined in \hyperref[sec:necessity-green-identity-ig-rule]{Section~\ref*{sec:necessity-green-identity-ig-rule}}. As before, it suffices to show that \((IV)\) and \((EU')\) hold under this interpretation.
\begin{itemize}
\item \((IV)\):
\[
\sem{\zxfig{IV-LHS-only}}^{(I_g)}
=1=
\sem{\zxfig{empty-diagram}}^{(I_g)}.
\]
\item \((EU')\):
\[
\sem{\zxfig{EU-prime-LHS-only}}^{(I_g)} = P_{1,1} = \sem{\zxfig{EU-prime-RHS-only}}^{(I_g)}.
\]
\end{itemize}
\[
\therefore\ (I_g)\text{ is necessary.}
\]
\tombstone

So, \(\zxoptprime\) is another complete, minimal ruleset.

\section*{Acknowledgements}

We thank Miriam Backens for thoughtful discussions about this work, for checking the computations, and for providing deeper insights into the problem.

\bibliographystyle{eptcs}
\bibliography{references}

\appendix
\section{Derivations for the Red Identity Rule}
\label{app:ir-rule-derivation}

Throughout this appendix, we work in
\[
\zxopt=\zxv\setminus\{(I_r)\}
=\{(S),(I_g),(E),(CP),(B),(EU),(HD),(H)\}
+\text{ connectivity}.
\]

The following derivation of \((I_r)\) is highly technical, and most of the difficulty comes from the critical tracking of scalars and the scalar cleanup rule in \hyperref[eq:ir-derivation-L7]{Lemma~\ref*{lem:ir-derivation-L7}}.

We first derive two useful identities from the ruleset:

\newcommand{\irshortstatement}[3]{%
\begin{equation*}
\vcenter{\hbox{\hkstikzitinput{HKS/figures/ir-derivation/renaud-short/#1}}}
\tag{\ensuremath{#3}}
\label{eq:ir-derivation-#2}
\end{equation*}%
}
\newcommand{\irshortrow}[1]{%
\leavevmode\par
\nopagebreak[4]
\begin{center}
\makebox[\linewidth][c]{#1}%
\end{center}
}
\newcommand{\irshortjoin}{\hspace{1em}}
\newcommand{\scalarpairnodes}[3]{%
\begin{scope}[shift={(#2,#3)}]
	\node [style=X dot] (#1-chi-x) at (-0.6,0.35) {};
	\node [style=Z dot] (#1-chi-z) at (-0.6,-0.35) {};
	\node [style=X phase dot] (#1-omega-x) at (0.6,0.45) {$\pi$};
	\node [style=Z dot] (#1-omega-z) at (0.6,-0.45) {};
\end{scope}%
}
\newcommand{\scalarpairedges}[1]{%
\draw (#1-chi-x) to (#1-chi-z);
\draw [bend left=45] (#1-chi-x) to (#1-chi-z);
\draw [bend right=45] (#1-chi-x) to (#1-chi-z);
\draw (#1-omega-x) to (#1-omega-z);
}

\Needspace{0.30\textheight}
\noindent
\begin{minipage}[t]{0.48\linewidth}
\vspace{0pt}
\begin{lemma}
\label{lem:ir-derivation-L1}
\irshortstatement{ir-l01-statement.tikz}{L1}{\mathrm{H}_0}
\end{lemma}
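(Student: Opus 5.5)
The plan is to obtain \((\mathrm{H}_0)\) as a specialisation of the colour-change rule \((H)\), with the resulting green spider then simplified by \((S)\) and \((I_g)\). I read the statement as the zero-phase instance of Hadamard conjugation on a single-wire spider: a red spider with phase \(0\), one input and one output, with a Hadamard on each leg. This is the configuration the \((I_r)\) derivation needs, since \((I_r)\) concerns exactly this red \(1\to1\) zero-phase spider. I cannot see the figure, so this reading is an assumption.

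\textbf{Step 1.} Instantiate \((H)\) with \(\alpha=0\) and one wire on each side. The caption of Figure~2 lets the wire bundles contain zero or more wires, and \((H)\) is used only in its stated orientation, so this instance is available in \(\zxopt\) without any colour-swapped or flipped variant. It rewrites the Hadamard-sandwiched red spider into \(Z^0_{1,1}\).

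\textbf{Step 2.} Apply \((I_g)\) to replace \(Z^0_{1,1}\) by a bare wire. If the figure instead has Hadamards on only some legs, or legs bent into cups and caps, I would first use the connectivity meta-rule to bring the diagram to the form in Step~1. I would also fuse any adjacent green spiders with \((S)\) before applying \((I_g)\).

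\textbf{Scalars.} Neither side of \((H)\) or \((I_g)\) carries scalar factors, so no \((E)\) or \((EU)\) correction should appear. I would check that the statement's two sides have matching scalar content before finishing.

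\textbf{Main obstacle.} The risk is that the actual statement is not literally an instance of \((H)\), for example the \(0\to0\) case of \((H)\) equating red and green phase scalars. That case is still a direct specialisation of \((H)\) with empty wire bundles. So the plan becomes: identify which arity and phase of \((H)\) the statement matches, instantiate it, and close with \((S)\)/\((I_g)\).

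The thing I must not do is use \((I_r)\) or the colour-swapped \((H)\). This lemma is an input to the \((I_r)\) derivation, so any such use would be circular. Every step must therefore come from the rules exactly as listed in Figure~2 plus connectivity.
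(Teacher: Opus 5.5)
Your reading is correct: $(\mathrm{H}_0)$ states that the zero-phase red $1\to1$ spider with a Hadamard on each leg equals a bare wire, and your derivation is the natural one — the $n=m=1$, $\alpha=0$ instance of $(H)$ followed by $(I_g)$, with no use of $(I_r)$ or colour-swapped rules, so it is also available in \zxoptprime. I can't see the paper's proof figure, so I am inferring this from context, but the later uses of $(\mathrm{H}_0)$ pin it to this $1\to1$ form, and the contingency steps you list (connectivity rearrangement, $(S)$, scalar bookkeeping) are unnecessary.
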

\begin{irproof}
\widehkstikzitinput{HKS/figures/ir-derivation/renaud-short/ir-l01-proof.tikz}
\vspace{-2.6\baselineskip}
\end{irproof}
\end{minipage}\hfill
\begin{minipage}[t]{0.48\linewidth}
\vspace{0pt}
\begin{lemma}
\label{lem:ir-derivation-L2}
\irshortstatement{ir-l02-statement.tikz}{L2}{\mathrm{HC}}
\end{lemma}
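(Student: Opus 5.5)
I read the statement \eqref{eq:ir-derivation-L2} as \emph{Hadamard cancellation}: two Hadamard boxes composed in series on a single wire equal the bare wire, with no leftover scalar. The statement's figure is not reproduced here, so this reading is an inference from how \((HC)\) is used later. The derivation has to avoid any step that implicitly needs \((I_r)\), since \((I_r)\) is exactly what these lemmas are building towards.

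The naive route is the \(1\to1\), \(\alpha=0\) instance of \((H)\). It gives \(H\circ X^0_{1,1}\circ H = Z^0_{1,1}\), and \((I_g)\) turns the right-hand side into a wire. Concluding \(H\circ H=\mathrm{id}\) from this would need \(X^0_{1,1}=\mathrm{id}\), which is \((I_r)\) itself, so this route is circular. I will therefore work directly from \((HD)\).

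\textbf{Step 1: expand.} I expand both Hadamards with \((HD)\). Each box becomes a vertical chain: green \(\frac{\pi}{2}\), then a zero-phase red node carrying a pendant leg to a green \(\frac{-\pi}{2}\), then green \(\frac{\pi}{2}\). The two inner green \(\frac{\pi}{2}\) nodes are now adjacent, and \((S)\) fuses them into a single green \(\pi\). Each red node with its pendant green \(\frac{-\pi}{2}\) is a \(1\to1\) red gadget of some phase. Rewriting it into an honest red \(1\to1\) spider with phase \(\frac{\pi}{2}\) plus a disconnected scalar is where I expect to invoke the preceding Lemma~\ref{lem:ir-derivation-L1}, \((H_0)\). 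This must be done without ever fusing a red spider of arity \(1\to1\) into a plain wire.

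\textbf{Step 2: collapse the Euler chain.} After Step~1 the line reads \(Z^{\pi/2}X^{\pi/2}Z^{\pi}X^{\pi/2}Z^{\pi/2}\), with scalar debris. I apply \((EU)\) to the first green–red–green triple, \((\alpha_1,\alpha_2,\alpha_3)=(\frac{\pi}{2},\frac{\pi}{2},\pi)\), and read off \(\beta_1,\beta_2,\beta_3,\gamma\) from the formulas in Figure~2. This gives a red–green–red triple. Fusing its last red node with the remaining red \(\frac{\pi}{2}\) via \((S)\) leaves a shorter chain. Iterating \((EU)\) with \((S)\) should drive every phase to \(0\) modulo \(2\pi\), since semantically the unitary is the identity. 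What remains is a chain of zero-phase green spiders, which \((S)\) and \((I_g)\) reduce to a wire. The one concern is a leftover zero-phase red \(1\to1\) spider. If one appears, I reorder the \((EU)\) applications, using \((EU)\) in reverse if necessary, so that the final surviving \(1\to1\) node is green.

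\textbf{Step 3: clear the scalars (main obstacle).} The disconnected components collected along the way include the \(\chi\)-type triple-edged red/green pair, the \(\pi\)-red/\(\gamma\)-green pairs from \((EU)\), and the pendant pieces from \((HD)\). I need these to cancel to the empty diagram using only \((E)\), \((CP)\), \((S)\), \((B)\) and \((H)\). The plan is as follows. First, merge the \(\gamma\)-scalars using \((S)\) and \((CP)\) so that their phases add. Next, show that the total phase is a multiple of \(\frac{\pi}{4}\) that pairs with the \((E)\) scalar. Then use \((B)\) to convert the triple-edged pair into standard form. I expect this scalar bookkeeping to be the delicate part, as the appendix remarks; the diagrammatic skeleton of Steps~1–2 is routine.
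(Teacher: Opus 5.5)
You have misidentified the statement. The identity you set out to prove is essentially the whole theorem, not a preliminary lemma. Scalar-free cancellation $H\circ H=\mathrm{id}$, combined with the $\alpha=0$, $1\to1$ instance of $(H)$ and with $(I_g)$, gives $(I_r)$ in one line:
\[
X^0_{1,1}=(H\circ H)\circ X^0_{1,1}\circ(H\circ H)=H\circ Z^0_{1,1}\circ H=H\circ H=\mathrm{id}.
\]
That would make the lemmas $(\mathrm{LI}_r)$ through $(\mathrm{D}'_\pi)$ pointless. The paper also proves ``the Hadamard is self-inverse up to a scalar residue'' separately, as the later lemma $(\mathrm{H}^{-1})$. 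In Appendix~C, where $(\mathrm{HC})$ is explicitly available, the scalar-free property $(\mathrm{HH})$ still needs a long derivation and is called conceptually equivalent to $(I_r)$.

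So $(\mathrm{HC})$ is a much weaker identity with a short proof. The paper reuses it in \zxoptprime precisely because it relies only on the shared rules $(S)$, $(I_g)$, $(CP)$, $(B)$, $(H)$ and connectivity. Your route goes through $(HD)$, $(EU)$ and $(E)$, none of which is in \zxoptprime, so it could not play that role even if it worked. The natural tool is the $\alpha=0$ instance of $(H)$ with $(I_g)$, namely $H\circ X^0_{1,1}\circ H=\mathrm{id}$, together with purely categorical consequences that never remove the red node. For example:
\[
H\circ X^0_{1,1}=(H\circ X^0_{1,1})\circ(H\circ X^0_{1,1}\circ H)=(H\circ X^0_{1,1}\circ H)\circ(X^0_{1,1}\circ H)=X^0_{1,1}\circ H .
\]

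Even for your own target, the argument has real holes. First, $(S)$ in this ruleset fuses green spiders only, since no colour-swapped rules are assumed. Every red fusion you perform ``via $(S)$'' in Steps~1 and~2 is therefore not a legal move; the paper only obtains a scalar-tracked red fusion later, as $(\mathrm{LS}_r)$.

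More seriously, Step~2 ends exactly where $(I_r)$ is needed. $(EU)$ at $(\tfrac{\pi}{2},\tfrac{\pi}{2},\pi)$ gives $(\beta_1,\beta_2,\beta_3,\gamma)=(\pi,\tfrac{\pi}{2},\tfrac{\pi}{2},0)$. After merging the two red $\tfrac{\pi}{2}$ nodes, $(EU)$ at $(\tfrac{\pi}{2},\pi,\tfrac{\pi}{2})$ has $z'=0$, hence $\beta_2=0$, $\beta_1=\beta_3=\tfrac{\pi}{2}$ and $\gamma=\tfrac{\pi}{2}$. After $(I_g)$ you are left with red $1\to1$ spiders of total phase $2\pi$, which is semantically the identity but syntactically a red node. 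Your fallback of reordering, or running $(EU)$ backwards, is not carried out. Eliminating that last red $1\to1$ node is exactly the content of $(I_r)$, so the argument is circular at its key step.

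Finally, Step~3 is a list of intentions rather than a derivation. The paper itself says that most of the difficulty in Appendix~A lies in this scalar bookkeeping.
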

\begin{irproof}
\widehkstikzitinput{HKS/figures/ir-derivation/renaud-short/ir-l02-proof.tikz}
\end{irproof}
\end{minipage}

\vspace{\baselineskip}

Now, we derive a rule which lets us remove the red identity along with the two scalar factors shown:

\Needspace{0.35\textheight}
\begin{lemma}
\label{lem:ir-derivation-L3}
\irshortstatement{ir-atomic-l03-statement.tikz}{L3}{\mathrm{LI}_r}
\end{lemma}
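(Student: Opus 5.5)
The plan is to turn the red identity into a green one, which can then be deleted by \((I_g)\). Since no colour-swapped rule is assumed, the only bridge between colours is \((H)\), used together with \((HD)\) and the two lemmas \((\mathrm{H}_0)\) and \((\mathrm{HC})\) just proved. Throughout, the two scalar gadgets in the statement are carried along; they are the triple-edged red/green pair and the red-\(\pi\)/green pair, the same ones that appear on the right-hand side of \((EU)\). At the end they must be consumed exactly, either by \((E)\) or by an instance of \((EU)\) whose right-hand side produces precisely these gadgets.

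\textbf{Step 1: make Hadamards appear.} Read \((H)\) from right to left on a zero-phase green spider of arity \(1\to1\), which \((I_g)\) lets us introduce from a bare wire. This yields a red spider with a Hadamard on each leg. Next use \((\mathrm{H}_0)\), which I expect to express a doubled Hadamard in terms of a wire, up to scalar gadgets. With it, the left-hand side of \((LI_r)\) is rewritten so that the red \(1\to1\) spider is flanked by two pairs of Hadamards. One Hadamard from each pair is then absorbed into the red spider via \((H)\), leaving a green spider, which \((I_g)\) removes.

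\textbf{Step 2: cancel the remaining Hadamards.} What is left is two Hadamards in sequence plus the scalars that \((\mathrm{H}_0)\) introduced. Use \((\mathrm{HC})\), a commutation/cancellation lemma, to bring these Hadamards next to each other. Then expand them with \((HD)\) into green \(\tfrac{\pi}{2}\)/red/green \(\tfrac{\pi}{2}\) chains. Fuse the adjacent green \(\tfrac{\pi}{2}\) phases with \((S)\) into a green \(\pi\). After this, the middle of the chain is a purely \(1\to1\) Euler-type configuration, to which \((EU)\) applies.

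\textbf{Step 3: scalar bookkeeping.} This is the main obstacle. \((EU)\) will collapse the chain to the wire plus a \(\gamma\)-phase scalar and the two gadgets, and the \(-\tfrac{\pi}{2}\) scalars coming from \((HD)\) must also be accounted for. I would first compute the parameters of \((EU)\) for \(\alpha_1=\alpha_3=\tfrac{\pi}{2}\), \(\alpha_2=\pi\), which should give \(z'=0\) and hence \(\beta_2=0\), so that the result is effectively a wire with red phases. I would then choose the direction of each \((EU)\) application so that the gadgets it produces exactly cancel the two gadgets in the statement. Whatever phase scalar remains would be eliminated with \((E)\), after fusing green phases via \((S)\) and \((CP)\) on scalars. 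If the phases do not match, the fallback is to insert an extra instance of \((E)\) at the start to supply the missing \(\pm\tfrac{\pi}{4}\) phases.
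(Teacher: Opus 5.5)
There is a genuine gap. Your plan misses the one idea that makes this lemma short, and the route you take instead does not go through.

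(i) Step~1 rests on reading \((\mathrm{H}_0)\) as ``\(H\circ H\) is the wire up to scalars''. That is essentially as strong as \((I_r)\) itself; the paper says exactly this of \((HH)\) in Appendix~C. So it cannot be a cheap preliminary. If it were available, your Step~2 would be redundant, and the scalar it leaves would still have to be matched to the statement's gadgets.

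(ii) Step~2 does not match \((EU)\). Expanding \(H\circ H\) with \((HD)\) and fusing gives the chain green \(\frac{\pi}{2}\), \(R\), green \(\pi\), \(R\), green \(\frac{\pi}{2}\). Here \(R\) is a zero-phase red spider with a green \(-\frac{\pi}{2}\) state on its third leg. \((EU)\) needs a green--red--green chain of \(1\to1\) phase spiders. Turning \(R\) into a red \(\frac{\pi}{2}\) phase needs red fusion and a colour change of the state, and neither is available yet: red fusion is \((\mathrm{LS}_r)\), proved later. Your parameters \((\frac{\pi}{2},\pi,\frac{\pi}{2})\) also put a \emph{red} \(\pi\) in the middle, which your chain does not contain.

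(iii) Even granting that step, those parameters give \(\gamma=\frac{\pi}{2}\). So \((EU)\) emits a red \(\pi\)/green \(\frac{\pi}{2}\) pair, not the statement's red \(\pi\)/green \(0\) pair. Nothing in \(\zxopt\) at this point cancels or converts such scalars. \((E)\) only deletes a green \(\frac{\pi}{4}\)/red \(-\frac{\pi}{4}\) pair, and scalar cleanup is precisely what the later lemmas \((\mathrm{IV}')\) and \((\mathrm{IV}'')\) are for.

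The missing idea is to apply \((EU)\) at \(\alpha_1=\alpha_2=\alpha_3=0\) to the red identity \(X\), flanked by two green identities introduced with \((I_g)\).
\begin{itemize}
\item Then \(z=1\) and \(z'=0\), so \(\beta_1=\beta_2=\beta_3=\gamma=0\). The right-hand side is a zero-phase red--green--red chain together with \emph{exactly} the two gadgets of the statement.
\item After \((I_g)\) this reads \(X=(X\circ X)\otimes s\), where \(s\) denotes the two gadgets.
\item Separately, \((H)\) at \(\alpha=0\) with \((I_g)\) gives \(H\circ X\circ H=\mathrm{id}\). Hence \(X\circ H=(H\circ X\circ H)\circ X\circ H=H\circ X\circ(H\circ X\circ H)=H\circ X\), so \(H\circ H\) is a two-sided inverse of \(X\). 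This is what the preceding Hadamard lemmas supply.
\item Therefore \(X\otimes s=H\circ H\circ\bigl((X\circ X)\otimes s\bigr)=H\circ H\circ X=\mathrm{id}\).
\end{itemize}

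This is, in substance, the paper's short argument. It needs no \((HD)\), no \((E)\) and no scalar manipulation, because the scalars in the statement are exactly the ones \((EU)\) produces. Of your plan, only the opening observation \(H\circ X\circ H=\mathrm{id}\) survives.
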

\begin{irproof}
\irshortrow{%
	\hkstikzitinput{HKS/figures/ir-derivation/renaud-short/ir-atomic-l03-proof-a.tikz}\irshortjoin
	\hkstikzitinput{HKS/figures/ir-derivation/renaud-short/ir-atomic-l03-proof-b.tikz}\irshortjoin
	\hkstikzitinput{HKS/figures/ir-derivation/renaud-short/ir-atomic-l03-proof-b2.tikz}%
}
\end{irproof}

The Hadamard is self-inverse up to a scalar residue:

\Needspace{0.25\textheight}
\begin{lemma}
\label{lem:ir-derivation-L4}
\irshortstatement{ir-atomic-l04-statement.tikz}{L4}{\mathrm{H}^{-1}}
\end{lemma}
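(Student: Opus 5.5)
The plan is to obtain $\mathrm{H}^{-1}$ directly from the red-identity lemma \eqref{eq:ir-derivation-L3}, the colour-change rule $(H)$, and $(I_g)$. It does not re-expand Hadamards via $(HD)$. The idea is that the only thing stopping us from cancelling two Hadamards is the missing red identity $(I_r)$. Lemma~$(\mathrm{LI}_r)$ supplies exactly that, at the cost of a fixed pair of scalar subdiagrams, and those scalars should be the ``residue'' in the statement.

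\textbf{Step 1.} Start from two Hadamard boxes composed in series. Use $(\mathrm{LI}_r)$ read right-to-left to insert a phase-free $1\to1$ red spider between the two Hadamards. This introduces the scalar factors appearing in \eqref{eq:ir-derivation-L3}, which we place beside the wire; by the connectivity meta-rule their position is irrelevant.

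\textbf{Step 2.} The middle of the diagram is now $H\circ X^0_{1,1}\circ H$. Apply $(H)$ with $\alpha=0$ and one input and one output wire, which is allowed since the rule permits any number of wires. This turns it into the green spider $Z^0_{1,1}$.

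\textbf{Step 3.} Remove $Z^0_{1,1}$ with $(I_g)$, leaving a bare wire together with the scalar subdiagrams from Step~1.

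\textbf{Step 4.} Finally, tidy the scalars into the exact form displayed in \eqref{eq:ir-derivation-L4}. This uses $(S)$ to fuse any adjacent same-coloured scalar spiders, possibly $(E)$, and the auxiliary identities $(\mathrm{H}_0)$ and $(\mathrm{HC})$ from Lemmas~\ref{lem:ir-derivation-L1} and~\ref{lem:ir-derivation-L2} if the residue is stated with Hadamard-decorated scalars.

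The main obstacle is Step~4, matching the residue scalar syntactically to the one in the statement. No red identity may be used, and colour-swapped versions of $(S)$ and $(CP)$ are not assumed. Any manipulation of red scalar spiders must therefore go through $(H)$ and Hadamards, with $(\mathrm{HC})$ used to move Hadamards across, rather than through a red spider-fusion rule. If the stated residue is simply the pair of scalars from $(\mathrm{LI}_r)$, Step~4 is trivial and the proof is the three rewrites above. Otherwise I would first try to push each scalar into green form via $(H)$ and then fuse with $(S)$.
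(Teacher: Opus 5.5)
Your Steps~1--3 are the intended derivation. You use $(\mathrm{LI}_r)$ right-to-left to place a phase-free red $1\to1$ spider, together with its scalar pair, between the two Hadamards; then you collapse $H\circ X^0_{1,1}\circ H$ to $Z^0_{1,1}$ with $(H)$ and delete it with $(I_g)$. The residue in $(\mathrm{H}^{-1})$ should be just the scalar pair carried over from $(\mathrm{LI}_r)$, since the paper's proof is a single short chain, so your Step~4 is likely vacuous and the contingency rewriting there is not needed.
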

\begin{irproof}
\widehkstikzitinput{HKS/figures/ir-derivation/renaud-short/ir-atomic-l04-proof.tikz}
\end{irproof}

We derive a scalar-tracked version of red spider fusion:

\Needspace{0.35\textheight}
\begin{lemma}
\label{lem:ir-derivation-L5}
\irshortstatement{ir-atomic-l07-statement.tikz}{L5}{\mathrm{LS}_r}
\end{lemma}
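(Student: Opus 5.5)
The plan is to reduce red fusion to the green spider rule \((S)\) by moving between colours with \((H)\). The Hadamard self-inverse lemma \hyperref[eq:ir-derivation-L4]{\((\mathrm{H}^{-1})\)} (Lemma~\ref{lem:ir-derivation-L4}) supplies the Hadamard pairs this needs. I expect the statement of \hyperref[eq:ir-derivation-L5]{\((\mathrm{LS}_r)\)} to say the following: two red spiders with phases \(\alpha\) and \(\beta\), joined by at least one wire, equal a single red spider of phase \(\alpha+\beta\), together with exactly the scalar residue produced by one use of \((\mathrm{H}^{-1})\). The proof should show where that single residue comes from.

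\textbf{Step 1.} Starting from the left-hand side, I would use \((\mathrm{H}^{-1})\) from right to left on every external leg of both red spiders. This inserts a pair of Hadamards on each leg, at the cost of one scalar residue per leg. By ``only connectivity matters'', each such pair can be split so that one Hadamard sits next to the spider body. Each red spider then has a Hadamard on every leg except the connecting wire(s).

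\textbf{Step 2.} For the connecting wire, I would insert a single Hadamard pair in the same way, again via \((\mathrm{H}^{-1})\) from right to left. One Hadamard of the pair is attached to each spider. Now each red spider is fully surrounded by Hadamards, and \((H)\) turns each into a green spider with the same phase. The two green spiders are joined directly by a bare wire, so \((S)\) fuses them into a single green spider of phase \(\alpha+\beta\).

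\textbf{Step 3.} I would apply \((H)\) from right to left to this green spider. This gives a red \((\alpha+\beta)\)-spider with a Hadamard on every leg. On each external leg that Hadamard meets the leftover Hadamard from Step~1. Applying \((\mathrm{H}^{-1})\) forwards removes each pair together with its scalar residue.

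Because each external leg received one backward and one forward use of \((\mathrm{H}^{-1})\), those scalars cancel exactly, with no cleanup needed. The only net scalar left is the residue from the connecting wire, which is the factor I expect in the statement.

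The main obstacle is the case of several connecting wires. There each extra connecting wire would produce its own residue, and in that case the extra residues have to be removed. For that I would first fuse only along one connecting wire. The remaining wires then become self-loops on a single green spider, or parallel edges handled by \((S)\) with its one-or-more-wires convention. The scalar cleanup would go through \hyperref[eq:ir-derivation-L3]{\((\mathrm{LI}_r)\)} and \((E)\). I would settle this case by matching the exact scalar stated in \((\mathrm{LS}_r)\), which I cannot see here.

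A second, smaller check concerns the phase-free \(n+m=2\) red spiders that appear during these rewrites. They must not be replaced by plain wires, since doing so would quietly use \((I_r)\), the very rule being derived. Every red spider in Steps 1–3 is therefore handled only through \((H)\) and \((S)\).
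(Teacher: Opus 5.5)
\textbf{The gap: scalar bookkeeping.} Your overall strategy is the right one: change colour with $(H)$, fuse with $(S)$, change colour back. The problem is the scalar accounting. It silently assumes that using $(\mathrm{H}^{-1})$ right-to-left \emph{creates} a residue. In this appendix, $(\mathrm{LI}_r)$ deletes a red identity \emph{together with} the two scalar factors, and $(\mathrm{H}^{-1})$ is its Hadamard counterpart. Since $H\circ X^0_{1,1}\circ H=Z^0_{1,1}=\mathrm{id}$ by $(H)$ and $(I_g)$, it reads $H\circ H=\mathrm{id}\otimes(\text{residue})$. So removing a Hadamard pair leaves a residue behind, and inserting one consumes a residue.

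Your Step~1 therefore needs one residue on every external leg, plus one on the connecting wire, all present before $(H)$ can fire. The left-hand side of the lemma does not supply them. Carried out honestly, your steps only prove $\mathrm{LHS}\otimes r^{\otimes(e+1)}=\mathrm{RHS}\otimes r^{\otimes e}$, where $e$ is the number of external legs. Nothing available at this stage cancels $r^{\otimes e}$: discarding the pair is exactly the cleanup $(\mathrm{IV}')$, which is proved only later. ``One backward and one forward use cancel'' is not a valid step in an equational theory without scalar inverses. For the same reason, your proposed cleanup of extra residues via $(\mathrm{LI}_r)$ and $(E)$ has no mechanism. $(\mathrm{LI}_r)$ only absorbs a residue together with a red identity, and $(E)$ concerns a different scalar. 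With several connecting wires, each one genuinely costs its own residue.

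\textbf{The missing idea: Hadamards on external legs are free.} Apply $(I_g)$ and then $(H)$ right-to-left, so that a bare wire becomes $Z^0_{1,1}=H\circ X^0_{1,1}\circ H$. Insert this on every external leg. Only the connecting wire needs a paid step, consuming exactly one residue. You can use $(\mathrm{H}^{-1})$ right-to-left there. Alternatively, insert $H\circ X^0_{1,1}\circ H$ there as well, and later delete the $X^0_{1,1}$ left between the two green spiders with $(\mathrm{LI}_r)$.

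Now proceed as follows:
\begin{enumerate}
\item $(H)$ turns both red spiders green, leaving $X^0_{1,1}\circ H$ on each external leg.
\item $(S)$ fuses the two green spiders.
\item $(H)$ right-to-left gives the red $(\alpha+\beta)$-spider with a Hadamard on each leg.
\item Each external leg now reads $H\circ X^0_{1,1}\circ H$, which collapses to a wire by $(H)$ and $(I_g)$, without ever invoking $(I_r)$.
\end{enumerate}

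The result is: red $\alpha$ joined to red $\beta$, tensored with one residue, equals red $(\alpha+\beta)$. The residue therefore sits on the two-spider side, not beside the fused spider as you predicted. This is consistent with $(EU)$ at all-zero phases, which together with $(I_g)$ already gives $X^0_{1,1}\circ X^0_{1,1}$ tensored with the two $(EU)$ scalars equal to $X^0_{1,1}$.
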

\begin{irproof}
\widehkstikzitinput{HKS/figures/ir-derivation/renaud-short/ir-atomic-l07-proof-a.tikz}
\vspace{-0.75\baselineskip}
\irshortrow{%
	\hkstikzitinput{HKS/figures/ir-derivation/renaud-short/ir-atomic-l07-proof-a2.tikz}\irshortjoin
	\hkstikzitinput{HKS/figures/ir-derivation/renaud-short/ir-atomic-l07-proof-b.tikz}%
}
\vspace{-1.5\baselineskip}
\end{irproof}

\Needspace{0.45\textheight}
We obtain a useful scalar-tracked form of the familiar Hopf law:

\begin{lemma}
\label{lem:ir-derivation-L6}
\irshortstatement{ir-renaud-l06-statement.tikz}{L6}{\mathrm{HF}}
\end{lemma}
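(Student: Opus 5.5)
The plan is to derive \eqref{eq:ir-derivation-L6} by the usual route for the Hopf law in ZX. First create a bialgebra configuration, then apply \((B)\), then disconnect the diagram with the copy rule \((CP)\). Because \((I_r)\) and colour-swapped rules are not available in \(\zxopt\), every step touching a red spider must go through the scalar-tracked lemmas proved above: \eqref{eq:ir-derivation-L3} to insert or remove red identities, \eqref{eq:ir-derivation-L5} in place of red fusion, and \eqref{eq:ir-derivation-L4} whenever Hadamards appear.

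The steps, in order, are as follows.

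\textbf{Step 1.} Starting from the left-hand side (a green and a red spider joined by two parallel wires), use \((S)\) to unfuse the green spider. This isolates a green node of degree three whose two remaining legs are the parallel wires.

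\textbf{Step 2.} Use \eqref{eq:ir-derivation-L5}, read backwards, to unfuse the red spider symmetrically. Where needed, use \eqref{eq:ir-derivation-L3}, read backwards, to insert red degree-two nodes on wires. This introduces the scalar pair of \eqref{eq:ir-derivation-L3}, which I record.

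\textbf{Step 3.} By connectivity, arrange the nodes so that a \(K_{2,2}\) subgraph (two green and two red nodes, every green joined to every red) matches the left-hand side of \((B)\). Rewrite it to a single green--red edge. This also produces the scalar pair drawn in \((B)\), which I record.

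\textbf{Step 4.} The result should contain a red unit (a one-legged red node) feeding a green spider. Apply \((CP)\) to copy it onto the green spider's remaining legs, which disconnects the diagram.

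\textbf{Step 5.} Fuse the leftovers with \((S)\) and \eqref{eq:ir-derivation-L5}, and remove any degree-two green nodes with \((I_g)\).

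After Step 5 the diagram should be a green effect and a red state on the open wires, together with a collection of closed scalar diagrams. The final phase is scalar bookkeeping. I will collect all the closed components produced by \eqref{eq:ir-derivation-L3}, \((B)\), \((CP)\) and \eqref{eq:ir-derivation-L5}. Using \((S)\) and \((CP)\), I will normalise each into the two standard scalar shapes appearing in the statement: the three-wire red/green loop and the \(\pi\)-red/green pair. Then I will cancel the surplus against the inverse pair of \((E)\), and use \eqref{eq:ir-derivation-L1} and \eqref{eq:ir-derivation-L2} if a Hadamard-coloured scalar appears. The target is exactly the scalar residue displayed on the right of \eqref{eq:ir-derivation-L6}.

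I expect the main obstacle to be the shape-matching in Step 3, together with the scalar accounting at the end. The parallel-wire loop is not literally a \(K_{2,2}\), so the red identity nodes inserted in Step 2 must be placed so that connectivity yields the exact left-hand side of \((B)\), and each insertion costs a scalar pair. My first attempt will insert a single red identity via \eqref{eq:ir-derivation-L3} on one of the parallel wires, then unfuse one green node via \((S)\). I will check whether connectivity then yields a \(K_{2,2}\) directly. If too many scalar pairs accumulate, I will switch to inserting Hadamard pairs via \eqref{eq:ir-derivation-L4} and use \((H)\) to recolour, trading the red identities for green ones, which are free under \((I_g)\).
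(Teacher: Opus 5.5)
Your overall route (unfuse, build a bialgebra redex, fire (B), disconnect with (CP), refuse with the scalar-tracked red lemmas) is the right kind of argument. However, there are two genuine gaps.

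\textbf{The bialgebra redex is never built.} Your Steps~1--2 only re-isolate the same double edge. Your one concrete recipe cannot work: you put a red identity on one of the parallel wires and then unfuse the green. The new red node is adjacent to the red spider, and the green touches each red node along a single wire, so no unfusion of the green makes both green nodes adjacent to both red ones. The Hadamard fallback addresses scalar count, not this shape problem.

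The missing idea is as follows.
\begin{itemize}
\item Split the green spider so that the two parallel wires end on \emph{different} green nodes: $g_1$ (carrying the open wire) and $g_2$.
\item Put the red identity of \eqref{eq:ir-derivation-L3} on the new $g_1$--$g_2$ wire.
\item Pad the two degree-two nodes: $g_2$ with a green unit via (S), and the new red node with a red unit via \eqref{eq:ir-derivation-L5} read backwards.
\end{itemize}
Only then are the four nodes trivalent with one external leg each, as the left-hand side of (B) demands. Also, after (CP) and fusing the copied red unit into the red node, what remains is a \emph{red} degree-two node between the open wire and the green unit. So the last step is \eqref{eq:ir-derivation-L3} again, not $(I_g)$.

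\textbf{Your scalar bookkeeping runs backwards.} In a scalar-tracked lemma this is the whole content. In Figure~2 the single-wire red--green scalar sits on the \emph{left} of both (B) and (CP). So rewriting $K_{2,2}$ to an edge, and copying the red unit, each \emph{consume} one copy of that scalar rather than producing it. Neither rule can fire unless that copy is already in the diagram. Your plan never supplies these two copies, and it would record them on the wrong side.

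In the route above, \eqref{eq:ir-derivation-L3} and \eqref{eq:ir-derivation-L5} are each applied to the same instance once backwards and once forwards, so their scalar factors cancel. The net cost is exactly the two red--green scalars eaten by (B) and (CP). This is consistent with the standard semantics: the Hopf diagram is $\tfrac12$ times the disconnected one, and each such scalar is $\sqrt2$.

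Your closing phase, which normalises into three-wire loops and $\pi$-pairs and then cancels against (E), has no support at this point. (E) only removes the specific $\frac{\pi}{4}$/$\frac{-\pi}{4}$ pair, and the general scalar cleanup rule is established only later, as \eqref{eq:ir-derivation-L7}. The displayed residue has to come from exact tracking of the rules you fire, not from an after-the-fact normalisation.
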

\begin{irproof}
\irshortrow{%
	\hkstikzitinput{HKS/figures/ir-derivation/renaud-short/ir-renaud-l06-proof-1.tikz}\irshortjoin
	\hkstikzitinput{HKS/figures/ir-derivation/renaud-short/ir-renaud-l06-proof-2.tikz}%
}
\irshortrow{%
	\hkstikzitinput{HKS/figures/ir-derivation/renaud-short/ir-renaud-l06-proof-3.tikz}\irshortjoin
	\hkstikzitinput{HKS/figures/ir-derivation/renaud-short/ir-renaud-l06-proof-a2.tikz}%
}
\irshortrow{%
	\hkstikzitinput{HKS/figures/ir-derivation/renaud-short/ir-renaud-l06-proof-a3.tikz}\irshortjoin
	\hkstikzitinput{HKS/figures/ir-derivation/renaud-short/ir-renaud-l06-proof-6.tikz}\irshortjoin
	\hkstikzitinput{HKS/figures/ir-derivation/renaud-short/ir-renaud-l06-proof-7.tikz}%
}
\end{irproof}

We now obtain the scalar cleanup rule which lets us remove the following pair of scalar factors:

\Needspace{0.45\textheight}
\begin{lemma}
\label{lem:ir-derivation-L7}
\irshortstatement{ir-atomic-l14-statement.tikz}{L7}{\mathrm{IV}'}
\end{lemma}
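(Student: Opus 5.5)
The plan is to obtain the pair of scalar factors in (IV') directly from an instance of \((EU)\) with trivial angles, and then cancel the diagram it multiplies. Setting \(\alpha_1=\alpha_2=\alpha_3=0\) in the side conditions of \((EU)\) gives \(x^+=x^-=0\), \(z=1\) and \(z'=0\). The conventions then force \(\beta_2=0\), \(\beta_1=\beta_3=\arg 1=0\), and \(\gamma=0-0+0=0\). So this instance of \((EU)\) reads: a wire carrying a green \(0\), a red \(0\) and a green \(0\) equals the same wire tensored with the scalar \(\chi\) (red--green, three parallel edges) and the scalar \(\omega\) (red \(\pi\) linked to a green \(0\)). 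With \(\gamma=0\), this \(\omega\) is exactly the second factor in the statement of (IV'). As a sanity check, in the standard interpretation \(\chi\mapsto 1/\sqrt2\) and \(\omega\mapsto\sqrt2\), so the product is \(1\), consistent with (IV') asserting that the pair equals the empty diagram.

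Next I would strip the left-hand side down. The two green \(0\) nodes are removed by \((I_g)\), which leaves a red \(0\) identity on a wire. I would then use Lemma~\ref{lem:ir-derivation-L3} \((\mathrm{LI}_r)\) to remove this red identity at the cost of its two known scalar factors, combining with Lemma~\ref{lem:ir-derivation-L5} \((\mathrm{LS}_r)\) or Lemma~\ref{lem:ir-derivation-L4} \((\mathrm{H}^{-1})\) where the scalar bookkeeping demands it. Applying the same simplification to the wire on the right-hand side yields an equation of the form
\[
S\otimes(\text{wire}) \;=\; S\otimes\chi\otimes\omega\otimes(\text{wire}),
\]
where \(S\) collects the residue from \((\mathrm{LI}_r)\). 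Closing the wire into a loop with a cup and a cap turns both sides into scalars, so \(T=T\otimes\chi\otimes\omega\) for an explicit scalar diagram \(T\).

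The last step is to cancel \(T\), and I expect this to be the main obstacle. It requires an explicit diagram \(T'\) with \(T\otimes T'=\) empty, derived only from the rules in \(\zxopt\). My first attempt would build \(T'\) from copies of the \((E)\) scalar (green \(\pi/4\) linked to red \(-\pi/4\)), which is the only rule with an empty side. I would use \((S)\), \((H)\), \((HD)\) and the Hopf-type Lemma~\ref{lem:ir-derivation-L6} \((\mathrm{HF})\) to rewrite \(T\) together with suitable \((E)\)-scalars into a product of \((E)\)-scalars, which \((E)\) then deletes. The delicate point is that none of the cancellation lemmas used here may themselves rely on (IV'). I would also cross-check each intermediate scalar against its standard interpretation to make sure the bookkeeping closes up.
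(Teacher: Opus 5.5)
Your computation of the zero-angle instance of $(EU)$ is correct ($\beta_1=\beta_2=\beta_3=\gamma=0$, and $\chi\mapsto 1/\sqrt2$, $\omega\mapsto\sqrt2$). However, the proposal does not prove the lemma. It stops exactly where the content of $(\mathrm{IV}')$ lies.

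First, the right-hand side of that instance is red $0$ -- green $0$ -- red $0$ on the wire, not the same wire as on the left. After $(I_g)$, the instance says only that one red $0$ node equals two red $0$ nodes in series, tensored with $\chi\otimes\omega$. In other words, it is just a scalar-tracked red fusion. Since $(S)$ fuses only green spiders, simplifying it forces you through $(\mathrm{LS}_r)$, $(\mathrm{LI}_r)$ or $(\mathrm{H}^{-1})$. Each of these carries its own residue, and by soundness every residue is a scalar diagram of standard value $1$. So steps 1--2 can only shuffle such residues between the two sides:
\begin{itemize}
\item if the two factors removed by $(\mathrm{LI}_r)$ are themselves $\chi$ and $\omega$, your equation collapses to a tautology;
\item otherwise you have traded "$\chi\otimes\omega$ equals the empty diagram" for another unproved scalar identity of the same type.
\end{itemize}
In neither case do you reach an equation with $\chi\otimes\omega$ standing opposite nothing.

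Second, the decisive step, producing $T'$ with $T\otimes T'$ equal to the empty diagram, is left as a hope, and it is the whole difficulty. It is a scalar-cleanup statement of exactly the kind $(\mathrm{IV}')$ is. You also cannot appeal to completeness: $(\mathrm{IV}')$ is an ingredient in the derivation of $(I_r)$, on which completeness of \zxopt rests. Closing the wire into a loop makes things worse, because it adds the circle scalar (standard value $2$), whose inverse is not available either.

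What is needed is a derivation that leaves no residue at all: an explicit chain of rewrites from $\chi\otimes\omega$ to the empty diagram. Such a chain must end with an application of $(E)$, the only rule with an empty side. The paper's proof is such a chain. Your proposal never reaches a residue-free equation, so as it stands it does not establish the lemma.
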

\begin{irproof}
\irshortrow{%
	\hkstikzitinput{HKS/figures/ir-derivation/renaud-short/ir-renaud-l07-proof-1.tikz}\irshortjoin
	\hkstikzitinput{HKS/figures/ir-derivation/renaud-short/ir-renaud-l07-proof-3.tikz}\irshortjoin
	\hkstikzitinput{HKS/figures/ir-derivation/renaud-short/ir-renaud-l07-proof-4.tikz}%
}
\irshortrow{%
	\hkstikzitinput{HKS/figures/ir-derivation/renaud-short/ir-renaud-l07-proof-5.tikz}\irshortjoin
	\hkstikzitinput{HKS/figures/ir-derivation/renaud-short/ir-renaud-l07-proof-6.tikz}\irshortjoin
	\hkstikzitinput{HKS/figures/ir-derivation/renaud-short/ir-renaud-l07-proof-6a.tikz}%
}
\irshortrow{%
	\hkstikzitinput{HKS/figures/ir-derivation/renaud-short/ir-renaud-l07-proof-7.tikz}\irshortjoin
	\hkstikzitinput{HKS/figures/ir-derivation/renaud-short/ir-renaud-l07-proof-7a.tikz}\irshortjoin
	\hkstikzitinput{HKS/figures/ir-derivation/renaud-short/ir-renaud-l07-proof-8.tikz}\irshortjoin
	\hkstikzitinput{HKS/figures/ir-derivation/renaud-short/ir-renaud-l07-proof-9.tikz}%
}
\end{irproof}

Two necessary technical lemmas:

\Needspace{0.45\textheight}
\begin{lemma}
\label{lem:ir-derivation-L8}
\irshortstatement{ir-renaud-l08-statement.tikz}{L8}{\mathrm{IV}''}
\end{lemma}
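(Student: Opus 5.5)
We expect $(\mathrm{IV}'')$ to be a further scalar-cleanup identity: it eliminates a pair of disconnected scalar subdiagrams, essentially the scalar configuration appearing in $(IV)$ of Figure~3 or a phase-shifted variant of the pair removed in Lemma~\ref{lem:ir-derivation-L7}. The plan is to reduce it to $(\mathrm{IV}')$ of Lemma~\ref{lem:ir-derivation-L7}, not to redo the scalar bookkeeping from scratch. Concretely, we rewrite one side of $(\mathrm{IV}'')$ until it contains the exact pair of scalars that $(\mathrm{IV}')$ erases, plus leftover pieces that $(E)$ or $(S)$ can absorb.

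\textbf{Step 1: normalise the scalars.} We use $(S)$ to fuse adjacent same-colour nodes. We use $(H)$ together with $(\mathrm{H}_0)$ and $(\mathrm{HC})$ from Lemmas~\ref{lem:ir-derivation-L1}--\ref{lem:ir-derivation-L2} to push any Hadamards out of the scalar and convert a red node into a green one where convenient. If a phase-carrying scalar appears (a green $\alpha$ attached to a red node), we use $(CP)$, or the Hopf form $(\mathrm{HF})$ of Lemma~\ref{lem:ir-derivation-L6}, to disconnect the phase-carrying node. The phase then ends up in a component that spider fusion can merge with its inverse.

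\textbf{Step 2: insert a cancelling pair and apply the known cleanups.} We introduce an empty-diagram-equivalent factor using $(E)$, namely the $\frac{\pi}{4}$/$\frac{-\pi}{4}$ pair. We then apply $(\mathrm{H}^{-1})$ of Lemma~\ref{lem:ir-derivation-L4} and the scalar-tracked fusion $(\mathrm{LS}_r)$ of Lemma~\ref{lem:ir-derivation-L5}. This lets us regroup the factors so that one block matches the left-hand side of $(\mathrm{IV}')$ exactly. Applying $(\mathrm{IV}')$ removes that block, and a final use of $(E)$ read backwards removes the inserted pair. Connectivity lets us treat every disconnected scalar as freely relocatable, so the regrouping costs nothing.

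\textbf{Main obstacle.} The key difficulty is Step 2's regrouping. Red-spider fusion and the red identity are not yet available without scalar residues: $(I_r)$ is exactly what is being derived, and red fusion only holds in the scalar-tracked form $(\mathrm{LS}_r)$. Every colour-change or fusion step therefore spawns extra scalars, which must be tracked and cancelled exactly. To control this, we would first compute the standard interpretation $\sem{-}$ of each intermediate scalar. This fixes numerically which residues ($\sqrt2$ factors and phases) must appear, and only then do we choose the diagrammatic steps. If a stubborn global phase remains, we would invoke $(EU)$ at trivial angles, where $\gamma$ absorbs the phase, to cancel it.
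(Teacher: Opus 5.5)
There is a genuine gap. The proposal never fixes what $(\mathrm{IV}'')$ asserts: you hedge between the $(IV)$ configuration of Figure~3 and a phase-shifted variant of the pair in Lemma~\ref{lem:ir-derivation-L7}. As a result it never exhibits a derivation. The paper proves the lemma by an explicit chain of diagrammatic rewrites, each step justified by a rule of \zxopt or an earlier lemma. In your plan, the step that would carry all the content is simply asserted, namely regrouping the factors ``so that one block matches the left-hand side of $(\mathrm{IV}')$ exactly''.

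Your remark that regrouping ``costs nothing'' holds only for relocating disconnected components. The moves you actually rely on are colour changes via $(H)$, $(\mathrm{H}^{-1})$ and $(\mathrm{LS}_r)$, and each of these leaves scalar residues; that is why those lemmas are stated in scalar-tracked form. You would have to show that these residues cancel exactly. That is precisely the kind of scalar identity an $(\mathrm{IV})$-type lemma exists to supply, so without an explicit accounting the plan risks circularity.

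Two further steps do no work. Inserting an $(E)$ pair and later deleting it with $(E)$ is a no-op unless the pair takes part in some rewrite in between, and you describe none. Computing $\sem{-}$ of the intermediate scalars tells you what is true, not how to derive it. You cannot appeal to completeness here, because completeness of \zxopt is exactly what rests on deriving $(I_r)$ through these lemmas.

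Your fallback is also wrong as stated. At $\alpha_1=\alpha_2=\alpha_3=0$ the formulas of Figure~2 give $x^\pm=0$, $z=1$ and $z'=0$, hence $\beta_1=\beta_2=\beta_3=0$ and $\gamma=0$. So $(EU)$ at trivial angles absorbs no phase. It rewrites a green--red--green chain into a red--green--red chain, and it \emph{creates} two scalars: the three-wire dumbbell and the red $\pi$ state plugged into a green $0$ effect. These are the very kind of scalar factors the cleanup lemmas are designed to remove. Making $\gamma$ equal a prescribed phase requires non-trivial angles, and then the $\beta_i$ are generally nonzero and leave more to clean up.

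To turn this into a proof, you must write down the two sides of $(\mathrm{IV}'')$. Then give a concrete sequence of rewrites, accounting for every scalar each step produces.
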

\begin{irproof}
\irshortrow{%
	\hkstikzitinput{HKS/figures/ir-derivation/renaud-short/ir-renaud-l08-proof-1.tikz}\irshortjoin
	\hkstikzitinput{HKS/figures/ir-derivation/renaud-short/ir-renaud-l08-proof-2.tikz}\irshortjoin
	\hkstikzitinput{HKS/figures/ir-derivation/renaud-short/ir-renaud-l08-proof-3.tikz}\irshortjoin
	\hkstikzitinput{HKS/figures/ir-derivation/renaud-short/ir-renaud-l08-proof-4.tikz}%
}
\irshortrow{%
	\hkstikzitinput{HKS/figures/ir-derivation/renaud-short/ir-renaud-l08-proof-5.tikz}\irshortjoin
	\hkstikzitinput{HKS/figures/ir-derivation/renaud-short/ir-renaud-l08-proof-6.tikz}\irshortjoin
	\hkstikzitinput{HKS/figures/ir-derivation/renaud-short/ir-renaud-l08-proof-7.tikz}%
}
\irshortrow{%
	\hkstikzitinput{HKS/figures/ir-derivation/renaud-short/ir-renaud-l08-proof-7a.tikz}\irshortjoin
	\hkstikzitinput{HKS/figures/ir-derivation/renaud-short/ir-renaud-l08-proof-8a.tikz}\irshortjoin
	\hkstikzitinput{HKS/figures/ir-derivation/renaud-short/ir-renaud-l08-proof-9.tikz}%
}
\end{irproof}

\Needspace{0.40\textheight}
The following lemma is a useful identity allowing the red state to delete a green phase gate of \(-\pi/2\):

\begin{lemma}
\label{lem:ir-derivation-L9}
\irshortstatement{ir-renaud-l09-statement.tikz}{L9}{\mathrm{D}_{-}}
\end{lemma}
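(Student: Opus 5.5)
The plan is to prove \eqref{eq:ir-derivation-L9} by pushing the green $-\frac{\pi}{2}$ phase into a red spider using the Hadamard machinery. \hyperref[eq:ir-derivation-L1]{$(\mathrm{H}_0)$}, \hyperref[eq:ir-derivation-L2]{$(\mathrm{HC})$}, \lemmalink{4}{} and the scalar-tracked fusion \lemmalink{5}{} are already available. I read the lemma, semantically, as follows: a red state carrying a quarter-turn phase, followed by the green phase gate of $-\frac{\pi}{2}$, equals a red state of a different phase, together with the scalar diagrams that the previous lemmas keep track of. Concretely, $P(-\frac{\pi}{2})$ sends the $Y$-eigenstate $\ket{0}+i\ket{1}$ to $\ket{+}$. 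Before writing any rewrite, I will pin down the exact scalar discrepancy with $\intf{-}$, so that I know which scalar pair must appear or be cancelled at the end.

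First I rewrite the red state, together with the adjacent green $-\frac{\pi}{2}$, in green form. I use (H) to turn the red state into a green state behind a Hadamard, and then (HD) to expand that Hadamard as green $\frac{\pi}{2}$, red vertex, green $\frac{\pi}{2}$, with the green $-\frac{\pi}{2}$ leg on the red vertex. After this step, the green $-\frac{\pi}{2}$ gate of the statement sits directly against a green $\frac{\pi}{2}$ node. Spider fusion (S) merges them into a phase-$0$ green node, and (I$_g$) removes that node. This cancellation is the real content of the lemma.

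What remains is a green state, or phase, feeding the red vertex of the (HD) decomposition, with a green $\frac{\pi}{2}$ on the opposite side. Next I fuse the green state into the adjacent green $\frac{\pi}{2}$ where possible. The resulting green state attached to the red vertex should be recognisable as the left-hand side of \hyperref[eq:ir-derivation-L2]{$(\mathrm{HC})$}, or else handled by the copy rule (CP) when the relevant phase is $0$ or $\pi$. If instead the leftover shape is a Hadamard conjugating a red state, I will rebuild a Hadamard via (HD) read backwards and apply (H) again to return to a red state. I will then use \lemmalink{4}{} to cancel any pair of Hadamards that appears, accepting the scalar residue it produces.

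I expect the main obstacle to be the scalars, as everywhere in this appendix. Each application of (HD), (CP) and \lemmalink{4}{} leaves disconnected scalar diagrams, and these must be shown to combine into exactly the scalar in the statement of \eqref{eq:ir-derivation-L9}. For this I will use the scalar cleanup \lemmalink{7}{} and \lemmalink{8}{}, together with the Hopf form \lemmalink{6}{}, to collapse loops or double edges into the standard scalar pair. Since (I$_r$) is not yet available, I must be careful never to silently erase a phase-free degree-two red node; any such node will be removed only through \lemmalink{3}{}, paying its scalar factors. If the bookkeeping does not close, my fallback is to instantiate (EU) on the $1\to1$ diagram obtained by closing the state with the green $-\frac{\pi}{2}$ and a further Hadamard, read off the resulting Euler angles, and absorb the leftover $e^{i\gamma}$-type scalar using (E).
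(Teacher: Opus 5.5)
There is a genuine gap, and it starts with the target. The paper introduces $(\mathrm{D}_{-})$ as letting the red state \emph{delete} a green $-\frac{\pi}{2}$ gate. Semantically this is just $P(-\frac{\pi}{2})\ket{0}=\ket{0}$: the red state comes out unchanged. The identity you set out to prove, a red state with a quarter-turn phase becoming a red state of a different phase, is not even sound. Up to scalar, red states are $\cos\frac{\alpha}{2}\ket{0}-i\sin\frac{\alpha}{2}\ket{1}$. The gate $P(-\frac{\pi}{2})$ sends these to vectors with real coefficients, so the only red states it maps to red states are $\ket{0}$ and $\ket{1}$, which it fixes up to scalar. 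Your own sanity check shows this: $\ket{+}$ is not proportional to any red state. No sequence of rewrites can reach the target you fixed.

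Even aimed at the correct statement, the outline never supplies the key step.

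Your first move is not an instance of (H). That rule only gives $H\circ X^\alpha_{0,1}=Z^\alpha_{0,1}$. Writing the bare red state as $H\circ Z^\alpha_{0,1}$ needs $H\circ H=\mathrm{id}$. Together with (H) and $(I_g)$, that is equivalent to $(I_r)$, and here it is available only through $(\mathrm{H}^{-1})$, with its scalar residue.

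More seriously, suppose the green $\frac{\pi}{2}$ of (HD) and the green $-\frac{\pi}{2}$ gate fuse away. You are then left with the phase-free red vertex of (HD) carrying a green $\frac{\pi}{2}$ state and a green $-\frac{\pi}{2}$ state. You must still show that this equals the red state, which is not simpler than what you started with. (CP) does not apply, since it concerns only phase-free spiders, and ``should be recognisable as $(\mathrm{HC})$'' is not an argument.

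The fallback fails too. (EU) rewrites a green--red--green $1\to1$ chain, not a state. (E) removes only the particular scalar made of green $\frac{\pi}{4}$ and red $-\frac{\pi}{4}$, not an arbitrary $e^{i\gamma}$.

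What is missing is a rule that lets a red state pass through a green node. In this ruleset that rule is (CP), and it sees only phase-free green spiders. So the natural move is to split the $-\frac{\pi}{2}$ off with (S) and then copy. That reduces the lemma to comparing two scalars: a red state plugged into a one-legged green $-\frac{\pi}{2}$ spider, and the scalar that (CP) requires. This comparison is where the scalar-tracking lemmas have to do the work.
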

\begin{irproof}
\irshortrow{%
	\hkstikzitinput{HKS/figures/ir-derivation/renaud-short/ir-renaud-l09-proof-a.tikz}\irshortjoin
	\hkstikzitinput{HKS/figures/ir-derivation/renaud-short/ir-renaud-l09-proof-b.tikz}%
}
\irshortrow{%
	\hkstikzitinput{HKS/figures/ir-derivation/renaud-short/ir-renaud-l09-proof-b2.tikz}\irshortjoin
	\hkstikzitinput{HKS/figures/ir-derivation/renaud-short/ir-renaud-l09-proof-b3.tikz}\irshortjoin
	\hkstikzitinput{HKS/figures/ir-derivation/renaud-short/ir-renaud-l09-proof-b4.tikz}\irshortjoin
	\hkstikzitinput{HKS/figures/ir-derivation/renaud-short/ir-renaud-l09-proof-c.tikz}%
}
\end{irproof}

A useful identity is derived:

\Needspace{0.35\textheight}
\begin{lemma}
\label{lem:ir-derivation-L10}
\irshortstatement{ir-renaud-l10-statement.tikz}{L10}{\mathrm{D}'_{\pi}}
\end{lemma}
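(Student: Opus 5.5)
The plan is to reduce \eqref{eq:ir-derivation-L10} to the deletion identity \eqref{eq:ir-derivation-L9} of the previous lemma, using spider fusion to reshape the phase. The identity should say that a red state meeting a green $\pi$ phase gate can be rewritten as a simpler red state, with the scalar diagrams tracked explicitly. First, I use $(S)$ in reverse to split the green $\pi$ phase on the $1\to1$ wire as $\pi = -\frac{\pi}{2} + \frac{3\pi}{2}$. I place the $-\frac{\pi}{2}$ factor adjacent to the red state. Then \eqref{eq:ir-derivation-L9} applies directly and deletes it, leaving a red state followed by a green $\frac{3\pi}{2}$ (equivalently $-\frac{\pi}{2}$ after another use of $(S)$) phase gate, together with the scalar residue from \eqref{eq:ir-derivation-L9}. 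Repeating \eqref{eq:ir-derivation-L9} once more would consume a second $-\frac{\pi}{2}$, but that leaves a leftover $\frac{\pi}{2}\cdot 2$ term. So I will not iterate naively; this is where the structure of the green phase has to be exploited more carefully.

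For the remaining phase I would pass to the Hadamard picture. By $(H)$, the red state is a green state conjugated by a Hadamard. The Hadamard is then expanded with $(HD)$ into green $\frac{\pi}{2}$, red, and green $\frac{\pi}{2}$ pieces, plus the green $-\frac{\pi}{2}$ scalar. The green phases on the wire now fuse by $(S)$ into a single green node. I then collapse the resulting $1\to1$ chain back using $(HD)$ in the other direction, and invoke \eqref{eq:ir-derivation-L1} and \eqref{eq:ir-derivation-L2} to move the Hadamard past the spiders and cancel pairs of Hadamards. After this step, the wire should carry only a red state with phase $\pi$ (or no phase, depending on the exact form of the statement), together with a collection of disconnected scalar diagrams.

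The main obstacle is the scalar bookkeeping, as in the rest of Appendix~A. The residues produced by \eqref{eq:ir-derivation-L9}, $(HD)$ and \eqref{eq:ir-derivation-L4} must be shown to combine into exactly the scalar pair appearing on the right-hand side of \eqref{eq:ir-derivation-L10}. For this I would collect all floating scalars and fuse like-coloured ones with $(S)$. I would then use $(E)$ to cancel the $\frac{\pi}{4}/\!-\!\frac{\pi}{4}$ pair, and \eqref{eq:ir-derivation-L7} and \eqref{eq:ir-derivation-L8} to delete the remaining $\chi$/$\omega$-type scalar pairs, i.e.\ the triple-edge red--green scalar and the red $\pi$--green scalar. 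If the scalars do not match directly, my fallback is to use \eqref{eq:ir-derivation-L6} (the scalar-tracked Hopf law) to generate a compensating pair before cleanup. Throughout, I must avoid any step that implicitly uses $(I_r)$, since that rule is what we are ultimately deriving.
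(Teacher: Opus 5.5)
Your opening instinct, reducing to Lemma~\ref{lem:ir-derivation-L9} with $(S)$, is the right one. As written, though, the proposal contains no derivation.

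The first route stalls where you say the leftover green $\frac{3\pi}{2}$ is ``equivalently $-\frac{\pi}{2}$ after another use of $(S)$''. That is false: $(S)$ only adds angles in $\mathbb{R}$, and no $2\pi$-periodicity is available at this stage, so nothing deletes the remaining $\frac{3\pi}{2}$.

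The Hadamard detour does not remove the $\pi$ either:
\begin{itemize}
\item For a state, $(H)$ gives $H\circ X^0_{0,1}=Z^0_{0,1}$. Rewriting the red state as $H\circ Z^0_{0,1}$ therefore already needs $H\circ H=\mathrm{id}$. In $\zxopt$ this is available only up to the scalar residue of Lemma~\ref{lem:ir-derivation-L4}, and it is essentially equivalent to $(I_r)$ itself.
\item After $(HD)$ and fusion you get a green $\frac{\pi}{2}$ state, the red node (whose green $-\frac{\pi}{2}$ is a leg, not a floating scalar), and a green $\frac{3\pi}{2}$ gate. This is not an $(HD)$ redex. Re-splitting it to apply $(HD)$ backwards just returns $Z^{\pi}_{1,1}\circ H\circ Z^0_{0,1}$.
\item Colour changes can at best trade the green $\pi$ for a red $\pi$; nothing in your list eliminates it.
\end{itemize}
The scalar paragraph is a list of candidate lemmas plus a fallback, not a computation. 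Your hedge on the output (``red state with phase $\pi$ or no phase'') also includes an unsound target under your own reading: a red $0$ state is a multiple of $\ket{0}$, while a red $\pi$ state is a multiple of $\ket{1}$.

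The missing idea, under your reading of the statement (a red state absorbing a green $\pi$ gate), is to use Lemma~\ref{lem:ir-derivation-L9} from right to left. Insert green $-\frac{\pi}{2}$ gates next to the red state rather than trying to delete pieces of $\pi$:
\[
Z^{\pi}_{1,1}\circ X^0_{0,1}
\overset{(\mathrm{D}_-)}{=}
Z^{\pi}_{1,1}\circ Z^{-\pi/2}_{1,1}\circ Z^{-\pi/2}_{1,1}\circ X^0_{0,1}
\overset{(S)}{=}
Z^{0}_{1,1}\circ X^0_{0,1}
\overset{(I_g)}{=}
X^0_{0,1}.
\]
Equivalently, first obtain $Z^{\pi/2}_{1,1}\circ X^0_{0,1}=X^0_{0,1}$ the same way and apply it twice. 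This needs no periodicity, no Hadamard, and no scalar diagrams beyond any that Lemma~\ref{lem:ir-derivation-L9} itself carries. That fits the single short chain the paper gives for this lemma. Routing the argument through $H\circ H$ and Lemma~\ref{lem:ir-derivation-L7} instead imports the hardest scalar bookkeeping of the appendix into what is a direct corollary of $(\mathrm{D}_-)$.
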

\begin{irproof}
\leavevmode\par
\nopagebreak[4]
\vspace{14pt}
\begin{center}
\makebox[\linewidth][c]{%
	\hkstikzitinput{HKS/figures/ir-derivation/renaud-short/ir-renaud-l10-proof-short.tikz}%
}
\end{center}
\vspace{16.368pt}
\end{irproof}

\Needspace{0.32\textheight}
We reproduce the derivation of \((I_r)\) below for readability.

\begin{theorem}
The \((I_r)\) rule is derivable:
\begin{equation*}
\zxopt\vdash
\vcenter{\hbox{\hkstikzitinput{HKS/figures/ir-derivation/ir-atomic-ir-statement.tikz}}}
\end{equation*}
\end{theorem}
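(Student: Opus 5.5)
The plan is to assemble the appendix lemmas into one chain that turns the degree-two, zero-phase red spider into a plain wire. The main mechanism is to convert the red spider into a green one using (H) and then remove the green spider with (I\(_g\)). Concretely, I start from the red \(1\to1\) spider with phase \(0\). I apply \((\mathrm{LI}_r)\) (Lemma~\ref{lem:ir-derivation-L3}) in reverse to introduce the two scalar factors appearing there. This is the wrong direction for a naive simplification, but it is exactly what lets the (H) rule fire. In the resulting configuration, the red identity is flanked by Hadamards: \((\mathrm{H}^{-1})\) (Lemma~\ref{lem:ir-derivation-L4}) lets us insert a pair \(H\circ H\) on the wire at the cost of a known scalar residue. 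Then (H) turns the red spider surrounded by Hadamards into a green \(1\to1\) spider with phase \(0\), and (I\(_g\)) deletes it.

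After this, the \(1\to1\) part of the diagram is a bare wire, possibly carrying a leftover \(H\circ H\) pair. That pair is removed again by \((\mathrm{H}^{-1})\), which reintroduces scalar residues. The remaining work is entirely on the scalar part: I must show that the disconnected scalar diagrams accumulated in the previous steps reduce to the empty diagram. For this I would collect all the scalars and fuse like-coloured pieces using (S) and the scalar-tracked red fusion \((\mathrm{LS}_r)\) (Lemma~\ref{lem:ir-derivation-L5}). Triple-edge (Hopf-type) scalars are simplified with \((\mathrm{HF})\) (Lemma~\ref{lem:ir-derivation-L6}). The \(\pi\)- and \(\pm\pi/2\)-phase scalars are normalised with \((\mathrm{D}_-)\) and \((\mathrm{D}'_\pi)\) (Lemmas~\ref{lem:ir-derivation-L9},~\ref{lem:ir-derivation-L10}), using (HD) and (E) where a \(\pi/4\) pair shows up.

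The final cancellation should be exactly an instance of the cleanup rules \((\mathrm{IV}')\) and \((\mathrm{IV}'')\) (Lemmas~\ref{lem:ir-derivation-L7},~\ref{lem:ir-derivation-L8}). These remove the pair consisting of the triple-edged red--green scalar and the \(\pi\)-red/green scalar, together with its inverse partner. Soundness under the standard interpretation gives a sanity check at every step: the scalars must multiply to \(1\). So I would first compute their product numerically and decide in advance which lemma kills which factor.

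I expect the main obstacle to be this scalar bookkeeping, not the wire manipulation. Without \((I_r)\) we cannot freely treat red degree-two spiders as wires inside the scalars themselves. So every use of red fusion or of the Hopf law must go through the scalar-tracked versions, and the ordering must avoid circularity: no intermediate step may secretly rely on \((I_r)\). If a stray factor remains, my first attempt would be to rewrite it into the form of the left-hand side of \((\mathrm{IV}')\) by one more application of \((\mathrm{HC})\) (Lemma~\ref{lem:ir-derivation-L2}) and \((\mathrm{H}_0)\) (Lemma~\ref{lem:ir-derivation-L1}).
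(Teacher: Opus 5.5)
There is a genuine gap, and it sits exactly where the content of this theorem lies.

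\textbf{The opening step.} Your first move misreads $(\mathrm{LI}_r)$. That lemma removes the red identity \emph{together with} its two scalar factors, so its right-hand side is a bare wire. Read from right to left, it turns a wire into a red identity plus scalars. It cannot attach those scalars to the red identity you start from, and it creates no Hadamards around the red spider. So the chain does not get started as you describe it.

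\textbf{The Hadamard mechanism is not the hard part.} $(H)$ with $n=m=1$, followed by $(I_g)$, already shows that $H\circ X^0_{1,1}\circ H$ equals a wire, with no scalars at all. Any conjugation argument using $(\mathrm{H}^{-1})$ therefore ends at an equation of the form $X^0_{1,1}\otimes\sigma=\mathrm{wire}\otimes\sigma'$, where $\sigma,\sigma'$ are the accumulated scalar residues. Everything then hinges on deriving that these specific residues cancel.

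\textbf{The scalar step is left open.} Your proposal never says which scalar diagram has to be eliminated, or how. ``Collect the scalars; the final cancellation should be an instance of $(\mathrm{IV}')$ and $(\mathrm{IV}'')$,'' with a fallback if a stray factor remains, restates the problem rather than solving it. The numerical check cannot fill this in. Equal values yield a derivation only via completeness of \zxv, and that ruleset contains $(I_r)$ itself.

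\textbf{The missing idea.} Use the lemmas in the other direction. Since $(\mathrm{LI}_r)$ already deletes the red identity along with two particular scalar factors, $(I_r)$ reduces to one purely scalar fact: those two factors can be created from the empty diagram. This is the structure of the paper's short derivation:
\begin{itemize}
\item the scalar cleanup lemmas $(\mathrm{IV}')$ and $(\mathrm{IV}'')$, with $(\mathrm{D}_-)$ and $(\mathrm{D}'_\pi)$ used to bring the phases into exactly the required shape, produce those factors next to $X^0_{1,1}$;
\item a single application of $(\mathrm{LI}_r)$ then leaves the bare wire.
\end{itemize}
The Hadamard-conjugation work you describe is already packaged inside the earlier lemmas.

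Your route instead generates extra residues, from the reverse use of $(\mathrm{LI}_r)$ and from repeated uses of $(\mathrm{H}^{-1})$, and leaves their removal unspecified. That makes the one genuinely hard step harder rather than easier.
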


\begin{irproof}
\leavevmode\par
\begin{center}
\makebox[\linewidth][c]{%
	\hkstikzitinput{HKS/figures/ir-derivation/renaud-short/ir-renaud-ir-proof-a.tikz}\hspace{1em}%
	\hkstikzitinput{HKS/figures/ir-derivation/renaud-short/ir-renaud-ir-proof-b.tikz}\hspace{1em}%
	\hkstikzitinput{HKS/figures/ir-derivation/renaud-short/ir-renaud-ir-proof-c.tikz}%
}
\end{center}
\end{irproof}

\section{Necessity of the remaining rules in \texorpdfstring{\(\zxopt\)}{ZXopt}}
\phantomsection\label{app:b-rule-soundness-checks}

\noindent\textbf{Theorem~\ref{thm:b-rule-necessity}.}
\[
\zxopt\setminus\{(B)\}\nvdash (B).
\]

\noindent\textbf{Proof.}
We use the interpretation \(\sem{-}^{(B)}\) from Section~\ref{sec:necessity-bialgebra-b-rule}.
The rule \((B)\) fails, as shown above.
It remains to check soundness of the other rules.

\begin{itemize}
\item \((I_g)\)
\[
\sem{\zxfig{I-g-LHS-only}}^{(B)}=(Z')^0_{1,1}
=I
=\sem{\zxfig{single-line}}^{(B)}.
\]
\[
\therefore\ (I_g)\text{ holds.}
\]

\item \((E)\)
For \(x\in\{0,1\}^3\), let
\[
p_x=e^{i\lvert x\rvert\pi/4},
\qquad
q_x=e^{-i\lvert x\rvert\pi/4}.
\]
Since \(p\) and \(q\) are constant on the weight-one subspace, we have
\[
Np=0,
\qquad
q^TN=0.
\]
Therefore \(Op=p\), \(q^TO^{-1}=q^T\), and
\[
\sem{\zxfig{E-LHS-only}}^{(B)}
=q^TH'p
=q^TH_0p
=\left(
\begin{pmatrix}1&e^{-i\pi/4}\end{pmatrix}
h
\begin{pmatrix}1\\e^{i\pi/4}\end{pmatrix}
\right)^3
=1
=\sem{\zxfig{empty-diagram}}^{(B)}.
\]

\item \((H)\)
\[
\sem{\zxfig{H-LHS-only}}^{(B)}
=(X')^\alpha_{n,m}
=
(H')^{\otimes m}
\circ
(Z')^\alpha_{n,m}
\circ
(H')^{\otimes n}
=\sem{\hAlphaRhsDiagram}^{(B)}.
\]
By definition of the X-spiders.

\item \((CP)\)
If we denote \(u:=\sum_{x\in\{0,1\}^3}\lvert x\rangle\), then we have that \(Nu=0=N\lvert000\rangle\), so \(O\) fixes both \(u\) and \(\lvert000\rangle\). Thus,
\[
(X')^0_{0,1}=H'u=O^{-1}H_0Ou=2\sqrt2\,\lvert000\rangle.
\]
Then, both the scalar and two-output components on the left therefore evaluate to \(2\sqrt2\) and\linebreak \(2\sqrt2\,\lvert000\rangle^{\otimes2}\), respectively. On the right-hand side, we evaluate the tensor square of the same red state. So both sides equal \(8\lvert000\rangle^{\otimes2}\).
\[
\sem{\zxbig{CP-LHS-only}}^{(B)}
=8\lvert 000\rangle\otimes\lvert 000\rangle
=\sem{\zxbig{CP-RHS-only}}^{(B)}.
\]

\item \((S)\) Spider fusion holds by construction.  The \(k\geq1\) internal connecting wires force to the same basis vector \(\lvert x\rangle\), and
\[
e^{i\lvert x\rvert\alpha}e^{i\lvert x\rvert\beta}
=
e^{i\lvert x\rvert(\alpha+\beta)}.
\]
Therefore
\[
\sem{\sLhsDiagram}^{(B)}
=
\sem{\sRhsDiagram}^{(B)}.
\]

\item \((EU)\) note:
\[
\sem{\zxsmall{EU-dumbbell-only}}^{(B)}
=\frac{1}{2\sqrt2}
\quad \& \quad
\sem{\zxsmall{EU-bottom-scalar-only}}^{(B)}
=2\sqrt2\,e^{3i\gamma},
\]
so scalars contribute \(e^{3i\gamma}\).

We define \(d_\theta\) and \(D_\theta\) as shorthand, and recall the definitions of \(H_0\) and \(H'\):
\[
d_\theta:=\operatorname{diag}(1,e^{i\theta}),
\qquad
D_\theta:=(Z')^\theta_{1,1}=d_\theta^{\otimes 3},
\qquad
H_0=h^{\otimes 3},
\qquad
H'=O^{-1}H_0O.
\]
Since \(N\) only acts within the weight-one subspace, \(O\) commutes with every \(D_\theta\).
\[
\sem{\zxsmall{EU-LHS-only}}^{(B)}
=(Z')^{\alpha_3}_{1,1}\circ (X')^{\alpha_2}_{1,1}\circ (Z')^{\alpha_1}_{1,1}
=D_{\alpha_3}H'D_{\alpha_2}H'D_{\alpha_1}.
\]
\[
\sem{\zxsmall{EU-RHS-only}}^{(B)}
=e^{3i\gamma}\cdot (X')^{\beta_3}_{1,1}\circ (Z')^{\beta_2}_{1,1}\circ (X')^{\beta_1}_{1,1}
=e^{3i\gamma}\cdot H'D_{\beta_3}H'D_{\beta_2}H'D_{\beta_1}H'.
\]

The corresponding scalar-free qubit Euler identity is
\[
 d_{\alpha_3}h d_{\alpha_2}h d_{\alpha_1}
=e^{i\gamma}h d_{\beta_3}h d_{\beta_2}h d_{\beta_1}h.
\]
Tensoring this identity three times gives
\[
D_{\alpha_3}H_0D_{\alpha_2}H_0D_{\alpha_1}
=
e^{3i\gamma}H_0D_{\beta_3}H_0D_{\beta_2}H_0D_{\beta_1}H_0.
\]
Conjugating by \(O\) gives
\[
D_{\alpha_3}H'D_{\alpha_2}H'D_{\alpha_1}
=
e^{3i\gamma}H'D_{\beta_3}H'D_{\beta_2}H'D_{\beta_1}H',
\]
as desired!

\item \((HD)\)
We write as shorthand:
\[
D_{\pi/2}=d_{\pi/2}^{\otimes 3},
\qquad
t(-\pi/2)=(1,-i,-i,-1,-i,-1,-1,i)^T.
\]
And compute:
\[
H't(-\pi/2)
=
e^{-3i\pi/4}(1,i,i,-1,i,-1,-1,-i)^T.
\]
\[
\sem{\zxfig{HD-RHS-only}}^{(B)}
=D_{\pi/2}H'\operatorname{diag}\bigl(H't(-\pi/2)\bigr)H'D_{\pi/2}
=H'
=\sem{\zxfig{Hadamard}}^{(B)}.
\]
\end{itemize}
\tombstone

\phantomsection\label{app:ig-rule-soundness-checks}

\bigskip

\noindent\textbf{Theorem~\ref{thm:ig-rule-necessity}.}
\[
\zxopt\setminus\{(I_g)\}\nvdash (I_g).
\]

\noindent\textbf{Proof.}
We use the interpretation \(\sem{-}^{(I_g)}\) from Section~\ref{sec:necessity-green-identity-ig-rule}.
The rule \((I_g)\) fails, as shown above.
It remains to check soundness of the other rules.

Since all scalars interpret as \(1\), we suppress them throughout.

\begin{itemize}
\item \((S)\)
\[
\sem{\sLhsDiagram}^{(I_g)}
=P_{n+n',m+m'}
=\sem{\sRhsDiagram}^{(I_g)}
\]

\item \((I_r)\)
\[
\sem{\zxfig{rn-1-1-zero}}^{(I_g)}
=\begin{pmatrix}1&0\\0&1\end{pmatrix}
=\sem{\zxfig{single-line}}^{(I_g)}
\]

\item \((E)\)
The two spiders in the left-hand side are both interpreted by one-point relations, so
\[
\sem{\zxfig{E-LHS-only}}^{(I_g)}
=P_{1,0}\circ P_{0,1}
=1
=\sem{\zxfig{empty-diagram}}^{(I_g)}.
\]

\item \((CP)\)
\[
\sem{\zxbig{CP-LHS-only}}^{(I_g)}
=P_{1,2}\circ P_{0,1}\stackrel{(!)}{=}P_{0,2}
=(1\ 0\ 0\ 0)
\]
\[
\sem{\zxsmall{rn-0-1}\quad\zxsmall{rn-0-1}}^{(I_g)}
=P_{0,1}\otimes P_{0,1}\stackrel{(!!)}{=}P_{0,2}
=(1\ 0\ 0\ 0)
\]
where we have used the \(P\)-identities:
\[
P_{k,m}\circ P_{n,k}=P_{n,m} \tag{!}
\]
\[
P_{n,m}\otimes P_{n',m'}=P_{n+n',m+m'} \tag{!!}
\]

\item \((B)\)
\[
\sem{\zxbig{B-LHS-only}}^{(I_g)}
=(P_{2,1}\otimes P_{2,1})\circ(1\otimes\sigma\otimes1)\circ(P_{1,2}\otimes P_{1,2})
=\begin{pmatrix}1&0&0&0\\0&0&0&0\\0&0&0&0\\0&0&0&0\end{pmatrix}
\]
\[
\sem{\zxbig{B-RHS-only}}^{(I_g)}
=P_{2,2}\circ P_{2,2}=P_{2,2}
=\begin{pmatrix}1&0&0&0\\0&0&0&0\\0&0&0&0\\0&0&0&0\end{pmatrix}
\]

\item \((HD)\)
\[
\sem{\zxfig{HD-RHS-only}}^{(I_g)}
=P_{1,1}\circ P_{2,1}\circ(P_{1,1}\otimes P_{0,1})
=P_{1,1}\circ P_{2,1}\circ P_{1,2}
=P_{1,1}
=\sem{\zxfig{Hadamard}}^{(I_g)}
\]

\item \((H)\)
When \(n+m=2\) and \(\alpha=0\), we have:
\begin{itemize}
\item \((n=0,\ m=2)\)
\[
P_{0,2}=(1\ 0\ 0\ 0)=(1\ 0\ 0\ 1)
\begin{psmallmatrix}
1&0&0&0\\
0&0&0&0\\
0&0&0&0\\
0&0&0&0
\end{psmallmatrix}
=(P_{1,1}\otimes P_{1,1})\circ C_{0,2}.
\]
\item \((n=1,\ m=1)\)
\[
P_{1,1}
=\begin{psmallmatrix}1&0\\0&0\end{psmallmatrix}
=\begin{psmallmatrix}1&0\\0&0\end{psmallmatrix}
\begin{psmallmatrix}1&0\\0&1\end{psmallmatrix}
\begin{psmallmatrix}1&0\\0&0\end{psmallmatrix}
=P_{1,1}\circ C_{1,1}\circ P_{1,1}.
\]
\item \((n=2,\ m=0)\)
\[
P_{2,0}
=\begin{psmallmatrix}1\\0\\0\\0\end{psmallmatrix}
=\begin{psmallmatrix}
1&0&0&0\\
0&0&0&0\\
0&0&0&0\\
0&0&0&0
\end{psmallmatrix}
\begin{psmallmatrix}1\\0\\0\\1\end{psmallmatrix}
=C_{2,0}\circ(P_{1,1}\otimes P_{1,1}).
\]
\end{itemize}
Otherwise,
\[
\sem{\zxfig{H-LHS-only}}^{(I_g)}
=P_{1,1}^{\otimes m}\circ P_{n,m}\circ P_{1,1}^{\otimes n}
=P_{m,m}\circ P_{n,m}\circ P_{n,n}=P_{n,m}
=\sem{\hAlphaRhsDiagram}^{(I_g)}
\]

\item \((EU)\) Since scalars contribute \(1\), and phases
do not matter:
\[
\sem{\zxsmall{EU-LHS-only}}^{(I_g)}=P_{1,1}
=\sem{\euRhsNoScalarsDiagram}^{(I_g)}
=\sem{\zxsmall{EU-RHS-only}}^{(I_g)}.
\]
\end{itemize}
\tombstone

\section{Derivations for the Red Identity Rule in \texorpdfstring{\(\zxoptprime\)}{ZXopt'}}
\label{app:ir-prime-rule-derivation}

Throughout this appendix, we work in
\[
\zxoptprime
=\{(S),(I_g),(IV),(CP),(B),(H),(EU')\}
+\text{ connectivity}.
\]

\newcommand{\primeirdiagramstatement}[3]{%
\begin{equation*}
\vcenter{\hbox{\primeirtikzinput{#1}}}
\tag{\ensuremath{#2}}
\label{eq:prime-ir-#3}
\end{equation*}%
}
\newcommand{\primeirderivabilitystatement}[3]{%
\begin{equation*}
\zxoptprime\vdash\vcenter{\hbox{\primeirtikzinput{#1}}}
\label{eq:prime-ir-#3}
\end{equation*}%
}
\newcommand{\primeirrow}[1]{\par\noindent\makebox[\linewidth][c]{\primeirtikzinput{#1}}\par}
\newenvironment{primeircompactlemma}[1][]{%
  \par\medskip\refstepcounter{theorem}\noindent\textbf{Lemma~\thetheorem.}%
  \if\relax\detokenize{#1}\relax\else\ #1\fi
  \par\nopagebreak[4]
}{\par}
\newenvironment{primeircompacttheorem}[1][]{%
  \par\medskip\refstepcounter{theorem}\noindent\textbf{Theorem~\thetheorem.}%
  \if\relax\detokenize{#1}\relax\else\ #1\fi
  \par\nopagebreak[4]
}{\par}
\newcommand{\primeircompactproof}{\par\noindent\textbf{Proof.}\par\nopagebreak[4]}

We have access to the \((HC)\) and \((H_0)\) lemmas derived in \hyperref[app:ir-rule-derivation]{Appendix A}, since they rely on shared rules.

Now, we first derive two technical lemmas, \hyperref[eq:prime-ir-Fig3Hx]{\((HX)\)}, which is like a color change rule, and \hyperref[eq:prime-ir-Fig3RT]{\((RT)\)}, which sets us up for applications of the copy \hyperlink{rule:zxoptprime-CP}{\((CP)\)} rule.

\begin{primeircompactlemma}
\primeirdiagramstatement{fig3-ir-hx-statement.tikz}{HX}{Fig3Hx}
\primeircompactproof
\primeirrow{fig3-ir-hx-proof.tikz}
\tombstone
\end{primeircompactlemma}

\begin{primeircompactlemma}
\primeirdiagramstatement{fig3-ir-r-statement.tikz}{RT}{Fig3RT}
\primeircompactproof
\par\noindent\makebox[\linewidth][c]{%
  \primeirtikzinput{rt-11-a.tikz}%
  \hspace{0.15cm}%
  \primeirtikzinput{rt-11-a-tail.tikz}%
}\par
\par\noindent\makebox[\linewidth][c]{%
  \primeirtikzinput{rt-11-b.tikz}%
  \hspace{0.15cm}%
  \primeirtikzinput{rt-11-c.tikz}%
}\par
\tombstone
\end{primeircompactlemma}

We now derive a green phase commutation rule.

\begin{primeircompactlemma}[When \( |\alpha|<\pi \),]
\primeirdiagramstatement{ch2-general-statement.tikz}{ZC}{GeneralCHTwo}
\primeircompactproof
\vspace{1.2cm}
\par\noindent\makebox[\linewidth][c]{%
  \primeirtikzinput{ch2-general-a.tikz}%
  \hspace{0.15cm}%
  \primeirtikzinput{ch2-general-c.tikz}%
}\par
\tombstone
\end{primeircompactlemma}

Here we derive a crucial Clifford+\(T\) phase cancellation rule for red spiders. This is conceptually close to \hyperref[eq:prime-ir-Ir]{\((I_r)\)} and will be used in its derivation.

\begin{primeircompactlemma}
\primeirdiagramstatement{quarter-inverse-statement.tikz}{X\!\pm}{QuarterInverse}
\primeircompactproof
\par\noindent\makebox[\linewidth][c]{%
  \primeirtikzinput{xpm-direct-01.tikz}\hspace{1.2cm}%
  \primeirtikzinput{xpm-direct-02.tikz}%
}\par
\par\noindent\makebox[\linewidth][c]{%
  \primeirtikzinput{xpm-direct-03.tikz}\hspace{0.9cm}%
  \primeirtikzinput{xpm-direct-03-tail.tikz}%
}\par
\primeirrow{xpm-direct-04-05.tikz}
\tombstone
\end{primeircompactlemma}

Here we derive the self-inverse property of the Hadamard in our ruleset. Conceptually, this is equivalent to the \hyperref[eq:prime-ir-Ir]{\((I_r)\)} rule, as we will soon see.

\begin{primeircompactlemma}
\primeirdiagramstatement{h2-statement.tikz}{HH}{Htwo}
\primeircompactproof
\par\noindent\makebox[\linewidth][c]{%
  \raisebox{-.5\height}{\primeirtikzinput{h2-general-prefix.tikz}}%
  \hspace{0.15cm}%
  \raisebox{-.5\height}{\primeirtikzinput{h2-convert-first.tikz}}%
}\par
\par\noindent\makebox[\linewidth][c]{%
  \raisebox{-.5\height}{\primeirtikzinput{h2-convert-rest.tikz}}%
  \hspace{0.15cm}%
  \raisebox{-.5\height}{\primeirtikzinput{h2-2pi-01-first.tikz}}%
}\par
\par\noindent\makebox[\linewidth][c]{%
  \raisebox{-.5\height}{\primeirtikzinput{h2-2pi-01-rest.tikz}}%
  \hspace{0.15cm}%
  \raisebox{-.5\height}{\primeirtikzinput{h2-2pi-03.tikz}}%
}\par
\par\noindent\makebox[\linewidth][c]{%
  \raisebox{-.5\height}{\primeirtikzinput{h2-2pi-05.tikz}}%
  \hspace{0.15cm}%
  \raisebox{-.5\height}{\primeirtikzinput{h2-2pi-07.tikz}}%
}\par
\tombstone
\end{primeircompactlemma}

\begin{primeircompacttheorem}[The \((I_r)\) rule is derivable.]
\primeirderivabilitystatement{fig3-ir-theorem-statement.tikz}{I_r}{Ir}
\primeircompactproof
\primeirrow{ir-final-proof.tikz}
\tombstone
\end{primeircompacttheorem}

\end{document}